\documentclass[a4paper,UKenglish,cleveref, autoref, thm-restate]{lipics-v2021}

\newif\iflong
\newif\ifshort
\longtrue
  
\iflong
\else
\shorttrue
\fi

\usepackage{complexity}
\newcommand{\bigoh}{\mathcal{O}}

\newtheorem{longtheorem}{Theorem}

\newtheorem{longdefinition}[longtheorem]{Definition}
\newtheorem{longobservation}[longtheorem]{Observation}

\title{Bilateral Treewidth for QBF:\\  Where Strategies and Resolution Meet
} 

\titlerunning{Bilateral Treewidth for QBF: Where Strategies and Resolution Meet} 

\author{Robert Ganian}{TU Wien, Austria}{rganian@ac.tuwien.ac.at}{https://orcid.org/0000-0002-7762-8045}{} 
\author{Marlene Gr\"{u}ndel}{TU Wien, Austria}{mgruendel@ac.tuwien.ac.at}{https://orcid.org/0000-0003-3470-1326}{} 

\authorrunning{Robert Ganian and Marlene Gr\"{u}ndel} 

\Copyright{Robert Ganian and Marlene Gr\"{u}ndel}

\ccsdesc[500]{Theory of computation~Fixed parameter tractability}

\keywords{QBF, Treewidth, Fixed Parameter Tractability, Dependency Schemes} 

\relatedversion{A short version of this article will appear in the proceedings of SAT 2026.}

\category{}  

 \hideLIPIcs
  
\funding{This research was funded in whole or in part by the Austrian Science Fund (FWF) 10.55776/COE12 and 10.55776/Y1329.} 

\nolinenumbers

 \usepackage{graphicx}  \usepackage{bussproofs}
\usepackage{amsmath}
\usepackage{ amssymb }
\usepackage{amsthm}
\usepackage{todonotes}
\usepackage{ wasysym }
\usepackage{enumitem}
\usepackage{algorithm}
\usepackage[noend]{algpseudocode}
\usepackage{tikz}
\usetikzlibrary{shapes}
\usetikzlibrary{arrows,arrows.meta,automata, positioning,calc, patterns,decorations.markings}
\usepackage{caption}
\usepackage{subcaption}
  
  \newcommand{\vara}[1]{\ensuremath{\mathrm{Var}_{\forall}({#1})}}
\newcommand{\vare}[1]{\ensuremath{\mathrm{Var}_{\exists}({#1})}}
\newcommand{\Q}{\ensuremath{\mathcal{Q}}}
\newcommand{\Z}{\ensuremath{\mathcal{Z}}}
\renewcommand{\A}{\ensuremath{\mathcal{A}}}
\newcommand{\B}{\ensuremath{\mathcal{B}}}
\renewcommand{\E}{\ensuremath{\mathcal{E}}}

\newcommand{\var}[1]{\ensuremath{\mathrm{Var}({#1})}}

\renewcommand{\D}{\ensuremath{\mathcal{D}}}
\newcommand{\Dtrv}{\ensuremath{\mathcal{D}_{\mathrm{trv}}}}
\newcommand{\V}{\ensuremath{\mathcal{V}}}
\newcommand{\clause}{\ensuremath{\mathcal{C}}}
\newcommand{\dep}{\ensuremath{\preceq_{\D}}}
\newcommand{\depless}[1]{\ensuremath{\mathcal{D}_{\preceq}({#1})}}

\newcommand{\qbf}{\ensuremath{\Phi=\mathcal{Q}.\phi}}
\newcommand{\red}[2]{\ensuremath{\mathrm{Red}({#1},{#2})}}
\newcommand{\expa}[2]{\ensuremath{\mathrm{Exp}({#1},{#2})}}
\newcommand{\res}[2]{\ensuremath{\mathrm{Res}({#1},{#2})}}
\newcommand{\strext}[2]{\ensuremath{\mathrm{Ext}({#1},{#2})}}
\renewcommand{\R}[2]{\ensuremath{\mathrm{R}({#1},{#2})}}

\newcommand{\pspace}{\ensuremath{\mathsf{PSPACE}}}
\newcommand{\qpar}{\ensuremath{\textsc{QParity}_n}}
\newcommand{\qsat}{\ensuremath{\textsc{QSat}}}
\newcommand{\sat}{\ensuremath{\textsc{Sat}}}
\newcommand{\qdres}{\ensuremath{\mathsf{Q}(\D)\text{-}\mathsf{Res}}}

 \newcommand{\lesst}{\ensuremath{<_{\mathbf{T}}}}
\newcommand{\lesstp}{\ensuremath{<_{\mathbf{T}'}}}
\newcommand{\lessp}{\ensuremath{<_{\mathbf{P}}}}
\newcommand{\tree}{\ensuremath{\mathcal{T}}}
\newcommand{\rt}{\ensuremath{\mathrm{root}(\mathbf{T})}}
\newcommand{\forget}[1]{\mathrm{forget}({#1})}

\DeclareRobustCommand{\rchi}{{\mathpalette\irchi\relax}}
\newcommand{\irchi}[2]{\raisebox{\depth}{$#1\chi$}}

\newcommand{\trunk}{\ensuremath{\mathrm{Tr}_{\lesst}}}
\newcommand{\btw}[1]{\ensuremath{\mathrm{btw}({#1})}}
\newcommand{\dtw}[1]{\ensuremath{\mathrm{dtw}({#1})}}
\newcommand{\ppw}[1]{\ensuremath{\mathrm{ppw}({#1})}}

\newcommand{\dpw}[1]{\ensuremath{\mathrm{dpw}({#1})}}
\newcommand{\cn}[1]{\ensuremath{\mathrm{cn}({#1})}}

 \newcommand{\oh}[1]{\ensuremath{\mathcal{O}({#1})}}

\makeatletter
\def\namedlabel#1#2{\begingroup
    #2     \def\@currentlabel{#2}     \phantomsection\label{#1}\endgroup
}

\begin{document}

\maketitle

 \begin{abstract}
Treewidth is a well-studied decompositional parameter to measure the tree-likeness of a graph. While the propositional satisfiability problem ($\sat$) is known to be tractable when parameterized by the treewidth of the underlying primal graph, the evaluation of quantified Boolean formulas (QBFs) remains \pspace-complete even on formulas of constant treewidth. 
Intuitively, this is because ordinary treewidth does not take into account the prefix of the QBF: it neither distinguishes between existential and universal variables, nor accounts for the order in which they are quantified.
In the past, several weaker variants of treewidth have been devised to incorporate prefix-sensitive information. 
To establish tractability for QBFs under these notions, prior work has employed either strategy- or resolution-based techniques, thereby dividing the parameterized complexity landscape of QBF into two regimes that are incomparable in strength.
We establish fixed-parameter tractability with respect to bilateral treewidth, a novel and strictly more powerful decompositional parameter that combines these rivaling approaches by simultaneously allowing for branching on strategies and performing Q-resolution.
As in previous works in this direction, our algorithm assumes that a suitable tree decomposition is provided on the input.
 \end{abstract}

\section{Introduction}
The language of \emph{quantified Boolean formulas} (QBFs)~\cite{BJLS21} extends propositional logic with existential and universal quantification over variables. While this enables succincter formalizations, which is highly desirable for tasks in planning and verification~\cite{BM08,BKS14,Z14,EKLP17}---see also~\cite{SBPS19} for a recent survey on further applications---it comes at the price of increased complexity: the satisfiability problem for QBFs ($\qsat$) is $\pspace$-complete~\cite{SM73}.

When tasked with solving $\qsat$, existential variables can often be handled analogously to the propositional setting ($\sat$), whereas genuinely new approaches are needed for their universal counterparts.
Over the past decades, two conceptually different paradigms have emerged, operating either \emph{top-down} or \emph{bottom-up}, respectively. In the top-down approach, \emph{universal expansion} explicitly branches over all (partial) assignments to universal variables, creating a (possibly exponentially large) conjunction of formulas in which existential variables are replaced by annotated copies to indicate the respective branch.
In practical solving, the (partial) propositional formula that is obtained by universal expansion can then be handed to a $\sat$ solver~\cite{AB02,B04,LB08,JKMC16}. From a proof-theoretic perspective, calculi like $\mathsf{\forall Exp + Res}$~\cite{JM15} and further generalizations~\cite{BCJ15} combine universal expansion with well-known propositional proof systems such as resolution~\cite{R65}.

\emph{Universal reduction}, the second approach towards eliminating universal variables, processes QBFs in a bottom-up fashion. Here, universal variables can be dropped from a clause if they are not blocked by a co-occurring existential variable that is quantified later. Combining universal reduction with different variants of resolution gave rise to proof systems like $\mathsf{Q}$-$\mathsf{Res}$~\cite{BKF95} and further generalizations~\cite{VG12,BJ12,SS14}. The system $\mathsf{Q}$-$\mathsf{Res}$ in particular models practical $\qsat$ solving based on \emph{conflict-driven clause learning} (CDCL)~\cite{ZM02,GMN09,LB10}. 
Research in proof complexity has revealed that the paradigms of universal expansion and universal reduction are incomparable~\cite{JM15,BCJ14}, that is, their underlying proof systems do not simulate each other. As a consequence, the more sophisticated systems $\mathsf{IR}$-$\mathsf{calc}$ and $\mathsf{IRM}$-$\mathsf{calc}$~\cite{BCJ19} have been introduced to incorporate both paradigms simultaneously.

The \PSPACE-hardness of $\qsat$ has motivated the question of whether structural properties of instances could be employed to achieve tractability. At the foundational level, this study is typically pursued through the lens of \emph{parameterized complexity}~\cite{DF13,CFKLMPPS15}. There, one analyzes the running time of algorithms not only in terms of the input size $|I|$, but also with respect to a specified numerical parameter~$k$, which can be seen as a measure of how ``well-structured'' an instance is. The standard notion of tractability used in this setting is tied to algorithms that run in time $f(k) \cdot |I|^{\oh{1}}$ for some computable function $f$; problems admitting such algorithms are called \emph{fixed-parameter tractable} (FPT).

The most ubiquitous structural measure employed for such investigations is the \emph{treewidth} of the instance, typically measured in terms of its \emph{primal graph}, thus capturing the tree-likeness of its variable-to-variable interactions. But while treewidth has been instrumental in the design of fixed-parameter algorithms for $\sat$~\cite{DP89,F90,S03}, \textsc{CSP}~\cite{SS10}, \textsc{Integer Programming}~\cite{JK15,EGKO19} and many other related problems, $\qsat$ is known to remain \PSPACE-complete even on instances which are ``path-like'' in the sense of having not only constant treewidth but also constant pathwidth~\cite{AO14}---see also the related works~\cite{PV06,FHP20,FGHSO23} for refined lower bounds.
 Intuitively, this is because treewidth does not take into account the prefix of a QBF: it neither distinguishes between existential and universal variables, nor accounts for the order in which they are quantified. Consequently, several weaker variants of pathwidth and treewidth have been proposed to incorporate prefix-sensitive information. These notions split into two competing realms that remarkably echo the two paradigms encountered in proof complexity and practical solving.

The first realm is represented by \emph{prefix pathwidth}~\cite{EGO20}, a variant of pathwidth that places an additional restriction on the order in which variables appear in a path decomposition. More specifically, a variable can only be forgotten when all variables on which it depends have already been introduced in the decomposition. Fixed-parameter tractability with respect to prefix pathwidth is then established in a top-down fashion that essentially eliminates variables from left to right, maintaining at each time step the sets of QBFs that an \emph{existential strategy} could have produced under all universal plays so far. This procedure closely resembles incremental universal expansion with a subsequent explicit instantiation of some existential variables, as will be discussed in \iflong Subsection~\ref{subsec:strategy_ext}\fi\ifshort
Section~\ref{sec:rules}\fi. 
In contrast, \emph{respectful treewidth}~\cite{CD05,AO14} enforces variables to be forgotten in an order that reverses their occurrence in the prefix. Fixed-parameter tractability with respect to this parameter is proved bottom-up by producing variants of $\mathsf{Q}$-$\mathsf{Res}$ proofs. \emph{Dependency treewidth}~\cite{EGO18} generalizes respectful treewidth by incorporating \emph{dependency schemes}~\cite{SS09,LB10,SS14,SS16}, thereby allowing greater flexibility in the order of variable elimination. Here, the produced derivations are certain types of $\qdres$ proofs, where $\D$ is the deployed dependency scheme. Prior work has shown that respectful treewidth and dependency treewidth are incomparable to prefix pathwidth~\cite{EGO18}, see Figure~\ref{fig:tw-variants}.

\smallskip \noindent \textbf{Our Contributions.}\quad
In this work, we introduce \emph{bilateral treewidth}, a new decompositional parameter for $\qsat$ that unifies the two realms discussed above. As our main contribution, we prove that $\qsat$ is fixed-parameter tractable with respect to bilateral treewidth when a suitable tree decomposition is provided as part of the input. Solving $\qsat$ under this parameterization produces derivation sequences of proof-like structure; more specifically, these sequences maintain sets of QBFs  that can either be refined
\begin{description}
    \item[(a)] by a local branching on partial existential strategies that resembles universal expansion~or
    \item[(b)] by employing variants of $\qdres$ steps within the sets.
\end{description}

    Moreover, we show that bilateral treewidth captures tractable instances that are beyond the reach of all previously proposed decompositional parameters for QBF. These separations are obtained by showing that on \emph{parity formulas}~\cite{BCJ15}, both prefix pathwidth and dependency treewidth are high, whereas the bilateral treewidth is a small constant (as witnessed by an easy-to-compute class of corresponding tree decompositions).

\begin{figure}
    \centering
    \begin{tikzpicture}         \draw (1.7,0.25) node[draw, rounded corners, fill=green!30] (rtw) {respectful treewidth~\cite{CD05,AO14}};
        \draw (7.5,1.75) node[draw, rounded corners, fill=green!30] (ppw) {prefix pathwidth~\cite{EGO20}};
        \draw (1.7,1.75) node[draw, rounded corners, fill=green!30] (dtw) {dependency treewidth~\cite{EGO18}};         \draw (4.7,3.25) node[draw, rounded corners, fill=green!30] (btw) {bilateral treewidth [Thm~\ref{thm:fpt_btw}]};
        \draw (4.7,4.75) node[draw, rounded corners, fill=red!30] (u) {unrestricted treewidth~\cite{AO14}};
        \draw[dashed] (rtw) -- node[above]{\cite{EGO18}} (ppw);
        \draw[dashed] (dtw) -- node[above]{\cite{EGO18}} (ppw);
        \draw (btw) -- node[right]{\phantom{----}[Cor~\ref{cor:ppw_is_btw}]} (ppw);
        \draw (btw) -- node[left]{[Cor~\ref{cor:dtw_is_btw}]\phantom{----}} (dtw);
        \draw (btw) -- (u);
        \draw (rtw) -- node[left]{\cite{EGO18}} (dtw);
    \end{tikzpicture}
    \caption{The hierarchy of decompositional parameters for QBFs. All parameters highlighted in green are tied to decompositions which yield fixed-parameter tractability of $\qsat$.
         Dashed lines between parameters indicate their incomparability and solid lines indicate that the upper parameter is upper-bounded by the lower parameter. In particular, fixed-parameter tractability w.r.t.\ higher parameters implies fixed-parameter tractability w.r.t.\ the lower ones.}
                   \label{fig:tw-variants}
\end{figure}

\section{Preliminaries}\label{subsec:dep}
\label{sec:prelims}
    For a positive integer $i$, we use $[i]$ to denote the set $\{1,\dots,i\}$. We assume familiarity with basic graph terminology.

 \smallskip \noindent \textbf{Quantified Boolean Formulas.}\quad
Let $\Z$ be a countable set of Boolean variables. A \emph{literal} is a Boolean variable $v \in \Z$ or its negation $\overline{v}$, a \emph{clause} $\clause$ is a set of literals, and a \emph{CNF formula}~$\phi$ is a set of clauses. A clause~$\clause$ is \emph{tautological} if $\{v,\overline{v}\}\subseteq \clause$ for some $v \in \Z$ and \emph{empty}, denoted by $\bot$, if it does not contain any literals. 
We write $\var{\phi}$ ($\var{\clause}$, respectively) to denote the set of variables that occur in $\phi$ (in $\clause$, resp.). For each variable $x \in \var{\phi}$, we denote by $\phi_x$ the set of all clauses in $\phi$ that contain the literal $x$.

A \emph{quantified Boolean formula} (\emph{QBF}) $\qbf$ consists of (1) a \emph{prefix} $\Q = \Q_1 \Z_1 \cdots \Q_n \Z_n$, in which all $\Z_i$ are pairwise disjoint finite sets of Boolean variables, $\Q_i \in \{\exists, \forall\}$ for each~$i \in [n]$, and $\Q_i \neq \Q_{i+1}$ for each $i \in [n - 1]$, and (2) a \emph{matrix} $\phi$, which is a CNF formula with $\var{\phi}\subseteq \bigcup_{i \in [n]}\Z_i$.
We define $\var{\Q}=\bigcup_{i \in [n]}\Z_i$ and 
 the set of \emph{existential} (\emph{universal}, respectively) variables of $\Phi$ as $\vare{\Phi}=\bigcup_{i \in [n], \Q_i=\exists}\Z_i$ (as $\vara{\Phi}=\bigcup_{i \in [n], \Q_i=\forall}\Z_i$, resp.), and let $\var{\Phi}=\vare{\Phi}\cup\vara{\Phi}$. 
 For $V \subseteq \var{\Phi}$, we write $\mathcal{Q} \setminus V$ to denote the prefix obtained from $\mathcal{Q}$ after removing all variables in $V$ and their respective quantifiers.

An \emph{assignment} to a set $\Z$ of Boolean variables is a function $\delta: \Z \rightarrow \{0,1\}$. Given some~$\Z' \subseteq \Z$ we denote by $\delta{\upharpoonright}_{\Z'}$ the restriction of $\delta$ to $\Z'$. The set of all assignments to $\Z$ is denoted by $\langle \Z \rangle$. 
Given a QBF $\Phi=\Q.\phi$ and an assignment~$\delta \in \langle Z \rangle$ for some $Z \subseteq \var{\Phi}$, we denote by  $\phi\rvert_{\delta}$ the matrix that is obtained when $\delta$ is applied to $\phi$.

The semantics of a QBF $\qbf$ are neatly described by a \emph{two-player game}. Over the course of a game, the variables of $\qbf$ are assigned to 0 or 1 in the order they appear in the prefix, with the \emph{existential player} (\emph{universal player}, respectively) choosing the value for $\vare{\Phi}$ ($\vara{\Phi}$, resp.).
When the game ends, the players have constructed a total assignment $\delta \in \langle \var{\Phi}\rangle$. The existential player wins if and only if $\phi\rvert_{\delta}\equiv \top$. For every $x \in \vare{\Phi}$, let $L_{\Q}(x)$ be the set of all universal variables that are quantified to the left of $x$ in $\Q$.
An \emph{existential strategy for $\Phi$} is a set $\tau=\{\tau_x \mid x \in \vare{\Phi}\}$ of functions $\tau_x: \langle L_{\Q}(x)\rangle \rightarrow \{0,1\}$ that specify the assignment to be used for each existential variable~$x$ based on the assignment to all universal variables preceding $x$.
An existential strategy~$\tau$ is a \emph{winning existential strategy} if for every universal assignment  $\beta \in \langle \vara{\Phi} \rangle$, we have $\phi\rvert_{\beta \cup \{x \mapsto \tau_x(\beta{\upharpoonright}_{L_{\Q}(x)}) \mid x \in \vare{\Phi}\}}\equiv \top$. 
A QBF is \emph{true} if and only if it admits a winning existential strategy; otherwise we say the QBF is \emph{false}. If a QBF contains no variables, it is considered to be true if its matrix is empty, and false if its matrix contains the empty clause. Two QBFs are \emph{equisatisfiable} if and only if they are either both true or both false.

We remark that the notion of existential strategies defined above can also be lifted to account for the presence of dependency schemes; we elaborate on this in the next subsection.

 \smallskip \noindent \textbf{Dependency Schemes and Posets for QBFs.}\quad
A \emph{partially ordered set} (\emph{poset}) $\mathcal{D}$ is a pair $(V, \dep)$ where $V$ is a set and $\dep$ is a reflexive, antisymmetric, and transitive binary relation over $V$. A \emph{linear extension} of a partially ordered set $\mathcal{D}=(V, \dep)$ is a relation $\leq_O$ over $V$ such that $x \leq_O y$ whenever $x \dep y$ and $(V, \leq_O)$ is a linear order. We use \emph{dependency posets}~\cite{EGO18,EGO20} to provide a general and formal way to speak about the various \emph{dependency schemes} that have been devised for QBFs~\cite{SS09,LB10,SS14,SS16}. 
Not all dependency schemes give rise to sound proof systems, and not all are easy to compute. 
Consequently, we focus solely on the class of \emph{polynomial-time computable permutation dependency schemes}~\cite{SS16}, which can be efficiently deployed in resolution-style proof systems without compromising their soundness. Informally, a permutation dependency scheme assigns to each QBF $\qbf$ a binary relation~$\mathcal{R}$ on its variables such that permuting $\Q$ with respect to any order that is consistent with $\mathcal{R}$ yields an equisatisfiable QBF. 
Based on these dependency schemes, we define the following notion.

Given a QBF $\qbf$, a \emph{(permutation) dependency poset $\mathcal{D}=(\var{\Phi}, \dep)$ of $\Phi$} is a partially ordered set over $\var{\Phi}$ with the following properties:
\begin{enumerate}
    \item for all $x,y \in \var{\Phi}$, if $x \dep y$, then $x=y$ or $x$ is quantified to the left of $y$ in $\Q$ and
    \item for every linear extension $\leq_O$ of $\mathcal{D}$, $\qbf$ is true if and only if $\Q_{\leq_O}. \phi$ is true, where $\Q_{\leq_O}$ is obtained from $\Q$ by rearranging variables together with their respective quantifiers such that their order agrees with $\leq_O$.
\end{enumerate}
 Given a polynomial-time computable permutation dependency scheme, the respective dependency poset can be obtained in polynomial time by computing the transitive closure.
The \emph{trivial dependency scheme} gives rise to the \emph{trivial dependency poset~$\D_{\mathrm{trv}}$ of $\qbf$}, a special type of poset that for all $x,y \in \var{\Phi}$, enforces $x \preceq_{\D_{\mathrm{trv}}} y$, whenever $x=y$ or $x$ is quantified to the left of $y$ in $\Q$. Other examples of polynomial-time computable permutation dependency schemes include $\D_{\mathrm{tree}}$, $\D_{\mathrm{res}}$, $\D_{\mathrm{rrs}}$, and $\D_{\mathrm{std}}$~\cite{L12,SS16}.

Given a QBF $\qbf$ with dependency poset $\D$ we define for each $v \in \var{\Phi}$ the set $\depless{v}=\{v' \mid v' \dep v \text{ and } v' \in \var{\Phi}\}$ of variables that \emph{$v$ depends on}. Due to the properties of $\D$, the definition of existential strategies for $\Phi$ can be refined by replacing $L_{\Q}(x)$ by $\depless{x}$.
By $(\beta, \tau)$, we abbreviate the assignment $\beta \cup \{\tau_x(\beta{\upharpoonright}_{\depless{x}}) \mid x \in \vare{\Phi}\}$ that is created when the existential strategy $\tau$ is applied on a universal assignment  $\beta$. Thus, for a winning existential strategy $\tau$, we have that  $\phi\rvert_{(\beta, \tau)}\equiv \top$ for all $\beta \in \langle \vara{\Phi} \rangle$.

 \smallskip \noindent 
 \textbf{Decompositions, Elimination Orderings, and Cops-and-Robbers.}\quad
    Let $\qbf$ be a QBF. The \emph{primal graph} $G_{\Phi}=(V,E)$ of $\Phi$ 
is the graph whose vertices are the variables of $\Phi$ and where two variables are adjacent if and only if $\phi$ has a clause containing both variables. When $\Q$ is clear from context, we will sometimes use $G_{\phi}$ as shorthand for $G_{\Q.\phi}$.

 A \emph{nice tree decomposition of $\Phi$} is a pair $\mathcal{T}=(\mathbf{T}, \rchi)$, where $\mathbf{T}$ is a tree (whose vertices are called \emph{nodes}) rooted at a node $\rt$ and $\rchi$ is a function that assigns to each node $t$ a set $\rchi(t) \subseteq \var{\Phi}$, called the \emph{bag of $t$}, such that:
\begin{description}
	\item[\namedlabel{itm:T1}{(T1)}] For every $\{v,v'\} \in E(G_{\Phi})$, there is a node $t$ such that $\{v,v'\} \subseteq \rchi(t)$.
	\item[\namedlabel{itm:T2}{(T2)}] For every $v \in \var{\Phi}$, the set of nodes $t$ satisfying $v \in \rchi(t)$ forms a subtree of $\mathbf{T}$.
	\item[\namedlabel{itm:T3}{(T3)}] If $t$ is a \emph{leaf} of $\mathbf{T}$ or $t=\rt$, then $|\rchi(t)| = 0$.
	\item[\namedlabel{itm:T4}{(T4)}] There are only three kinds of non-leaf nodes in $\mathbf{T}$:
	\begin{itemize}
		\item \emph{introduce}: A node $t$ with exactly one child $t'$ and $\rchi(t) = \rchi(t') \cup \{v\}$ for a $v \notin\rchi(t')$. 
		\item \emph{forget}:  A node $t$ with exactly one child $t'$ and $\rchi(t) = \rchi(t') \setminus \{v\}$ for a $v \in \rchi(t')$.  		\item \emph{join}:  A node $t$ with two children $t_1, t_2$ such that $\rchi(t) = \rchi(t_1) = \rchi(t_2)$.
	\end{itemize}
\end{description}
For simplicity, we will refer to nice tree decompositions simply as \emph{tree decompositions} in the following. 
The \emph{width} of a tree decomposition is the size of its largest bag minus 1, and the \emph{treewidth of $\Phi$}
     is the minimum width over all tree decompositions of $\Phi$.

For each node $t$, we let $\mathbf{T}_t$ be the subtree of $\mathbf{T}$ that is rooted in $t$. Further, we let $\mathbf{T}_{\leq t}=\bigcup_{t' \in \mathbf{T}_t} \rchi(t')$ be the set of all vertices that occur in some node in $\mathbf{T}_t$ and let $\mathbf{T}_{<t}=\mathbf{T}_{\leq t} \setminus \rchi(t)$.
For each two distinct nodes $t,t'$ we write $t \lesst t'$ if and only if $t'$ lies on the unique path from~$t$ to $\rt$. For each $v \in \var{\Phi}$, we let  $\forget{v}$ be the unique node $t$ for which $v \in \rchi(t)$, but $v \notin \rchi(t')$ for all nodes $t'$ with $t \lesst t'$. 

In our later sections, we will make use of an alternative characterization of treewidth via elimination orderings. An \emph{elimination ordering of $G_{\Phi}$} is a linear order $\V=(v_1, \dots, v_{n})$ of its vertices, where $n=|V(G_{\Phi})|$. Given $G_{\Phi}$ and an elimination ordering~$\V$ of $G_{\Phi}$, the \emph{fill-in graph} of $G_{\Phi}$ with respect to $\V$ is the unique minimal graph~$G_{\Phi}^{\V}=(V, E)$ such that
$V(G_{\Phi}^{\V})=V(G_{\Phi})$, $E(G_{\Phi}^{\V}) \supseteq E(G_{\Phi})$, and if $1 \leq k < i <j \leq n$ and 
$\{\{v_k,v_i\},\{v_k,v_j\}\} \subseteq E(G_{\Phi}^{\V})$, then $\{v_i,v_j\} \in E(G_{\Phi}^{\V})$---that is, whenever we ``eliminate'' a vertex, we turn its remaining neighbors into a clique. The width of an elimination ordering $\V$ is the maximum number of neighbors of that any vertex in $E(G_{\Phi}^{\V})$ has to its right. 
It is known that a graph has treewidth $k$ if and only if it admits an elimination ordering of width $k$~\cite{K94}. 

 A \emph{path decomposition of $\Phi$} is a tree decomposition~$\mathcal{P}=(\mathbf{P}, \rchi)$, where $\mathbf{P}$ is a path, and the \emph{pathwidth of $\Phi$} is the minimum width over all path decompositions of $\Phi$. Since a path decomposition~$\mathcal{P}$ can be fully characterized by the order of appearance of its bags along~$\lessp$, we can succinctly represent it as a sequence $\mathcal{P}=(P_1, \dots, P_d)$ of bags.
   When comparing to previous work in Section~\ref{sec:comparison}, we will need a directed variant of path decompositions~\cite{B06}. 
A \emph{directed path decomposition} $\mathcal{P}=(P_1, \dots, P_d)$ of a directed graph $D$ is a sequence of subsets of $V(D)$ such that (1) $\bigcup_{i \in [n]}P_i=V(D)$, (2) for each vertex $v\in V(D)$, the set~$I_v=\{i \in [n] \mid v \in P_i\}$ induces an interval, and (3) for each directed edge $(u,v) \in E(D)$, there are $i,j \in [n]$ with $i \leq j$ such that $u \in P_i$ and $v \in P_j$. The \emph{width} of $\mathcal{P}$ and the \emph{directed pathwidth of $D$}, denoted by $\dpw{D}$, are then defined analogously as for path- and treewidth. 
  
 Directed pathwidth is closely related to
a directed version of a \emph{cops-and-robber game}~\cite{NW83,ST93} which will facilitate our lower-bound arguments in Lemma~\ref{lem:ppw_parity}---we refer to Bar\'{a}t~\cite{B06} for a full description. Essentially, when given a directed graph $D$, the \emph{robber} occupies one of its vertices $v$ and can run at infinite speed to any other vertex that is reachable from $v$ via a directed path on which no vertex is occupied by a cop. The \emph{cops} also occupy one vertex each, and in each move the cops player can lift a cop and announce a new location for it, which can be any vertex of the graph. The game is played in rounds as follows: the cops lift and announce a move, the robber responds by moving, and the cops complete the move.
  Crucially, 
in this version the cops cannot see which vertex the robber is occupying. We say that the cops \emph{capture} the robber if a cop moves to the vertex $v$ that is currently occupied by the robber and every vertex that the robber can reach from $v$ is also occupied by a cop. The \emph{cop number of $D$}, denoted by $\cn{D}$, is the minimum number of cops that suffice to guarantee a capture of the robber on $D$. It is known that $\dpw{D}\leq k-1$ implies $\cn{D}\leq k$~\cite{B06}.

\section{Derivation Rules: Q($\mathcal{D}$)-Resolution and Strategy Extension}\label{sec:rules}
In this section, we introduce the derivation rules that will be used in Section~\ref{sec:fpt_proof} to prove fixed-parameter tractability of $\qsat$ with respect to bilateral treewidth. Our first two rules, \emph{Resolution} and \emph{Reduction}, are tightly linked to the proof system $\mathsf{Q}$-$\mathsf{Res}$ that was introduced by Kleine B\"{u}ning~\cite{BKF95} and its dependency-aware variants~\cite{SS14}. Our adaptation of these rules, however, eliminates variables in an exhaustive way that is more akin to \emph{multi-resolution proofs}~\cite{CS00,PV04,AO14}. Crucially, in contrast to these prior works, we handle tautologies differently to allow for the sound application of an additional rule that we call \emph{Strategy Extension}.

\ifshort
\smallskip \noindent \textbf{Eliminating Existential Variables via Resolution.}\quad
\fi
\iflong
\subsection{Eliminating Existential Variables via Resolution}
\fi
Let $\qbf$ be a QBF that does not contain any tautological clauses. We define the following rule. 
\vspace{-0.2cm}
\begin{prooftree}\label{rule:resolution}
   \AxiomC{$\phi$}
   \LeftLabel{\emph{Resolution over} $x \in \vare{\Phi}:\quad$}
   \UnaryInfC{$\res{\phi}{x}=\phi \setminus (\phi_{x} \cup \phi_{\overline{x}}) \cup \R{\phi}{x}$,}
   \end{prooftree}  
where $\R{\phi}{x}=\{\clause \mid \clause \text{ is not tautological and there exist } \clause_1\in \phi_x, \clause_2\in \phi_{\overline{x}} \text{ such that } \clause=(\clause_1 \setminus \{x\}) \cup (\clause_2 \setminus \{\overline{x}\})\}$.
   We say that we \emph{resolve over $x$ in $\phi$} whenever we compute $\res{\phi}{x}$ and call $x$ the \emph{pivot variable} of this resolution step. \ifshort The correctness of the following two lemmas essentially follows from~\cite{BKF95}. See the long version of this paper for respective proofs.\fi

 \begin{lemma}
 \label{lem:res_correctness}
 Let $\qbf$ be a QBF that contains no tautological clause, let $x \in \vare{\Phi}$, and $\mathcal{D}$ be a dependency poset of $\Phi$. Suppose that for all $u\in \vara{\Phi}$ with $x \dep u$, there is no clause $ \clause \in \phi$ such that $\{x,u\} \subseteq \var{\clause}$. Then $\Phi$ is true if and only if $\mathcal{Q} \setminus \{x\}. \res{\phi}{x}$ is true. Moreover, no clause in $\res{\phi}{x}$ is tautological or contains a literal of~$x$.
   \end{lemma}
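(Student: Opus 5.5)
The "moreover" part follows directly from the definition of $\res{\phi}{x}$. Every clause of $\phi \setminus (\phi_x \cup \phi_{\overline{x}})$ contains no literal of $x$. It is also non-tautological, since $\phi$ has no tautological clause. Every clause in $\R{\phi}{x}$ is non-tautological by definition. It has the form $(\clause_1 \setminus \{x\}) \cup (\clause_2 \setminus \{\overline{x}\})$, where $\clause_1$ does not contain $\overline{x}$ and $\clause_2$ does not contain $x$ (both being non-tautological), so it contains no literal of $x$.

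For the equivalence, I first use the dependency poset to move $x$ to the end of the prefix. The hypothesis says that $x$ shares no clause with any universal $u$ satisfying $x \dep u$. Intuitively, this means no universal variable that $x$ precedes interacts with $x$ locally. I would choose a linear extension $\leq_O$ of $\D$ in which $x$ is placed as late as possible: $x$ comes after every variable except those $v$ with $x \dep v$, $v\neq x$. By property 2 of a dependency poset, $\Phi$ is true iff $\Q_{\leq_O}.\phi$ is true.

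Rather than relying on a particular order, the cleaner approach is to argue semantically with strategies. Here $x$'s strategy function $\tau_x$ may depend on $\depless{x}$, and after resolution the other existentials depend on their own $\depless{\cdot}$.

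($\Rightarrow$) Let $\tau$ be a winning strategy for $\Phi$ in the refined sense. Restricting $\tau$ to $\vare{\Phi}\setminus\{x\}$ should give a winning strategy for $\Q\setminus\{x\}.\res{\phi}{x}$. Fix a universal assignment $\beta$. Every clause of $\phi$ is satisfied by $(\beta,\tau)$, and resolvents are implied by their two parents, so every clause of $\res{\phi}{x}$ is satisfied. The dependencies of the remaining existentials are unchanged, so this direction is the routine one. I need that $\D$ restricted to the remaining variables is still valid for the new prefix. This is best handled by working with $\Dtrv$-style strategies on $\Q\setminus\{x\}$, since strategies there depend only on universals to the left, which is the plain semantics.

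($\Leftarrow$) This direction is the main obstacle. Given a winning strategy $\sigma$ for $\Q\setminus\{x\}.\res{\phi}{x}$, I would define $x$'s value as a function of the full assignment to all other variables: set $x=1$ iff some clause of $\phi_{\overline{x}}$ is not satisfied by the rest. The standard resolution argument shows that every clause of $\phi$ is then satisfied. If some $\clause_1\in\phi_x$ and some $\clause_2\in\phi_{\overline{x}}$ were both falsified apart from $x$, then their resolvent would be falsified. That resolvent is non-tautological, since its literals are all simultaneously false. This contradicts $\sigma$ winning.

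The difficulty is that this choice of $x$ depends on variables occurring in clauses with $x$, including possibly universals right of $x$ in $\Q$. By hypothesis, every universal $u$ sharing a clause with $x$ satisfies $x \not\dep u$. So I would pass to a linear extension $\leq_O$ of $\D$ in which all such $u$, together with all variables co-occurring with $x$ that can be moved, precede $x$. The rule defining $x$ then depends only on variables before $x$ in $\Q_{\leq_O}$. The delicate part is co-occurring existentials $y$ with $x \dep y$, which cannot be moved before $x$. I plan to handle them by Skolem-style substitution: compose $\sigma_y$ into the definition of $x$ via universals only. The universals $\sigma_y$ reads are also movable, because any universal $u$ with $x \dep u$ does not occur with $x$.

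I expect this strategy-composition argument to be the crux. As a fallback, I would invoke the known soundness of $\mathsf{Q}(\D)$-resolution for permutation dependency schemes~\cite{SS16,BKF95}. Then falsity of $\Phi$ follows from a refutation of $\res{\phi}{x}$ by re-expanding each resolvent into its resolution step.
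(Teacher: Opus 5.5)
\textbf{Verdict: the proposal has a genuine gap, and it cannot be closed, because the statement as written is false.} Your ($\Rightarrow$) direction is fine and matches the paper's. Your ($\Leftarrow$) direction breaks at the Skolem-style composition. You claim the universals read by $\sigma_y$ can be moved before $x$ ``because any universal $u$ with $x \dep u$ does not occur with $x$'', but this does not follow. The hypothesis only forbids such $u$ from sharing a clause with $x$. It says nothing about which universals the strategy of an existential neighbour $y$ of $x$, with $x\dep y$, may read. Once $x$ is defined through $\sigma_y$, it may need to read such a $u$, and no linear extension of $\D$ puts $u$ before $x$. For a concrete counterexample, take $\D=\Dtrv$ and
\[
\Phi=\exists x\,\forall u\,\exists y.\ (x\lor y)\land(\overline{x}\lor\overline{y})\land(u\lor\overline{y})\land(\overline{u}\lor y).
\]
\begin{itemize}
\item No clause is tautological, and the only universal $u$ satisfies $x\preceq_{\Dtrv}u$ but shares no clause with $x$. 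So the hypothesis holds.
\item The matrix forces $y=u$ and $x=\overline{y}$, i.e.\ $x=\overline{u}$. This is impossible since $x$ is chosen before $u$, so $\Phi$ is false.
\item The only resolvent over $x$ is the tautology $(y\lor\overline{y})$, which is discarded. Hence $\res{\phi}{x}=\{(u\lor\overline{y}),(\overline{u}\lor y)\}$, and $\forall u\,\exists y.\res{\phi}{x}$ is true via $y=u$.
\end{itemize}
Your construction produces exactly $x=\overline{\sigma_y(u)}=\overline{u}$. Separately, your rule for $x$ has the polarity flipped: it should set $x=0$ iff some clause of $\phi_{\overline{x}}$ has its $x$-free part falsified. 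The fallback does not help either. Turning a refutation of $\res{\phi}{x}$ into one of $\phi$ and invoking soundness shows that falsity of the reduced QBF implies falsity of $\Phi$. That is the contrapositive of ($\Rightarrow$), not of ($\Leftarrow$).

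The paper's own proof fails at the corresponding point. It asserts that a single $\beta'_{\tau'}$ falsifies both $\clause_1\setminus\{x\}$ and $\clause_2\setminus\{\overline{x}\}$. In the example, against $\tau'_y(u)=u$, the falsifying play is $u=0$ when $\tau'_x=0$ and $u=1$ when $\tau'_x=1$. Neither play falsifies both $(y)$ and $(\overline{y})$.

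What is missing is a stronger hypothesis: no variable $v\neq x$ with $x\dep v$, \emph{existential or universal}, shares a clause with $x$. This is what Rule~\ref{itm:R2} actually guarantees via Lemma~\ref{lem:bounded_neighborhood}. Under it, for instance with $\Dtrv$, every neighbour of $x$ is quantified in $x$'s block or earlier. Your corrected rule for $x$ is then a legal Skolem function in $\Q$ itself, and neither composition nor reordering is needed. For a general $\D$ there is one more point to handle. Your $\sigma$ is a strategy for the prefix $\Q\setminus\{x\}$, while property~2 of $\D$ speaks only about $\phi$, not about $\res{\phi}{x}$. So switching to $\Q_{\leq_O}$ while reusing $\sigma$ needs its own justification.
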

 \iflong
 \begin{proof}
    The second statement is readily observed  since neither $\phi$ nor $\R{\phi}{x}$ contain any tautological clauses and no clause in $\R{\phi}{x}$ contains any literal in $x$.
    
     Towards the first statement, suppose that $\qbf$ is true, that is, there exists a winning existential strategy $\tau$ for $\Phi$. We will prove that $\tau\setminus \{\tau_x\}$ is a winning existential strategy for $\mathcal{Q} \setminus \{x\}.\res{\phi}{x}$.
     Assume that there exists an assignment $\beta \in \langle \vara{\Phi}\rangle$ such that some clause $\clause \in \res{\phi}{x}$ is falsified by $(\beta, \tau\setminus \{\tau_x\})$.
     Clearly, $\clause \in \R{\phi}{x}$, as otherwise $\clause \in \phi$ and $\tau$ would not be winning. Thus, there exist $\clause_1 \in \phi_{x}$ and $\clause_2 \in \phi_{\overline{x}}$ such that $\clause=(\clause_1 \setminus \{x\}) \cup (\clause_2 \setminus \{\overline{x}\})$ is falsified by $(\beta,\tau\setminus \{\tau_x\})$. Assume w.l.o.g. that $\tau_x(\beta{\upharpoonright}_{\depless{x}})=1$.
     Then, since $\clause_2 \in \phi$ is satisfied by $(\beta,\tau)$, we have that $\clause_2 \setminus \{\overline{x}\}$ is satisfied by $(\beta,\tau\setminus \{\tau_x\})$. But then $\clause$ is also satisfied by $(\beta,\tau\setminus \{\tau_x\})$, which contradicts our assumption.

    Suppose now $\qbf$ is false. Then for each partial existential strategy $\tau$ on $X=(\vare{\Phi} \cap \depless{x}) \setminus \{x\}$, there exists an assignment $\beta_{\tau} \in \langle Y\rangle$ with $Y = \vara{\Phi} \cap \depless{x}$, such that $\Phi^{\beta_{\tau}}=\Q \setminus (X \cup Y).\phi^{\beta_{\tau}}$ with $\phi^{\beta_{\tau}}=\phi \rvert_{(\beta_{\tau},\tau)}$ is false. Fix such a $\tau$ arbitrarily and let $\res{\phi}{x}^{\beta_{\tau}}:=\res{\phi}{x}\rvert_{(\beta_{\tau}, \tau)}$. 
     We will prove that $\Q \setminus (X \cup Y \cup \{x\}). \res{\phi}{x}^{\beta_{\tau}}$ is false, which entails that $\Q \setminus  \{x\}. \res{\phi}{x}$ is false since $\tau$ was chosen arbitrarily.
    If for all existential strategies $\tau'$ on $\vare{\Phi^{\beta_{\tau}}}$, there exists an assignment $\beta'_{\tau'} \in \langle \vara{\Phi^{\beta_{\tau}}}\rangle$ such that $(\beta'_{\tau'}, \tau') $ falsifies a clause $\clause \in \phi^{\beta_{\tau}} \setminus ((\phi^{\beta_{\tau}})_x \cup (\phi^{\beta_{\tau}})_{\overline{x}})$, then this clause will also be present in the resolvent, i.e., $\clause \in \res{\phi}{x}^{\beta_{\tau}}$ and hence $\Q \setminus (X \cup Y). \res{\phi}{x}^{\beta_{\tau}}$ is false.
    Let us therefore focus on all existential strategies $\tau'$ for which all clauses in $\phi^{\beta_{\tau}} \setminus ((\phi^{\beta_{\tau}})_x \cup (\phi^{\beta_{\tau}})_{\overline{x}})$ are satisfied by $(\beta'_{\tau'}, \tau') $ for all $\beta'_{\tau'}$. Since $\Phi^{\beta_{\tau}}$ is false, we can find for each such $\tau'$ some $\beta'_{\tau'}$ such that some clause in $ (\phi^{\beta_{\tau}})_x \cup (\phi^{\beta_{\tau}})_{\overline{x}}$ is falsified by $(\beta'_{\tau'}, \tau')$. Recall from the statement of the Lemma that no clause in $ (\phi^{\beta_{\tau}})_x \cup (\phi^{\beta_{\tau}})_{\overline{x}}$ contains any universal variable that depends on~$x$. Hence there exists some $\beta'_{\tau'}$ such that $(\beta'_{\tau'}, \tau')$ falsifies $\clause_1 \setminus \{x\}$ for some clause $\clause_1 \in (\phi^{\beta_{\tau}})_x$ and also falsifies $\clause_2 \setminus \{\overline{x}\}$ for some clause $\clause_2 \in (\phi^{\beta_{\tau}})_{\{\overline{x}\}}$, as $\beta'_{\tau'}$ has to guarantee a falsified clause for each possible play on $x$.
         But then, the clause $\clause = (\clause_1 \setminus \{x\} \cup \clause_2 \setminus \{\overline{x}\})\in \res{\phi}{x}^{\beta_{\tau}}$ is falsified by $(\beta'_{\tau'}, \tau' \setminus \{\tau_x\})$.

                                                             \end{proof}
\fi

\ifshort
\smallskip \noindent \textbf{Eliminating Universal Variables via Reduction.}\quad
\fi
\iflong
\subsection{Eliminating Universal Variables via Reduction}
\fi
Let $\qbf$ be a QBF that does not contain any tautological clauses. We define the following rule. 
\vspace{-0.2cm}
\begin{prooftree}\label{rule:reduction}
  \AxiomC{$\phi$}
  \LeftLabel{\emph{Reduction of} $u \in \vara{\Phi}:\quad$}
  \UnaryInfC{$\red{\phi}{u}=\{\clause \setminus \{u, \overline{u}\} \mid \clause \in \phi\}$.}
  \end{prooftree}   
We say that we \emph{reduce $u$ from $\phi$} whenever we compute $\red{\phi}{u}$. Note that if $\phi$ contains the clause $\{u\}$ or $\{\overline{u}\}$, then $\red{\phi}{u}$ contains the empty clause, indicating a false instance.

\begin{lemma}
\label{lem:red_correctness}
Let $\qbf$ be a QBF that contains no tautological clause, let $u \in \vara{\Phi}$, and $\mathcal{D}$ be a dependency poset of $\Phi$. Suppose that for all $x\in \vare{\Phi}$ with $u \dep x$, there is no clause $ \clause \in \phi$ such that $\{x,u\} \subseteq \var{\clause}$.
 Then $\Phi$ is true if and only if $\mathcal{Q} \setminus \{u\}. \red{\phi}{u}$ is true. 
Moreover, no clause in $\red{\phi}{u}$ is tautological or contains a literal of~$u$.
\end{lemma}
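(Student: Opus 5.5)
We begin with the syntactic claim, which is immediate. Every clause of $\red{\phi}{u}$ has the form $\clause\setminus\{u,\overline{u}\}$ with $\clause\in\phi$. Such a clause contains no literal of $u$. Since $\clause$ is not tautological and we only removed literals, it is not tautological either.

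For the semantic claim, we use the game/strategy characterization with dependency-refined strategies, i.e., each $\tau_x$ is a function of $\beta{\upharpoonright}_{\depless{x}}$.

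\textbf{Easy direction ($\Leftarrow$).} Suppose $\tau$ is a winning existential strategy for $\Q\setminus\{u\}.\red{\phi}{u}$. We view $\tau$ as a strategy for $\Phi$ that ignores $u$. This is legitimate because each $\tau_x$ depends only on $\depless{x}\setminus\{u\}$.
\begin{itemize}
\item The dependency sets in the reduced prefix should be taken as those of $\D$ restricted to $\var{\Phi}\setminus\{u\}$. If the paper's setup demands it, one argues via a linear extension that $\D$ restricted to the remaining variables is still a dependency poset of the smaller QBF.
\item For every $\beta\in\langle\vara{\Phi}\rangle$, each clause $\clause\in\phi$ contains $\clause\setminus\{u,\overline{u}\}$, which is satisfied by $(\beta{\upharpoonright}_{\vara{\Phi}\setminus\{u\}},\tau)$. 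Hence $\clause$ is satisfied as well.
\end{itemize}

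\textbf{Main direction ($\Rightarrow$).} Suppose $\tau$ wins for $\Phi$. We build a strategy $\tau'$ for the reduced formula by fixing $u$ to a constant. The natural way to do this is to move $u$ to the end of the quantification order, as follows.
\begin{itemize}
\item Take a linear extension $\leq_O$ of $\D$ in which $u$ is placed as late as possible. That is, $u$ comes after every variable not in $\{v\mid u\dep v\}$.
\item By property 2 of dependency posets, $\Q_{\leq_O}.\phi$ is true.
\item In $\Q_{\leq_O}$, the only existential variables after $u$ are those $x$ with $u\dep x$. By hypothesis, none of them shares a clause with $u$.
\end{itemize}

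We then take a winning strategy for $\Q_{\leq_O}.\phi$ and argue that $u$ can be moved to the innermost position without harm. The existential variables to the right of $u$ that occur together with $u$ in a clause are exactly those constrained by the hypothesis, and there are none. So for any fixed universal play on the others, the clauses containing $u$ involve only variables decided before $u$ (together with $u$ itself). Since the universal player may pick $u$ freely, every clause containing $u$ must already be satisfied by the earlier variables under both values of $u$. The one exception is a clause whose $u$-literal is true for one choice of $u$; that clause is then falsified by the other choice.

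More carefully, fix the winning strategy and a universal play of all other universals, and set $u$ to the value that falsifies the $u$-literal of each clause. Since clauses are non-tautological, each clause has at most one $u$-literal, so we choose $u$ per clause. The rest of the clause must then be satisfied.

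The obstacle is exactly this step. A single value of $u$ cannot falsify both $u$ and $\overline{u}$ literals across different clauses simultaneously. The clean argument is the following.
\begin{itemize}
\item Existentials after $u$ do not co-occur with $u$.
\item Hence the clauses in $\phi_u$ and $\phi_{\overline u}$ have their existential parts fixed before $u$ is played.
\item For a fixed play of the universals other than $u$, run the strategy with $u=0$ to obtain an assignment that satisfies all of $\phi$. In particular, $\phi_u$ has its remaining literals true.
\item Run it with $u=1$; then $\phi_{\overline u}$ has its remaining literals true.
\item The variables before $u$ coincide in both runs, so both families are reduced-satisfied.
\end{itemize}

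This does not yet address clauses in $\phi_u$ whose remaining literals involve existentials after $u$. By the hypothesis such clauses cannot exist, because those existentials would co-occur with $u$. Clauses without $u$ are handled by either run.

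Combining the two runs in this way yields a strategy $\tau'$ for $\Q_{\leq_O}\setminus\{u\}.\red{\phi}{u}$. We define it on variables before $u$ as in $\tau$, and after $u$ as in the $u=0$ run. We have to check that every clause of $\red{\phi}{u}$ is satisfied: clauses from $\phi_u$ and $\phi_{\overline u}$ depend only on variables before $u$ (and universals), and the remaining clauses are satisfied by the $u=0$ run.

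Finally, property 2 transfers the truth of $\Q_{\leq_O}\setminus\{u\}.\red{\phi}{u}$ back to $\Q\setminus\{u\}.\red{\phi}{u}$. This requires $\leq_O$ restricted to the remaining variables to be a linear extension of a dependency poset of the reduced formula. That is the bookkeeping I expect to need the most care.
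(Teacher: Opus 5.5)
Your syntactic claim, your ($\Leftarrow$) direction, and the ``two runs'' construction at the core of ($\Rightarrow$) are all sound. The proof breaks at the final step, the transfer from $\Q_{\leq_O}\setminus\{u\}.\red{\phi}{u}$ back to $\Q\setminus\{u\}.\red{\phi}{u}$. Property~2 of $\D$ speaks only about the matrix $\phi$. Nothing tells you that $\D$ restricted to $\var{\Phi}\setminus\{u\}$ is a dependency poset of the reduced QBF.

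This is not bookkeeping. Using your ($\Leftarrow$) direction (reinsert $u$ at its $\leq_O$-position) together with property~2, the implication ``$\Q_{\leq_O}\setminus\{u\}.\red{\phi}{u}$ true $\Rightarrow$ $\Q\setminus\{u\}.\red{\phi}{u}$ true'' is equivalent to the ($\Rightarrow$) direction for the original prefix, which is exactly what you are trying to prove. Nor can a clever choice of $\leq_O$ make the transfer trivial. Take $\Q=\forall u\,\exists y\,\forall w\,\exists x$, where $u\dep y$ and $w\dep x$ are the only dependencies and $x$ shares a clause with $u$. The only linear extension placing $x$ before $u$ is $\forall w\,\exists x\,\forall u\,\exists y$, in which $y$ sees $w$. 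So the reordered reduced formula may be easier for the existential player than $\Q\setminus\{u\}.\red{\phi}{u}$.

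The reordering is also unnecessary. You said at the outset that you use $\D$-refined strategies. In that setting the property you reordered for already holds in $\Q$ itself: every existential $x$ sharing a clause with $u$ has $u\notin\depless{x}$ by hypothesis, so $\tau_x$ ignores $u$. Set $\tau'_x(\beta)=\tau_x(\beta\cup\{u\mapsto 0\})$ for every existential $x$. Then:
\begin{itemize}
\item Clauses of $\red{\phi}{u}$ that lie in $\phi$, or come from $\phi_u$, are satisfied, because $(\beta,\tau')$ agrees with $(\beta\cup\{u\mapsto 0\},\tau)$ on all variables other than $u$.
\item Clauses $\clause$ with $\clause\cup\{\overline{u}\}\in\phi$ contain only existentials independent of $u$. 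So $(\beta,\tau')$ agrees on $\var{\clause}$ with $(\beta\cup\{u\mapsto 1\},\tau)$. That assignment falsifies $\overline{u}$, hence satisfies $\clause$.
\end{itemize}
Thus $\tau'$ is a winning strategy for $\Q\setminus\{u\}.\red{\phi}{u}$ with no transfer step needed. It is also a standard strategy, since $\depless{x}\cap\vara{\Phi}\subseteq L_{\Q}(x)$ by property~1. This is exactly the paper's proof.
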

\iflong
\begin{proof}
    The latter claim follows from the fact that $\phi$ does not contain any tautological clauses and that both $u$ and $\overline{u}$ are exhaustively eliminated.
    
    Towards the former claim, suppose that $\Phi$ is true, that is, there exists a winning existential strategy $\tau$ for $\Phi$. Let $\tau'$ be the existential strategy that copies $\tau$ on all $x \in \vare{\Phi}$ for which $u \not\in \depless{x}$ and for all remaining $x$ copies $\tau$ on the partial assignment $u=0$. We claim that~$\tau'$ is a winning existential strategy for $\Q \setminus \{u\}.\red{\phi}{u}$.
    Suppose not, that is, there exists a universal assignment $\beta \in \langle \vara{\Phi} \setminus \{u\}\rangle$ such that some clause $\clause \in \red{\phi}{u}$ is falsified by $(\beta, \tau')$. If $\clause \in \phi$, then $\clause$ is falsified by $(\beta \cup \{u=0\}, \tau)$, which contradicts the fact that $\tau$ is winning on $\Phi$.
    Since by assumption, $(\clause \cup \{u, \overline{u}\}) \not\in \phi$, we have that $(\clause \cup \{u\}) \in \phi$ or $(\clause \cup \{\overline{u}\} )\in \phi$. In the first case, $\phi$ is falsified by $(\beta \cup \{u = 0\}, \tau)$. In the latter, $\phi$  is falsified by $(\beta \cup \{u = 1\}, \tau)$, since by assumption, $\clause$ does not contain any $x \in \vare{\Phi}$ for which $u \in \depless{x}$ and therefore, $\tau$ and $\tau'$ agree on all variables in $\clause$. Both cases contradict the fact that $\tau$ is a winning existential strategy.

    For the other direction, note that all clauses in $\red{\phi}{u}$ are subclauses of clauses in~$\phi$. Therefore, every winning existential strategy $\tau'$ for $\Q\setminus \{u\}.\red{\phi}{u}$ can be extended to a winning existential strategy $\tau$ for $\qbf$ that copies $\tau'$ on all existential variables, irrespective of the assignment to $u$. 
\end{proof}
\fi

\ifshort
\smallskip \noindent \textbf{Eliminating Variables via Strategy Extension.}\quad
\fi
\iflong
 \subsection{Eliminating Variables via Strategy Extension}\label{subsec:strategy_ext}
 \fi
 \emph{Strategy Extension} is a rule based on a branching technique deployed in~\cite{EGO20}. In contrast to resolution and reduction, this rule operates on multiple QBFs simultaneously. Intuitively, these QBFs correspond to every possible formula that a specific partial existential strategy could have produced when responding to every possible universal play thus far. We will show in Section~\ref{sec:fpt_proof} that the number of strategies and QBFs that need to be considered in parallel at each time step is bounded by a function of the bilateral treewidth.
 
Let $\pi=\{\psi_{1}, \dots, \psi_{\ell}\}$ be a set of matrices that all share the same prefix $\Q$, that is, $\bigcup_{j \in [\ell]}\var{\psi_j} \subseteq \var{\Q}$ and let no $\psi_j$ contain a tautological clause.
For some $v \in \var{\Q}$ up to which we wish to perform strategy extension, we define $\B=\langle \vara{\Q} \cap\depless{v} \rangle$ to be the set of all universal assignments to the set of variables that $v$ depends on. Moreover, we let~$\A$ be the set of all (partial) existential strategies restricted to the variable set $\vare{\Q} \cap \depless{v}$, that is, each such strategy specifies a function $\tau_x: \B{\upharpoonright}_{\depless{x}} \rightarrow \{0,1\}$ for all $x \in \vare{\Q} \cap \depless{v}$.
   We define the following rule.
\vspace{-0.2cm}
\begin{prooftree}\label{rule:strategy_ext}
  \AxiomC{$\pi=\{\psi_{1}, \dots, \psi_{\ell}\}$}
  \LeftLabel{\emph{Strategy Extension up to} $v \in \var{\Q}:\quad$}
  \UnaryInfC{$ \strext{\pi}{v}= \{\pi_{(\tau^{1}, \dots, \tau^{\ell})} \mid (\tau^{1}, \dots, \tau^{\ell}) \in \A^{\ell}\}$,}
     \end{prooftree}  
    where $\pi_{(\tau^{1}, \dots, \tau^{\ell})}=\bigcup_{j \in [\ell]}\pi_{\psi_{j}}^{\tau^{j}}$ with $\pi^{\tau^{j}}_{\psi_{j}}=\{\psi_{j}\rvert_{(\beta, \tau^j)} \mid \beta \in \B\}$. Recall from the previous section that $(\beta, \tau^j)$ abbreviates the assignment $\beta \cup \{\tau^j_x(\beta{\upharpoonright}_{\depless{x}}) \mid x \in (\vare{\Q} \cap \depless{v})\}$.
     Essentially, $\strext{\pi}{v}$ enumerates all $\ell$-tuples of partial strategies---one tuple entry for each of the $\ell$ initial QBFs---and for each such tuple, outputs the set of all QBFs that can arise from the initial QBF by having these strategies respond to universal plays. Note that while a single tuple from $\A^{\ell}$ results in a set of QBF instances, $\strext{\pi}{v}$ is a set of such sets.

     \begin{lemma}
\label{lem:ext_correctness}
    Let $\pi=\{\psi_{1}, \dots, \psi_{\ell}\}$ be a set of matrices  that share a prefix $\Q$ and contain no tautological clauses, let $v \in \var{\Q}$ and let $\Q'=\Q \setminus \depless{v}$. 
        Then all QBFs in $\{\Q.\psi_j~|~j\in [\ell] \}$ are true if and only if there
             exists some $\pi' \in \strext{\pi}{v}$ such that all QBFs in $\{\Q'.\psi'~|~\psi' \in \pi' \}$ are true.
         Moreover, all matrices in all $\pi' \in \strext{\pi}{v}$ neither contain a tautological clause nor a clause that contains a literal of any variable in $\depless{v}$.
\end{lemma}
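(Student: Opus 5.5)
The plan is to handle the syntactic ``moreover'' part first and then prove the two directions of the equivalence separately, working with one matrix $\psi_j$ at a time. This is enough because $\strext{\pi}{v}$ picks the strategies $\tau^1,\dots,\tau^\ell$ independently, one per $\psi_j$, so the statement splits coordinatewise. The key tool is the permutation property of the dependency poset $\D$. The set $\depless{v}$ is downward closed under $\dep$, so there is a linear extension $\leq_O$ of $\D$ that puts all of $\depless{v}$ before everything else. Writing $\Q_{\leq_O}$ for the correspondingly rearranged prefix, property~2 of dependency posets gives that $\Q.\psi_j$ and $\Q_{\leq_O}.\psi_j$ are equisatisfiable. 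In $\Q_{\leq_O}$ the block $\depless{v}$ forms a prefix, and it is followed by $\Q'$. Moreover, for $x \in \vare{\Q}\cap\depless{v}$, every universal variable that precedes $x$ in $\Q_{\leq_O}$ and that $x$ may depend on lies in $\depless{x} \subseteq \depless{v}$.

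\textbf{Moreover part.} Every $\psi' \in \pi'$ has the form $\psi_j\rvert_{(\beta,\tau^j)}$, and $(\beta,\tau^j)$ is a total assignment to $\depless{v}$. Applying it deletes satisfied clauses and removes falsified literals, so no literal of a variable in $\depless{v}$ survives. Each surviving clause is a subset of a non-tautological clause, so it is itself non-tautological.

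\textbf{($\Rightarrow$).} For each $j$, fix a winning existential strategy $\sigma^j$ for $\Q.\psi_j$, refined via $\depless{\cdot}$ as in the preliminaries. Let $\tau^j$ be its restriction to $\vare{\Q}\cap\depless{v}$; this is an element of $\A$, because $\sigma^j_x$ only reads $\depless{x}$. Now take any $\beta\in\B$ and set $\psi'=\psi_j\rvert_{(\beta,\tau^j)}$. The remaining functions $\sigma^j_y$ for $y\notin\depless{v}$, with their $\depless{v}$-inputs fixed to $\beta$, form a strategy for $\Q'.\psi'$. This strategy is winning: for any universal assignment $\beta'$ to the remaining universal variables, $(\beta\cup\beta',\sigma^j)$ satisfies $\psi_j$, and it agrees with $(\beta,\tau^j)$ on $\depless{v}$.

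\textbf{($\Leftarrow$).} Suppose that for some tuple $(\tau^1,\dots,\tau^\ell)\in\A^\ell$, every $\Q'.\psi_j\rvert_{(\beta,\tau^j)}$ is true. Choose a winning strategy $\rho^{j,\beta}$ for each of these QBFs. Then glue: define $\sigma^j_x=\tau^j_x$ for $x\in\depless{v}$, and $\sigma^j_y(\gamma)=\rho^{j,\gamma{\upharpoonright}_{\vara{\Q}\cap\depless{v}}}_y(\gamma{\upharpoonright}_{\text{rest}})$ for the other $y$. This $\sigma^j$ wins the game on $\Q_{\leq_O}.\psi_j$, since the first block is played by $\tau^j$ and the remainder is exactly the game on the residual formula. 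By the equisatisfiability noted above, $\Q.\psi_j$ is true.

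\textbf{Main obstacle.} The delicate point is that the glued $\sigma^j_y$ reads all of $\vara{\Q}\cap\depless{v}$, and some of these variables need not lie in $\depless{y}$, or may even be quantified after $y$ in $\Q$. This is precisely why I intend to argue truth of $\Q_{\leq_O}.\psi_j$, using the linear-extension prefix in which $\depless{v}$ comes first, and only then transfer to $\Q.\psi_j$ via property~2, rather than building a strategy for $\Q$ directly. The same care is needed in ($\Rightarrow$) to check that $\Q'$ is a legitimate prefix for the residual game, with strategies of the inherited dependency form. There I would again move to $\Q_{\leq_O}$ and use the standard semantics with $L_{\Q_{\leq_O}}$.
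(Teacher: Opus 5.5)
Your argument is correct within the paper's framework. The ``moreover'' part and ($\Rightarrow$) coincide with the paper's proof: restrict a $\D$-refined winning strategy to $\vare{\Q}\cap\depless{v}$, and let the remaining functions, with their $\depless{v}$-inputs fixed by $\beta$, win each residual game.

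For ($\Leftarrow$) you take a genuinely different route. The paper argues by contraposition and never leaves $\Q$: from falsity of $\Q.\psi_j$ it infers that every $\tau\in\A$ admits some $\beta$ with $\Q'.\psi_j\rvert_{(\beta,\tau)}$ false. You instead glue residual winning strategies onto $\tau^j$. You then check legality in the reordered prefix $\Q_{\leq_O}$ (first $\depless{v}$, then $\Q'$ in its original order), and transfer truth back to $\Q$ via property~2 of dependency posets.

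A direct argument inside $\Q$ suffices whenever all universals of $\depless{v}$ precede all variables outside $\depless{v}$, as for $\Dtrv$; there your reordering is vacuous. For general $\D$, your route makes explicit what the paper's inference silently needs. That inference is really the contrapositive of your gluing, since one defeated extension $\rho^j$ of $\tau$ does not by itself falsify a residual formula. And, as you observe, the glued functions read universals of $\depless{v}$ outside their own dependency sets, possibly to their right in $\Q$, so the glued strategy is not legal for $\Q$ itself.

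The price is that you apply property~2 to every $\Q.\psi_j$, whereas it is only postulated for $\Phi=\Q.\phi$. This is on the same footing as the paper's tacit use of $\D$-refined strategies for the $\psi_j$, but you should state it as an explicit assumption.

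Three small fixes:
\begin{enumerate}
\item Justify that $\leq_O$ exists. $\depless{v}$ is downward closed under $\dep$, and by property~1 of dependency posets, $\dep$ restricted to the remaining variables respects their order in $\Q$.
\item Read literally, your remark that the universals preceding $x$ in $\Q_{\leq_O}$ lie in $\depless{x}$ is false in general, since no single linear order need realize all the sets $\depless{x}$ at once. Legality of $\tau^j_x$ needs only the converse inclusion, which does hold.
\item Drop the planned detour through $\Q_{\leq_O}$ in ($\Rightarrow$). It is unnecessary, because by property~1 the residual functions read only universals to their left in $\Q'$. It would also break $\tau^j\in\A$ if it meant replacing the $\D$-refined strategy by a standard strategy for $\Q_{\leq_O}$.
\end{enumerate}
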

\iflong
\begin{proof}
    The latter claim follows from the fact that each clause which occurs in a matrix in some $\pi' \in \strext{\pi}{v}$ is a subclause of a clause in some $\psi_j\in \pi$, which does not contain any tautological clauses by definition. Moreover, each variable in $\depless{v}$ either gets assigned universally (if it is contained in $\B$) or is set by a partial strategy in $\A$.

    Towards the former claim, let $\B$ be defined as above.
    Suppose all QBFs in $\{\Q.\psi_j~|~j\in [\ell] \}$ are true, that is, for every $\Q. \psi_j$, there exists an existential strategy $\rho^{j}=\{\rho^j_{x} \mid x \in \vare{\Q. \psi_j}\}$ with $ \rho^j_{x}: \langle \B {\upharpoonright}_{\depless{x}} \rangle \rightarrow \{0,1\}$, such that for all $\beta' \in \langle \vara{\Q. \psi_j} \rangle$ it holds that $\psi_j \rvert_{(\beta', \rho^j)}\equiv \top$. For each $j \in [\ell]$, let $\tau^j=\{\tau^j_x \mid \tau^j_x = \rho^j_x, x \in (\vare{\Q.\psi_j} \cap \depless{v}) \}$ be the restriction of $\rho^j$ to existential variables that $v$ depends on. Then for all $j \in [\ell]$ and all $\beta \in \B$, the QBF $\Q'.\psi_j \rvert_{(\beta, \tau^j)}$ is true. Note that by definition, $\pi_{(\tau^1, \dots, \tau^{\ell})}=\bigcup_{j \in [\ell]} \{\psi_j \rvert_{(\beta, \tau^j)} \mid \beta \in \B\}$. 
    Consequently, $\Q'. \psi'$ is true for all $\psi' \in \pi_{(\tau^1, \dots, \tau^{\ell})}$.

     For the reverse direction, suppose $\Q. \psi_j$ is false for some $j \in[\ell]$, that is, for every existential strategy $\rho^{j}=\{\rho^j_{x} \mid x \in \vare{\Q. \psi_j}\}$ with $ \rho^j_{x}: \langle \B {\upharpoonright}_{\depless{x}} \rangle \rightarrow \{0,1\}$, there is a $\beta_{\rho^j} \in \langle \vara{\Q. \psi_j} \rangle$ with $\psi_j \rvert_{(\beta_{\rho^j}, \rho^j)} \equiv \bot$. For each $\tau \in \A$, find some $\rho^j$ such that for its restriction $\tau^j$ (defined as above), it holds that $\tau^j=\tau$. Then, for every $\tau \in \A$, there is a partial universal assignment (namely $\beta= \beta_{\rho^j}{\upharpoonright}_{\depless{v}}$) such that $\Q'.\psi_j \rvert_{(\beta, \tau)}$ is false. Consequently, for each choice $\tau \in \A$, there is some $\psi' \in \pi_{(\tau^1, \dots, \tau^{j-1}, \tau, \tau^{j+1}, \dots, \tau^{\ell})}$ for which $\Q'.\psi'$ is false.
\end{proof}

\subsection{Relation of Strategy Extension to Universal Expansion}
    As noted, the first two rules are global variants of the resolution and reduction rule that are employed by various QBF proof systems, such as $\qdres$. Below, we show that strategy extension can likewise be linked to an existing rule, known as \emph{universal expansion}~\cite{JM15}. This rule is formally defined for a QBF $\qbf$ as follows.
 \fi

 \ifshort
While the first two rules are global variants of the resolution and reduction rule that are employed by, e.g., $\qdres$, strategy extension can also be linked to an existing rule---specifically, to a rule known as \emph{universal expansion}~\cite{JM15}. 
 \fi
\vspace{-0.2cm}
\begin{prooftree}
  \AxiomC{$\phi$}
  \LeftLabel{\emph{Universal Expansion up to} $v \in \vara{\Phi}:\quad$}
  \UnaryInfC{$ \expa{\phi}{v}=\bigwedge_{\beta \in \B} \phi\rvert_{\beta \cup \{x \mapsto x_{\beta{\upharpoonright}_{L_{\Q}(x)}} \mid x \in \vare{\phi}\}}$,}
  \end{prooftree} 
  where $\B$ is defined as above and $x_{\beta{\upharpoonright}_{L_{\Q}(x)}}$ is a new variable that is \emph{annotated} with the universal assignment $\beta{\upharpoonright}_{L_{\Q}(x)}$. The relation between strategy extension and universal expansion can be understood through two key observations.
  \begin{itemize}
      \item Both rules branch on all assignments in $\B$, but employ different formalizations: 
       strategy extension handles the formulas that arise from different universal assignments separately, whereas universal expansion maintains a single formula and uses annotated copies of variables to account for the universal branching. This difference is purely syntactic.
       \item Strategy extension goes beyond universal expansion by explicitly assigning all existential variables up to $v$, whereas universal expansion just keeps their annotated counterparts.
  \end{itemize}
Based on these observations, strategy extension up to a variable $v$ can be simulated by first performing universal expansion up to $v$, and then instantiating all annotated existential variables up to $v$. 
 \iflong
To simplify the exposition, we consider only the case where strategy extension is applied to a single matrix. 
\begin{longobservation}
    Let $\qbf$ be a QBF, $v \in \vara{\Phi}$, and $\E=\vare{\Q} \cap \depless{v}$, and let $\A$ and $\Q'$ be defined as above.
    There is a $\pi' \in \strext{\{\phi\}}{v}$ such that all QBFs in $\{\Q'.\psi'~|~\psi' \in \pi' \}$ are true if and only if 
    $\Q''.\bigvee_{\tau' \in \A'}\expa{\phi}{v}\rvert_{\tau'}$ is true, where for each $\tau \in \A$, the set $\A'$ contains the constant partial existential strategy $\tau'=\{\tau_{x_{\sigma}}'\mid x_{\sigma}\in \var{\expa{\phi}{v}}, x \in \E\}$ with $\tau_{x_{\sigma}}'=\tau_x(\sigma{\upharpoonright}_{\depless{x}})$, that is, $\tau'$ sets each annotated variable $x_{\sigma}$ to whichever value the strategy~$\tau$ chooses for $x$ when given the partial universal assignment $\sigma{\upharpoonright}_{\depless{x}}$. Here, $\Q''$ is obtained from $\Q'$ by replacing each $x \in \vare{\Q'}$ with the set of variables $\{x_{\beta{\upharpoonright}_{L_{\Q}(x)}} \mid \beta \in \B\}$.
               \end{longobservation}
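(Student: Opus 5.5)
The plan is to pass from the left-hand side to the right-hand side through a chain of equivalences that all happen at the level of a single fixed partial strategy $\tau\in\A$, and only at the end deal with the disjunction over $\A'$. The correspondence $\tau\mapsto\tau'$ from the statement is a bijection between $\A$ and $\A'$. Also, a set $\pi'\in\strext{\{\phi\}}{v}$ is exactly $\pi^{\tau}_{\phi}=\{\phi\rvert_{(\beta,\tau)}\mid\beta\in\B\}$ for some $\tau\in\A$, since $\ell=1$. So the left-hand side reads: there is $\tau\in\A$ such that $\Q'.\phi\rvert_{(\beta,\tau)}$ is true for every $\beta\in\B$. The first step is therefore to prove, for each fixed $\tau$, the intermediate claim
\[
\Q'.\phi\rvert_{(\beta,\tau)} \text{ is true for all } \beta\in\B
\quad\Longleftrightarrow\quad
\Q''.\expa{\phi}{v}\rvert_{\tau'} \text{ is true.}
\]

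To prove this claim I would unfold $\expa{\phi}{v}\rvert_{\tau'}$ as $\bigwedge_{\beta\in\B}\phi^{\beta}$. Here $\phi^{\beta}$ is obtained from $\phi$ as follows. The universal variables in $\depless{v}$ are set by $\beta$. Each existential $x\in\E$ is replaced by $x_{\beta\upharpoonright_{L_\Q(x)}}$, which $\tau'$ instantiates to $\tau_x(\beta{\upharpoonright}_{\depless{x}})$. The remaining existentials $x$ of $\Q'$ are renamed to $x_{\beta\upharpoonright_{L_\Q(x)}}$. Thus $\phi^{\beta}$ is literally $\phi\rvert_{(\beta,\tau)}$ with the existentials of $\Q'$ renamed. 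I then check that for $\beta\neq\beta'$ the renamed existential copies used in $\phi^\beta$ and $\phi^{\beta'}$ either coincide or have annotations that agree with respect to the universals left of them. This is what makes a strategy for the conjunction under $\Q''$ the same thing as a family of strategies $(\rho^\beta)_{\beta\in\B}$ for the individual QBFs $\Q'.\phi\rvert_{(\beta,\tau)}$, glued along the annotation. Both directions of the claim then follow by translating strategies back and forth, exactly as in the proof of Lemma~\ref{lem:ext_correctness}.

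Given the claim, the direction from left to right is immediate. A witnessing $\tau$ makes the disjunct for $\tau'$ true under $\Q''$, hence the whole matrix $\bigvee_{\tau'\in\A'}\expa{\phi}{v}\rvert_{\tau'}$ is satisfied by the same strategy for $\Q''$.

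The main obstacle is the converse: truth of $\Q''.\bigvee_{\tau'}\expa{\phi}{v}\rvert_{\tau'}$ must yield a single $\tau'$ whose disjunct is true. Since the disjunction sits inside the prefix, a priori the satisfied disjunct could depend on the universal play in $\Q''$. I would first try to show that the disjuncts are selected uniformly. Fix a winning strategy for the disjunction and, for each $\tau'$, suppose towards a contradiction that a universal counter-play $\gamma_{\tau'}$ falsifies that disjunct. Then try to combine the finitely many $\gamma_{\tau'}$ into one play falsifying all disjuncts simultaneously, using that each disjunct is a conjunction over $\B$ in which the relevant universals of $\Q'$ occur. I would test this step on small examples before relying on it. A prefix of the shape $\exists e\,\forall\{u,w\}$ with clauses $\{e,w\},\{\overline e,\overline w\}$ and $v=u$ looks dangerous: the disjunction $w\vee\overline w$ becomes valid although $\exists e\forall w$ is false. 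If such an example survives scrutiny, this step cannot go through for arbitrary universals in $\Q'$. In that case the argument only works when the choice of disjunct is read as fixed before $\Q''$, and that reading is exactly what the intermediate claim delivers.
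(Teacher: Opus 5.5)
Your suspicion about the converse is correct, and your example survives scrutiny. The defect therefore lies in the statement itself, and in the paper's proof, not in your reasoning.

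Take $\Q=\exists e\,\forall\{u,w\}$, $\phi=\{\{e,w\},\{\overline{e},\overline{w}\}\}$, $v=u$ and $\Dtrv$. Then $\depless{u}=\{e,u\}$; no dependency poset can put $w$ into $\depless{u}$, since $w$ is not quantified to the left of $u$. So $\B=\langle\{u\}\rangle$, $\A$ consists of the two constant choices for $e$, and $\Q'=\Q''=\forall w$.

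\emph{Left-hand side.} Since $u$ does not occur in $\phi$, the two members of $\strext{\{\phi\}}{u}$ are the singletons containing the matrix $\{\{w\}\}$ and the matrix $\{\{\overline{w}\}\}$. Both $\forall w.\,w$ and $\forall w.\,\overline{w}$ are false, so the left-hand side fails. This agrees with Lemma~\ref{lem:ext_correctness}, since $\Phi$ itself is false.

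\emph{Right-hand side.} $\expa{\phi}{u}$ is $\phi$ with $e$ renamed to $e_{\emptyset}$. The two members of $\A'$ set $e_{\emptyset}$ to $0$ and to $1$, so the matrix $\bigvee_{\tau'\in\A'}\expa{\phi}{u}\rvert_{\tau'}$ is $w\vee\overline{w}$. Hence the right-hand side $\forall w.(w\vee\overline{w})$ is true.

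The reason is that under $\Q''$ the universal player commits to $w$ before a disjunct, i.e.\ a value of $e$, is selected. The inner disjunction thus quietly moves $e$ behind $w$. You should present this as a refutation of the right-to-left direction, not as a step still to be tested.

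The paper's proof treats the left-to-right direction exactly as you do. It fixes the witnessing $\tau$, annotates each $\phi\rvert_{(\beta,\tau)}$ with $\beta$, merges the resulting formulas using that their existential copies are pairwise disjoint, and notes that one disjunct is then true. For the converse it only says that it ``can be proved analogously''. That is precisely the uniform-selection step your example breaks.

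What does hold is your intermediate claim, i.e.\ the equivalence with the disjunction taken outside the prefix, $\bigvee_{\tau'\in\A'}\Q''.\expa{\phi}{v}\rvert_{\tau'}$. This suffices for the intended message that strategy extension amounts to universal expansion followed by instantiation. When $\vara{\Q'}=\emptyset$, the stated form is equivalent to this one, because existential quantifiers commute with disjunction.

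One point of your intermediate claim still needs tightening.
\begin{itemize}
\item For $\Dtrv$, every $x\in\vare{\Q'}$ lies in a block to the right of $v$, so $L_{\Q}(x)\supseteq\vara{\Q}\cap\depless{v}$. The copies belonging to distinct $\beta$ never coincide, and gluing the strategies $\rho^{\beta}$ is immediate.
\item For a non-trivial $\D$, some $x\notin\depless{v}$ may be quantified to the left of a universal variable in $\depless{v}$. Then copies do coincide, and the strategies $\rho^{\beta}$ would have to agree on them. ``Glued along the annotation'' does not justify this, and the paper's disjointness claim has the same limitation.
\end{itemize}
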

\begin{proof}
    Suppose, there exists a $\pi' \in \strext{\{\phi\}}{v}=\{\{ \phi\rvert_{(\beta, \tau)} \mid \beta \in \B\}\mid \tau \in \A\}$ such that all QBFs in $\{\Q'.\psi'~|~\psi' \in \pi' \}$ are true. In other words, there is a $\tau \in \A$ such that $\Q'.\phi\rvert_{(\beta,\tau)}=\Q'.\phi\rvert_{\beta \cup \{\tau_x(\beta{\upharpoonright}_{\depless{x}}) \mid x \in \E\}}$ is true for all $\beta \in \B$. We can express this property as a single formula by---for each formula that arises under a certain $\beta \in \B$---annotating all existential variables in this formula with $\beta$. We can now merge the individual formulas, since the sets of existential variables that they contain are all pairwise disjoint.
    Formally, we conclude from the above that $\Phi' := \Q''.\bigvee_{\tau \in \A}\bigwedge_{\beta \in \B} \phi\rvert_{(\beta,\tau) \cup \{x \mapsto x_{\beta{\upharpoonright}_{L_{\Q}(x)}} \mid x\in \vare{\Q'}\}}$ is true. Observe that $\Phi'$ does not contain any variable in $\E$ since they are assigned by each $\tau\in \A$. Hence, the same formula could have been produced by applying universal expansion up to $v$ first and then assigning each annotated variable $x_{\sigma}$ that was derived from an $x \in \E$ according to $\tau'$, that is, by treating its annotation as an input to the strategy function.
    Formally, we have $\Phi' \equiv \Q''. \bigvee_{\tau' \in \A'}\expa{\phi}{v}\rvert_{\tau'}$.
    The reverse direction can be proved analogously.
\end{proof}
\fi

\section{Tree Decompositions for QBFs}
Recall that $\qsat$ is \pspace-complete even on instances of constant treewidth~\cite{AO14}.
In this section, we adjust the general notion of tree decompositions to enable fixed-parameter tractability of $\qsat$ via the careful application of the rules introduced in Section~\ref{sec:rules}.
 We call the resulting parameter bilateral treewidth and demonstrate that it strictly dominates all previous measures of tree-likeness considered for QBFs.
 \subsection{Trunk-Aligned Tree Decompositions and Bilateral Treewidth}
Let $\qbf$ be a QBF with dependency poset $\D$ and let $\mathcal{T}=(\mathbf{T},\rchi)$ be a tree decomposition of $\Phi$. 
For some $\ell \in \mathbb{N}$, let $\trunk=(t_0, \dots, t_{\ell})$ be a designated sequence of nodes $t_0, \dots, t_{\ell}$ such that $t_0$ is a leaf, $t_{\ell}=\rt$, and $t_{i-1}$ is the child of $t_i$ for all $i \in [\ell]$, that is, $\trunk$ is a leaf-to-root path. We refer to $\trunk$ as the \emph{trunk} of $\tree$. 
We call $\tree$ a \emph{trunk-aligned tree decomposition} of~$\Phi$ with respect to $\D$ if it satisfies the following property imposed on the order in which variables appear in bags:
For all $u \in \var{\Phi}$, we have that
\begin{description}
    \item[\namedlabel{itm:P1}{(P1)}] for all $v \in \var{\Phi}$ with $u \in \depless{v}$, it is the case that $v\notin\rchi(\forget{u})$ or
               \item[\namedlabel{itm:P2}{(P2)}] $\forget{u} \in \trunk$ and for all $v \in \depless{u}$,  it is the case that $v\in \mathbf{T}_{\leq\forget{u}}$.
           \end{description}
         Informally, Property~\ref{itm:P1} requires all variables that depend on $u$ to avoid the bag where $u$ is forgotten---that is, they are forgotten inside $\mathbf{T}_u$ or only occur outside of that subtree. Inside the trunk, Property~\ref{itm:P2} allows for an alternative: all variables that $u$ depends on are introduced ``before'' $u$ is forgotten.
The \emph{width} of a trunk-aligned tree decomposition is defined analogously as for standard tree decompositions.
The \emph{bilateral treewidth} of $\Phi$ with respect to $\D$, denoted by $\btw{\Phi, \D}$, is the minimal width over all trunk-aligned tree decompositions of $\Phi$ with respect to $\D$.

In Section~\ref{sec:fpt_proof}, we will use trunk-aligned tree decompositions to construct proof-like derivation sequences.
Intuitively, whenever a variable is forgotten in a trunk-aligned tree decomposition, we will eliminate it from all QBFs that we currently maintain in our derivation sequence by using one of the rules from Section~\ref{sec:rules}. As we will prove in Section~\ref{sec:fpt_proof}, Property~\ref{itm:P1} ensures that we invoke the reduction rule (the resolution rule, respectively) only on universal (existential, resp.) variables that are not blocked by a co-occurring existential (universal, resp.) variable that is quantified later.  Similarly, Property~\ref{itm:P2} guarantees that eliminating variables via strategy extension is tractable.
As a result, our approach not only solves $\qsat$, but also produces concise proofs; previously, dependency treewidth was known to admit such proof extractions, whereas prefix pathwidth was not.
 Before proceeding to Section~\ref{sec:fpt_proof}, we briefly discuss these measures and their relationship to bilateral treewidth.

\subsection{Comparison of Decompositional Parameters for QSAT}\label{sec:comparison}
For the following considerations, it will be useful to recall Figure~\ref{fig:tw-variants}. Our aim here is not only to provide an overview of previously introduced measures for the tree-likeness of QBFs, but also to establish that bilateral treewidth generalizes all of them.
 
\smallskip \noindent \textbf{Dependency Tree Decompositions.}\quad
Dependency tree decompositions were introduced in~\cite{EGO18} and are defined as follows:
Let $\qbf$ be a QBF with dependency poset $\D$ and $\mathcal{T}=(\mathbf{T},\rchi)$ be a tree decomposition of $\Phi$.
We call $\tree$ a \emph{dependency tree decomposition} of $(\Phi, \D)$ if it satisfies the following property: For all $u,v \in \var{\Phi}$ with $u \in \depless{v}$, it is \textbf{not} the case that $\forget{u}\lesst \forget{v}$. Note that this condition resembles~\ref{itm:P1}. However, the condition imposed on dependency tree decompositions is more restrictive since $v \in \rchi(\forget{u})$ implies that $\forget{u} \lesst \forget{v}$.
  Therefore, we conclude the following.
   \begin{corollary}\label{cor:dtw_is_btw}
    Every dependency tree decomposition~$\mathcal{T}=(\mathbf{T},\rchi)$ is a trunk-aligned tree decomposition where $\trunk$ is an arbitrary leaf-to-root-path in $\tree$.
\end{corollary}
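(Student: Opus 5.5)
The plan is to show that every variable $u$ satisfies Property~\ref{itm:P1}. Once this holds, the trunk-aligned condition (a disjunction of \ref{itm:P1} and \ref{itm:P2}) is met no matter how the trunk is chosen. So we fix an arbitrary leaf-to-root path $\trunk=(t_0,\dots,t_\ell)$. Such a path exists because $\mathbf{T}$ is a finite rooted tree: start at the root and repeatedly descend to a child until a leaf is reached. By \ref{itm:T3}, this path consists of nodes of $\tree$ with $t_0$ a leaf and $t_\ell=\rt$, as the definition of a trunk requires.

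Next, fix $u,v\in\var{\Phi}$ with $u\in\depless{v}$ and suppose, for contradiction, that $v\in\rchi(\forget{u})$. The case $v=u$ needs separate attention, since $u\in\depless{u}$ by reflexivity. For $v=u$, the dependency decomposition condition is vacuous, because $\forget{u}\lesst\forget{u}$ fails by irreflexivity of $\lesst$. Property~\ref{itm:P1}, however, would demand $u\notin\rchi(\forget{u})$, which is false. I would handle this by reading the definition of $\depless{\cdot}$ in \ref{itm:P1} as ranging over $v\neq u$. This matches the intended meaning ("all variables that depend on $u$") and is the same convention that makes the paper's remark "$v\in\rchi(\forget{u})$ implies $\forget{u}\lesst\forget{v}$" meaningful. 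Settling this convention is the one real obstacle; everything else is routine.

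For $v\neq u$, write $t=\forget{u}$, so that $v\in\rchi(t)$. By \ref{itm:T2}, the nodes whose bags contain $v$ form a subtree, and $\forget{v}$ is its unique topmost node. Hence $\forget{v}$ is either $t$ or a proper ancestor of $t$. If $\forget{v}\neq t$, then $t\lesst\forget{v}$, that is, $\forget{u}\lesst\forget{v}$, which contradicts the defining property of dependency tree decompositions.

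It remains to exclude $\forget{v}=t$. Since $t$ contains $u$, \ref{itm:T3} gives $t\neq\rt$, so $t$ has a parent $p$. By the definition of $\forget{\cdot}$, $u\notin\rchi(p)$, and if also $\forget{v}=t$, then $v\notin\rchi(p)$ as well. So $p$ would forget two variables at once. In a nice decomposition this is impossible: the parent of $t$ is a forget, introduce or join node, and by \ref{itm:T4} its bag loses at most one element relative to the child's bag. Therefore $\forget{u}\lesst\forget{v}$, the same contradiction. This establishes \ref{itm:P1} for every $u$, and hence the corollary.
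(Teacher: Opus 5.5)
Your proof is correct and follows the paper's own argument: every $u$ satisfies \ref{itm:P1} because $v\in\rchi(\forget{u})$ would force $\forget{u}\lesst\forget{v}$, which dependency tree decompositions forbid, so the trunk may be any leaf-to-root path. You also make explicit two points the paper leaves implicit: \ref{itm:P1} must be read for $v\neq u$ (otherwise reflexivity of $\dep$ makes it unsatisfiable), and niceness via \ref{itm:T4} rules out $\forget{v}=\forget{u}$.
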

The \emph{dependency treewidth} of $\Phi$ with respect to a dependency poset $\D$, denoted by $\dtw{\Phi, \D}$, is the minimal width of all dependency tree decompositions of $(\Phi, \D)$. 
   
 Dependency treewidth can be characterized in terms of elimination orderings. More precisely, we define the \emph{$\D$-elimination ordering} of $G_{\Phi}$ to be an elimination ordering of $G_{\Phi}$ that is a linear extension of the reverse of $\D$. It is known that $\dtw{\Phi, \D}$ is equal to the minimal width among all $\D$-elimination orderings of $G_{\Phi}$~\cite{EGO18}.
 
\smallskip \noindent \textbf{Respectful Tree Decompositions.}\quad
Chen and Dalmau~\cite{CD05} introduced a concept that is equivalent to respectful tree decompositions, but is phrased in terms of elimination orderings.
 Atserias and Oliva~\cite{AO14} re-stated it in terms of tree decompositions and coined the name. Having introduced the notion of dependency tree decompositions first---although respectful tree decompositions appeared earlier in the literature---we can now neatly define the latter as a special case of the former: A \emph{respectful tree decomposition} is a dependency tree decomposition with respect to the trivial dependency poset $\D_{\mathrm{trv}}$. The \emph{respectful treewidth} of some $\Phi$ is the minimal width of all respectful tree decompositions of $\Phi$.
 
\smallskip \noindent \textbf{Prefix Path Decompositions.}\quad
Prefix path decompositions~\cite{EGO20} are conceptually different from the previous two measures.
Let $\Phi$ be a QBF with dependency poset $\D$ and $\mathcal{P}=(\mathbf{P},\rchi)$ be a path decomposition of $\Phi$.
We call $\mathcal{P}$ a \emph{prefix path decomposition} of $(\Phi, \D)$ if for all $u \in \var{\Phi}$ and all $v \in \depless{u}$, we have $v \in \mathbf{P}_{\leq\forget{u}}$.
The resemblance to~\ref{itm:P2} is evident. Consequently, we obtain:
\begin{corollary}\label{cor:ppw_is_btw}
    Every prefix path decomposition~$\mathcal{P}$ with $\mathcal{P}=(P_1, \dots, P_d)$ is a trunk-aligned tree decomposition with $\trunk=(P_1, \dots, P_d)$.
\end{corollary}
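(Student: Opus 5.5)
The claim is Corollary~\ref{cor:ppw_is_btw}. The plan is to check directly that a prefix path decomposition meets the definition of a trunk-aligned tree decomposition, and that every variable in fact satisfies Property~\ref{itm:P2}.

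First I fix the structural conventions. A path decomposition $\mathcal{P}=(P_1,\dots,P_d)$ is a nice tree decomposition whose tree $\mathbf{P}$ is a path. I root it at the endpoint $P_d$, so $P_1$ is the unique leaf. Then for every $i\in[d-1]$ the node $P_i$ is the child of $P_{i+1}$. Hence $(P_1,\dots,P_d)$ is a leaf-to-root path: $t_0=P_1$ is a leaf, $t_{\ell}=P_d=\rt$, and consecutive nodes stand in the child relation. So it is a legitimate choice for $\trunk$, and every node of $\mathbf{P}$ lies on it.

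Next I show that every $u\in\var{\Phi}$ satisfies Property~\ref{itm:P2}, which is enough since the two properties are alternatives.

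\begin{itemize}
\item The first conjunct, $\forget{u}\in\trunk$, holds because $\forget{u}$ is some node of $\mathbf{P}$, and all nodes lie on the trunk.
\item The second conjunct says that every $v\in\depless{u}$ lies in $\mathbf{T}_{\leq\forget{u}}$. Here $\mathbf{T}_{\leq\forget{u}}$ is the same set as $\mathbf{P}_{\leq\forget{u}}$, namely the union of the bags in the subtree rooted at $\forget{u}$, i.e.\ of all bags $P_j$ with $j$ at most the index of $\forget{u}$. The defining condition of prefix path decompositions gives exactly $v\in\mathbf{P}_{\leq\forget{u}}$ for all $v\in\depless{u}$.
\end{itemize}

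Hence Property~\ref{itm:P2} holds for all $u$, and $\mathcal{P}$ is trunk-aligned with the stated trunk.

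The only real obstacle is bookkeeping. I must confirm that the rooting of the path matches the paper's orientation of $\lessp$, i.e.\ that the root is $P_d$ and the leaf is $P_1$. I must also confirm that the subtree notation $\mathbf{P}_{\leq t}$ in the prefix definition coincides with $\mathbf{T}_{\leq t}$ from the preliminaries. Both follow from the definitions once $\mathbf{P}$ is viewed as a rooted nice tree decomposition, so no further argument is needed.
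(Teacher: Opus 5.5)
Your proposal is correct and follows the same route as the paper. Taking the whole path as the trunk makes $\forget{u}\in\trunk$ automatic, and the prefix condition $v\in\mathbf{P}_{\leq\forget{u}}$ for all $v\in\depless{u}$ is exactly the second conjunct of Property~\ref{itm:P2}, so every variable satisfies \ref{itm:P2}. Your explicit check that $P_1$ is the leaf and $P_d=\rt$, as forced by the ordering along $\lessp$, simply spells out what the paper leaves implicit with its remark that the resemblance to \ref{itm:P2} is evident.
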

The \emph{prefix pathwidth} of $\Phi$ with respect to dependency poset $\D$, denoted by $\ppw{\Phi, \D}$, is the minimal width of all prefix path decompositions of $(\Phi, \D)$.
Eiben, Ganian, and Ordyniak~\cite{EGO20} provided a characterization of prefix pathwidth in terms of the directed pathwidth of a graph 
$D_{\Phi}^{\D}$ which can be obtained from $G_{\Phi}$ as follows. We replace every edge~$\{u,v\}$ with the two directed edges $(u,v)$ and $(v,u)$, and add the directed edge $(u,v)$ for each two vertices $u,v \in V(G_{\Phi})$ with $u \dep v$. Then, $\ppw{\Phi, \D}=\dpw{\Phi,\D}$, where $\dpw{\Phi, \D}$ is the directed pathwidth of $D^{\D}_{\Phi}$ as introduced at the end of Section~\ref{sec:prelims}.

\smallskip \noindent \textbf{The Bilateral Treewidth of QParity is Small.}\quad
While Corollaries~\ref{cor:dtw_is_btw} and~\ref{cor:ppw_is_btw} suffice to show that bilateral treewidth subsumes dependency treewidth and prefix pathwidth, it is not yet clear whether it is genuinely stronger, that is, whether it captures tractable instances which the previous parameters did not. We answer this question in the positive.

 \begin{theorem}
    \label{thm:parity}
    For each $n \in \mathbb{N}$ with $n\geq 2$, there exists a QBF $\Phi$ with $2n+1$ variables such that $\dtw{\Phi, \Dtrv}\geq n+1$ and $\ppw{\Phi, \Dtrv}\geq n$, but $\btw{\Phi, \Dtrv}=2$. \end{theorem}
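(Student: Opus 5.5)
We take the parity formula $\qpar$ of~\cite{BCJ15}:
\[
\Phi=\exists x_1\dots x_n\,\forall z\,\exists t_1\dots t_n.\ \phi,
\]
where $\phi$ encodes $t_1\leftrightarrow x_1$ and $t_i\leftrightarrow(t_{i-1}\oplus x_i)$ for $2\le i\le n$, each XOR as four clauses of size three. It also contains $\{z,t_n\}$ and $\{\overline z,\overline{t_n}\}$. This gives $2n+1$ variables. Its primal graph has the edges $x_1t_1$, $t_{i-1}t_i$, $t_{i-1}x_i$, $t_ix_i$ and $zt_n$. We read (P1) as quantifying over $v\neq u$, since reflexivity would otherwise make it unsatisfiable.

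\textbf{Upper bound $\btw{\Phi,\Dtrv}\le 2$.} We give an explicit path decomposition that serves as the whole trunk, with bags
\[
\{x_1,t_1\},\ \{t_1,x_2,t_2\},\ \{t_2,x_3,t_3\},\ \dots,\ \{t_{n-1},x_n,t_n\},\ \{t_n,z\},
\]
refined into introduce and forget nodes with empty end bags. Each $x_i$ is forgotten right after $t_i$ is introduced. Since $\depless{x_i}\subseteq\{x_1,\dots,x_n\}$ and all of these have already appeared, (P2) holds. Each $t_{i-1}$ is forgotten once $t_i$ is present. Only $t_{i-1}$ itself lies above it in $\Dtrv$, because $t$'s in the same block are incomparable, so (P1) holds. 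Variable $z$ is forgotten while $t_n$ is still in the bag, and here (P2) applies because all $x_j$ precede it. Finally $t_n$ is forgotten by (P1). The width is $2$. For the matching lower bound, the primal graph contains triangles $t_{i-1}x_it_i$, so its ordinary treewidth is already at least $2$.

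\textbf{Lower bound on $\dtw{\Phi,\Dtrv}$.} We use the elimination-ordering characterization of~\cite{EGO18}. Any $\Dtrv$-elimination ordering eliminates all $t_i$ first, then $z$, then the $x_j$. Let $t^\ast$ be the last $t$ eliminated. By the standard fill-in fact, eliminating a connected set turns its neighbourhood into a clique attached to the remaining vertices. The $t$'s induce a path whose neighbourhood is $\{x_1,\dots,x_n,z\}$. So when $t^\ast$ is eliminated it is adjacent to all $n+1$ of these vertices, and all of them lie to its right. Hence the width is at least $n+1$.

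\textbf{Lower bound on $\ppw{\Phi,\Dtrv}$.} This is the main obstacle. Rather than cops and robbers, we argue directly on a prefix path decomposition $(P_1,\dots,P_d)$. Let $t_i$ be the first $t$-variable forgotten, at bag $P=\rchi(\forget{t_i})$. By the prefix condition every $x_j$ and $z$ has been introduced by $P$. Any such variable already forgotten before $P$ is a forgotten non-$t$ vertex, and all its $t$-neighbours were introduced before it and are not yet forgotten, so they lie in $P$. A Hall-type matching count, mapping a forgotten $x_j$ to $t_j$ or $t_{j-1}$ and a forgotten $z$ to $t_n$ while avoiding collisions with $t_i$ itself, should show $|P|\ge n+1$, giving width at least $n$. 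The delicate part is the off-by-one when both $x_n$ and $z$ are forgotten. If the count falls short, we fall back on the cop-number bound $\dpw{D}\le k-1\Rightarrow\cn{D}\le k$ of~\cite{B06} on $D^{\Dtrv}_\Phi$. There the robber hides among the $t$'s and escapes into any unguarded $x_j$ or $z$, which reaches all $t$'s through the dependency arcs, and this forces $n+1$ cops.
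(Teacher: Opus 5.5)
Your bound $\btw{\Phi,\Dtrv}=2$ is correct and follows the paper. Your width-$2$ path is the paper's decomposition up to the order of the final introduce/forget steps, and you add the triangle argument for $\geq 2$, which the paper leaves implicit. Your fill-in argument for $\dtw{\Phi,\Dtrv}\geq n+1$ is also correct and is the paper's. The gap is $\ppw{\Phi,\Dtrv}\geq n$, which you do not actually prove.

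Your direct argument aims for $|P|\geq n+1$, where $P=\rchi(\forget{t_i})$ and $t_i$ is the first $t$-variable forgotten. This inequality is false in general. Take the prefix path decomposition that introduces all of $T=\{t_1,\dots,t_n\}$, then passes through $T\cup\{x_1\},\dots,T\cup\{x_n\},T\cup\{z\}$ (forgetting each non-$t$ variable in turn, $z$ included), and only then forgets the $t$'s. It is valid and has width $n$, yet there $P=T$ has just $n$ elements, so no matching count at $P$ can succeed.

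What works is a case split. Let $A$ be the set of $x_j$ and $z$ forgotten before $P$.
\begin{itemize}
\item The bipartite graph between $X\cup\{z\}$ and $T$ is the path $x_1t_1x_2t_2\cdots x_nt_nz$. So for every $w$, the set $(X\cup\{z\})\setminus\{w\}$ matches perfectly into $T$.
\item Hence $|N_T(A)|\geq|A|$ for every $A\neq X\cup\{z\}$, and $P\supseteq((X\cup\{z\})\setminus A)\cup N_T(A)$ already has $n+1$ elements. There is no need to keep $t_i$ out of the matching. The situation you flag, with $x_n$ and $z$ both forgotten, only causes trouble when all of $X\cup\{z\}$ is forgotten.
\item In that last case, use the bag at $\forget{w}$ for the last-forgotten $w\in X\cup\{z\}$ instead. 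Every $t_k$ shares a bag with the already forgotten $x_k$, and no $t$ has been forgotten yet, so this bag contains $w$ together with all of $T$.
\end{itemize}
With this repair, your direct route would be a valid and more elementary alternative to the paper's cops-and-robber proof.

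As written, however, you defer to the fallback, and that sketch establishes nothing: it gives no invariant and no count showing that $n$ cops fail. Its premise is also off, because the robber cannot keep hiding among the $t$'s. Those vertices have no outgoing dependency arcs, and $n$ cops can occupy all of $T$, which confines the robber to $X\cup\{z\}$.

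The paper's robber instead lives on $X\cup\{z\}$ (its $\mathcal{X}_n\cup\{u\}$), which has $n+1$ vertices. When a cop is announced on the robber's vertex, count the announced target as occupied and apply pigeonhole over the $n+1$ disjoint slots $\{x_i,t_i\}$ for $i\in[n]$ and $\{z\}$. This yields a cop-free slot. The robber reaches it via $x_r\to t_i\to x_i$ or $z\to t_i\to x_i$ (a dependency arc followed by a primal edge), or via $x_r\to z$.
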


 We establish Theorem~\ref{thm:parity} by showing that bilateral treewidth supersedes the previous notions on the well-studied QBF family $\qpar$, which is known to separate $\mathsf{Q}$-$\mathsf{Res}$ and $\forall\mathsf{Exp}+\mathsf{Res}$~\cite{BCJ15}.
 For $n \in \mathbb{N}$ with $n\geq 2$, the $n$-th formula in this family is defined as follows.
 \begin{align*}
    \qpar:=\exists x_1 \cdots x_n \forall u \exists z_n \cdots z_1 . \mathrm{eq}(x_1,z_1)  \land \mathrm{eq}(u, z_n)  \land \bigwedge_{i \in [n-1]}  \mathrm{xor}(z_{i+1},x_{i+1},z_i), 
\end{align*}
 where $\mathrm{eq}(\circ,\star) := (\circ \lor \overline{\star}) \land (\overline{\circ} \lor \star)$ and
 $\mathrm{xor}(z_{i+1},x_{i+1},z_i) :=  (\overline{z_{i+1}} \lor x_{i+1} \lor z_i) \land (z_{i+1} \lor \overline{x_{i+1}} \lor z_i) \land (z_{i+1} \lor x_{i+1} \lor \overline{z_i}) \land (\overline{z_{i+1}} \lor \overline{x_{i+1}} \lor \overline{z_i})$. We let $\mathcal{X}_n=\{x_1, \dots, x_n\}$ and $\mathcal{Z}_n=\{z_1, \dots, z_n\}$.

 As a visual reference, Figure~\ref{fig:parity} shows the primal graph of $\textsc{QParity}_5$. We proceed to prove each statement in Theorem~\ref{thm:parity} individually.
 \begin{figure}[h!]
\captionsetup[subfigure]{font=footnotesize, justification=centering}
\centering
\subcaptionbox{The primal graph $G_{\textsc{QParity}_5}\!.$\label{fig:parity}}[.5\textwidth]{ \begin{tikzpicture}
    \node[circle, fill, inner sep=0pt, minimum size=5pt, label=below:{$z_1$}](z1) at (0,0) {};
    \node[circle, fill, inner sep=0pt, minimum size=5pt, label=above:{$x_1$}](x1) at (0,1) {};
    \node[circle, fill, inner sep=0pt, minimum size=5pt, label=below:{$z_2$}](z2) at (1,0) {};
    \node[circle, fill, inner sep=0pt, minimum size=5pt, label=above:{$x_2$}](x2) at (1,1) {};
    \node[circle, fill, inner sep=0pt, minimum size=5pt, label=below:{$z_3$}](z3) at (2,0) {};
    \node[circle, fill, inner sep=0pt, minimum size=5pt, label=above:{$x_3$}](x3) at (2,1) {};
    \node[circle, fill, inner sep=0pt, minimum size=5pt, label=below:{$z_4$}](z4) at (3,0) {};
    \node[circle, fill, inner sep=0pt, minimum size=5pt, label=above:{$x_4$}](x4) at (3,1) {};
    \node[circle, fill, inner sep=0pt, minimum size=5pt, label=below:{$z_5$}](z5) at (4,0) {};
    \node[circle, fill, inner sep=0pt, minimum size=5pt, label=above:{$x_5$}](x5) at (4,1) {};
    \node[circle, fill, inner sep=0pt, minimum size=5pt, label=below:{$u$}](u) at (5,0) {};
    \draw (z1) -- (x1);
    \draw (z2) -- (x2);
    \draw (z3) -- (x3);
    \draw (z4) -- (x4);
    \draw (z5) -- (x5);
    \draw (z1) -- (z2);
    \draw (z2) -- (z3);
    \draw (z4) -- (z3);
    \draw (z4) -- (z5);
    \draw (x2) -- (z1);
    \draw (x3) -- (z2);
    \draw (x4) -- (z3);
    \draw (x5) -- (z4);
    \draw (u) -- (z5);
\end{tikzpicture}} \subcaptionbox{The graph $D_{\textsc{QParity}_5}^{\Dtrv}\!.$\label{fig:parity_dep}}[.5\textwidth]{\begin{tikzpicture}
\node[circle, fill, inner sep=0pt, minimum size=5pt, label=below:{$z_1$}](z1) at (0,0) {};
    \node[circle, fill, inner sep=0pt, minimum size=5pt, label=above:{$x_1$}](x1) at (0,1) {};
    \node[circle, fill, inner sep=0pt, minimum size=5pt, label=below:{$z_2$}](z2) at (1,0) {};
    \node[circle, fill, inner sep=0pt, minimum size=5pt, label=above:{$x_2$}](x2) at (1,1) {};
    \node[circle, fill, inner sep=0pt, minimum size=5pt, label=below:{$z_3$}](z3) at (2,0) {};
    \node[circle, fill, inner sep=0pt, minimum size=5pt, label=above:{$x_3$}](x3) at (2,1) {};
    \node[circle, fill, inner sep=0pt, minimum size=5pt, label=below:{$z_4$}](z4) at (3,0) {};
    \node[circle, fill, inner sep=0pt, minimum size=5pt, label=above:{$x_4$}](x4) at (3,1) {};
    \node[circle, fill, inner sep=0pt, minimum size=5pt, label=below:{$z_5$}](z5) at (4,0) {};
    \node[circle, fill, inner sep=0pt, minimum size=5pt, label=above:{$x_5$}](x5) at (4,1) {};
    \node[circle, fill, inner sep=0pt, minimum size=5pt, label=below:{$u$}](u) at (5,0) {};
    \draw (z1) -- (x1);
    \draw (z2) -- (x2);
    \draw (z3) -- (x3);
    \draw (z4) -- (x4);
    \draw (z5) -- (x5);
    \draw (z1) -- (z2);
    \draw (z2) -- (z3);
    \draw (z4) -- (z3);
    \draw (z4) -- (z5);
    \draw (x2) -- (z1);
    \draw (x3) -- (z2);
    \draw (x4) -- (z3);
    \draw (x5) -- (z4);
    \draw (u) -- (z5);
     \draw [-{Stealth}, red] (x1) -- (z2);
     \draw [-{Stealth}, red] (x1) -- (z3);
     \draw [-{Stealth}, red] (x1) -- (z4);
     \draw [-{Stealth}, red] (x1) -- (z5);
     \draw [-{Stealth}, red] (x1) -- (u);
     \draw [-{Stealth}, red] (x2) -- (z3);
     \draw [-{Stealth}, red] (x2) -- (z4);
     \draw [-{Stealth}, red] (x2) -- (z5);
     \draw [-{Stealth}, red] (x2) -- (u);
     \draw [-{Stealth}, red] (x3) -- (z1);
     \draw [-{Stealth}, red] (x3) -- (z4);
     \draw [-{Stealth}, red] (x3) -- (z5);
     \draw [-{Stealth}, red] (x3) -- (u);
     \draw [-{Stealth}, red] (x4) -- (z1);
     \draw [-{Stealth}, red] (x4) -- (z2);
     \draw [-{Stealth}, red] (x4) -- (z5);
     \draw [-{Stealth}, red] (x4) -- (u);
     \draw [-{Stealth}, red] (x5) -- (u);
     \draw [-{Stealth}, red] (x5) -- (z1);
     \draw [-{Stealth}, red] (x5) -- (z2);
     \draw [-{Stealth}, red] (x5) -- (z3);
     \draw [-{Stealth}, red, bend angle=45, bend left] (u) to (z1);
     \draw [-{Stealth}, red, bend angle=45, bend left] (u) to (z2);
     \draw [-{Stealth}, red, bend angle=45, bend left] (u) to (z3);
     \draw [-{Stealth}, red, bend angle=45, bend left] (u) to (z4);
     \draw [-{Stealth}, red, bend angle=45, bend left] (u) to (z5);
\end{tikzpicture}\vspace{-0.5cm}}
\subcaptionbox{A trunk-aligned tree decomposition of $\textsc{QParity}_n$ with respect to $\Dtrv$.\label{fig:parity_treedecomp}} {\begin{tikzpicture}
    \node[draw, inner sep=2.5pt, rounded corners](e1) at (-0.9,0) {$\emptyset$};
    \node[draw, inner sep=2.5pt, rounded corners](1) at (0.9-1,0) {$x_1$};
    \node[draw, inner sep=2.5pt, rounded corners](2) at (2.1-1.1,0) {$x_1, z_1$};
    \node[draw, inner sep=2.5pt, rounded corners](3) at (3.3-1.2,0) {$z_1$};
    \node[draw, inner sep=2.5pt, rounded corners](4) at (4.5-1.3,0) {$z_1,x_2$};
    \node[draw, inner sep=2.5pt, rounded corners](5) at (6.2-1.4,0) {$z_1,x_2,z_2$};
    \node[draw, inner sep=2.5pt, rounded corners](6) at (7.8-1.5,0) {$x_2,z_2$};
    \node[draw, inner sep=2.5pt, rounded corners](7) at (8.9-1.6,0) {$z_2$};
    \node[inner sep=2.5pt](8) at (9.7-1.7,0) {$\dots$};
    \node[draw, inner sep=2.5pt, rounded corners](9) at (10.9-1.7,0) {$x_n,z_n,u$};
    \node[draw, inner sep=2.5pt, rounded corners](10) at (12.3-1.7,0) {$z_n,u$};
    \node[draw, inner sep=2.5pt, rounded corners](11) at (13.2-1.7,0) {$u$};
    \node[draw, inner sep=2.5pt, rounded corners](e2) at (12.2,0) {$\emptyset$};
    \node[] at (0,0.8) {};
    \node[] at (0,-0.2) {};
    \draw (e1) -- (1) -- (2) -- (3) -- (4) -- (5) -- (6) -- (7) -- (8) -- (9) -- (10) -- (11) -- (e2);
\end{tikzpicture}}
\caption{Parity-related graphs and decompositions 
that are used in Lemmas~\ref{lem:dtw_qpar},~\ref{lem:ppw_parity}, and \ref{lem:btw_parity}.}
\end{figure}

     \begin{lemma}
\label{lem:dtw_qpar}
    For each $n \in \mathbb{N}$ with $n\geq 2$, we have $\dtw{\emph{\qpar}, \Dtrv}\geq n+1$.
\end{lemma}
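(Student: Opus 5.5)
We will use the characterization of dependency treewidth via $\D$-elimination orderings recalled above: $\dtw{\qpar, \Dtrv}$ equals the minimum width over all elimination orderings of $G_{\qpar}$ that are linear extensions of the reverse of $\Dtrv$. We then show that every such ordering has width at least $n+1$.

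Under $\Dtrv$, every variable quantified later must be eliminated before every variable quantified earlier. Since the prefix is $\exists \mathcal{X}_n \forall u \exists \mathcal{Z}_n$, any admissible ordering first eliminates all of $z_1,\dots,z_n$ in some order, then $u$, and finally the variables $x_1,\dots,x_n$ in some order. Let $z_j$ be the last $z$-variable eliminated. We track the fill-in graph while $\mathcal{Z}_n\setminus\{z_j\}$ is eliminated.

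The key invariant is the following. Consider the subgraph of $G_{\qpar}$ induced by the $z$-path $z_1 z_2\cdots z_n$, together with the pendant edges $x_1z_1$, $x_{i+1}z_{i+1}$ and $x_{i+1}z_i$, and the edge $z_n u$. This subgraph is connected through the $z$-vertices. When all $z$-vertices except $z_j$ are eliminated, every vertex of $\mathcal{X}_n\cup\{u\}$ must become adjacent to $z_j$ in the fill-in graph.

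To see this, take any non-$z$ vertex $w$ (some $x_i$ or $u$). In $G_{\qpar}$ there is a path from $w$ to $z_j$ whose interior vertices all lie in $\mathcal{Z}_n\setminus\{z_j\}$: step to the adjacent $z$-vertex, then walk along the $z$-path. By the standard fill-in lemma, if two vertices are joined by a path whose interior vertices are all eliminated before both endpoints, they become adjacent in the fill-in graph. Both $w$ and $z_j$ are eliminated after all interior vertices, so the lemma applies.

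Consequently, at the moment $z_j$ is eliminated, its neighbours to the right include all of $\mathcal{X}_n\cup\{u\}$. That is $n+1$ vertices, so the width is at least $n+1$. Minimizing over all admissible orderings gives $\dtw{\qpar,\Dtrv}\geq n+1$.

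The main obstacle is stating the fill-in path lemma cleanly, since the paper only defines fill-in graphs and does not state this lemma. We will prove it by induction on the number of interior vertices. Eliminating the earliest interior vertex $v_k$ makes its two path-neighbours adjacent, because both lie to its right. This shortens the path by one, and the remaining interior vertices are still all eliminated before both endpoints.

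We also need to check that $x_1$ reaches the $z$-path via $z_1$, that $u$ reaches it via $z_n$, and that the path from $w$ avoids $z_j$ except as its endpoint. Both hold because the $z$-vertices form a path. The boundary cases $j=1$ and $j=n$ need no separate treatment, since the argument only requires the $z$-path to be connected.
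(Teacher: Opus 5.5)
Your proposal is correct and takes the same route as the paper. You restrict to $\Dtrv$-elimination orderings, which must eliminate $\mathcal{Z}_n$, then $u$, then $\mathcal{X}_n$, and observe that the last-eliminated $z_j$ acquires all of $\mathcal{X}_n\cup\{u\}$ as right-neighbours in the fill-in graph. Your fill-in path lemma (induct on paths in $G_{\qpar}^{\V}$, which contains $G_{\qpar}$, by contracting the earliest interior vertex) is a rigorous version of the paper's informal remark that each eliminated $z$-vertex passes its current neighbours on to the remaining $z$-vertices.
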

\begin{proof}
Note that for each $i\in [n]$, we have $x_i \preceq_{\D_{\mathrm{trv}}}u$ and $u \preceq_{\D_{\mathrm{trv}}} z_i$. 
Since a $\Dtrv$-elimination ordering must be a linear extension of the reverse of $\D_{\mathrm{trv}}$, we know every such ordering $\mathcal{V}$ must consist of a sequence of the variables in $\mathcal{Z}_n$, followed by $u$, followed by a sequence of the variables in $\mathcal{X}_n$. Since $\mathcal{Z}_n$ is connected and its neighborhood in $G_{\qpar}$ is $\mathcal{X}_n\cup \{u\}$, the last variable eliminated from $\mathcal{Z}_n$ must be a neighbor of every variable of $\mathcal{X}_n\cup \{u\}$ in the fill-in graph $G_{\qpar}^\mathcal{V}\!.$ To see this, note that by the definition of $G_{\qpar}^\mathcal{V}$, after each elimination of a variable in $\mathcal{Z}_n$, all of its current neighbors are transferred to its not-yet-eliminated neighbors in $\mathcal{Z}_n$.

In particular, the width of $\mathcal{V}$ is at least $n+1$, and hence $\dtw{\qpar, \Dtrv}\geq n+1$.
\end{proof}

   \begin{lemma}\label{lem:ppw_parity}
    For each $n \in \mathbb{N}$ with $n \geq 2$, we have $\ppw{\emph{\qpar}, \Dtrv}\geq n$.
\end{lemma}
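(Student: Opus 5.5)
We will use the characterization $\ppw{\qpar,\Dtrv}=\dpw{\qpar,\Dtrv}$, i.e.\ the directed pathwidth of $D:=D^{\Dtrv}_{\qpar}$ (Figure~\ref{fig:parity_dep}), together with the fact recalled at the end of Section~\ref{sec:prelims} that $\dpw{D}\leq k-1$ implies $\cn{D}\leq k$. It therefore suffices to show $\cn{D}\geq n+1$: we exhibit a strategy for the invisible robber that escapes forever against $n$ cops. Then $\dpw{D}\leq n-1$ is impossible, so $\ppw{\qpar,\Dtrv}=\dpw{D}\geq n$.

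First we record the arcs of $D$ that we need. Under $\Dtrv$ the variables $x_i$ all lie in the first block, $u$ in the second and the $z_j$ in the third. Hence there are arcs $x_i\to u$ for all $i$ and $u\to z_j$ for all $j$, but no dependency arcs inside $\mathcal{X}_n$ or inside $\mathcal{Z}_n$. The primal edges give arcs in both directions along the path $z_1z_2\cdots z_n u$ and between $z_i$ and $x_i$, and between $z_i$ and $x_{i+1}$. Consequently every vertex of $\mathcal{Z}_n$ reaches every vertex of $\mathcal{X}_n\cup\mathcal{Z}_n\cup\{u\}$ through unoccupied vertices, as long as the path $z_1\cdots z_n$ together with its pendant $x$-vertices is not cut by cops. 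Moreover, any $x_i$ reaches all of $\mathcal{Z}_n$ via $u$ whenever $u$ is free.

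Since the robber is invisible, we argue with the \emph{contaminated set} $C_t$ of vertices the robber could occupy after round $t$; the robber escapes iff $C_t\neq\emptyset$ forever. Initially $C_0=V(D)$. We will prove the following invariant against $n$ cops: $C_t$ always contains $u$ or some $z_j$. Call a round \emph{critical} if it is the first round after which $u\notin C_t$. Our plan is as follows.
\begin{itemize}
\item Just before a critical round, $u$ must be occupied by a cop, or be shielded because all in-neighbours of $u$ (namely all $x_i$ and $z_n$) are clean or occupied.
\item Since $u\to z_j$ for every $j$, the vertex $u$ would be recontaminated as soon as any $x_i$ or $z_n$ becomes contaminated while $u$ is unguarded.
\item So to keep $u$ clean from then on, the cops must at some moment simultaneously guard $u$ (or every contaminated entry to it) and all $z$-vertices that are still contaminated.
\end{itemize}

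The central counting step, which we expect to be the main obstacle, concerns the moment the last $z$-vertex is cleaned. At that moment, the connected structure $z_1\cdots z_n$ with pendant $x$'s forces the cops to hold a vertex cut separating the cleaned part from the contaminated part. Because $u$ feeds into \emph{all} $z_j$, a cleaned $z_j$ can only stay clean if either $u$ is clean or guarded, or $z_j$ itself is guarded. We intend to show that the cops can never clean $u$ and all of $\mathcal{Z}_n$ together while only $n$ of the $2n+1$ vertices are guarded. The idea is that the last of $x_1,\dots,x_n,u$ to be cleaned still has arcs into all remaining $z$'s and back. The concrete plan is to take the first round in which no vertex of $\mathcal{Z}_n$ is contaminated. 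Since the robber moves through arbitrarily long unguarded paths, every $z_j$ must be guarded at that round, unless $u$ and all $x_i$ adjacent to unguarded $z$'s are already clean. We then derive, by a counting argument on the $n$ pairs $\{x_i,z_i\}$ together with $u$, that some pair has neither element guarded while contamination persists. This gives the contradiction with $n$ cops. If this bookkeeping proves unwieldy, the fallback is to argue directly on a hypothetical directed path decomposition of width $n-1$. There we would look at the bag at index $\min_j \mathrm{last}(z_j)$ and show, using $\mathrm{first}(u)\leq \mathrm{last}(z_j)$ and $\mathrm{first}(x_i)\leq\mathrm{last}(u)$ together with the undirected edges, that this bag must contain $n+1$ vertices.
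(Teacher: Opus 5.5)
Your reduction is the same as the paper's: by $\ppw{\qpar,\Dtrv}=\dpw{D}$ and $\dpw{D}\le n-1\Rightarrow\cn{D}\le n$, it suffices to show that $n$ cops cannot capture the robber. Everything after that, however, is a plan rather than a proof. The step you yourself call central (``we intend to show'', ``we then derive, by a counting argument'') is never carried out. The scaffolding around it (critical rounds, vertex cuts, the first round without a contaminated $z_j$) also rests on an incomplete picture of $D$. Since $\Dtrv$ is transitive, $x_i\preceq_{\Dtrv} z_j$ for \emph{all} $i,j$, so $D$ contains every arc $x_i\to z_j$; they are drawn in the figure. Hence $x_i$ does not need $u$ to reach $\mathcal{Z}_n$, and each $z_j$ can be recontaminated from every $x_i$, not only from $u$ and adjacent $x$'s. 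Conversely, your remark that the last vertex of $\mathcal{X}_n\cup\{u\}$ has arcs to all remaining $z$'s ``and back'' is false: only $z_i$ and $z_{i-1}$ point to $x_i$, and only $z_n$ points to $u$. This is also why your invariant ``$C_t$ contains $u$ or some $z_j$'' cannot be pushed forward on its own. A contaminated set that has shrunk to a single $z_j$ is trapped as soon as cops sit on its at most four out-neighbours.

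The missing idea is a short pigeonhole argument. Split the vertices into the $n+1$ disjoint slots $\{x_1,z_1\},\dots,\{x_n,z_n\},\{u\}$. From any vertex of $\mathcal{X}_n\cup\{u\}$ the robber reaches $u$ in at most one step. It reaches $x_i$ in at most two steps via $z_i$, using $x_j\to z_i$ or $u\to z_i$ and then $z_i\to x_i$. During a robber move at most $n-1$ cops are on the board, so at least two slots are entirely cop-free. The landing cop can block only one of the two corresponding vertices of $\mathcal{X}_n\cup\{u\}$, so the robber escapes to the other. The correct inductive invariant is therefore that $C_t$ contains a cop-free vertex of $\mathcal{X}_n\cup\{u\}$, which is exactly the paper's robber strategy.

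Your fallback also fails as stated, because the bag at $\min_j\mathrm{last}(z_j)$ need not contain $u$. For $n=2$, $(\{x_1,z_1\},\{x_2,z_1,z_2\},\{u,z_1,z_2\},\{z_1,z_2\})$ is a directed path decomposition of $D$ whose bag at that index has only two vertices. The fallback does work, and then gives a game-free proof, if you take $m=\min(\mathrm{last}(u),\min_j\mathrm{last}(z_j))$ and also use $\mathrm{first}(x_i)\le\mathrm{last}(z_j)$. Then $u\in P_m$, and every $x_i$ has started by $m$. For each $i$, either $x_i\in P_m$, or $x_i$ ends before $m$; in the latter case the interval of $z_i$ meets that of $x_i$ and so contains $m$. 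Hence $|P_m|\ge n+1$.
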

\begin{proof}
    Due the characterization of prefix pathwidth via directed pathwidth, it suffices to show $\dpw{\qpar,\Dtrv}\geq n$ for each $n \in \mathbb{N}$ with $n \geq 2$. An illustration of $D_{\qpar}^{\Dtrv}$ for $n=5$ is provided in Figure~\ref{fig:parity_dep}, where undirected edges indicate two anti-parallel directed edges. Recalling the cops-and-robber characterization of directed pathwidth that was outlined in Section~\ref{sec:prelims}, we prove the statement by showing that $\cn{D_{\qpar}^{\Dtrv}}>n$, that is, $n$ cops do not suffice to catch the robber on~$D_{\qpar}^{\Dtrv}\!.$

        Consider the following strategy for the robber: Upon the initialization of the game, the robber selects an arbitrary vertex in $ \mathcal{X}_n$; let us call this vertex $x_{\mathrm{robber}}$. Let the cops choose $n$ vertices that they will occupy. Now the robber plays as follows: If none of the cops occupies the vertex $x_{\mathrm{robber}}$, then the robber simply stays there. If a cop moves to the vertex $x_{\mathrm{robber}}$, then, either (a) there exists some $i \in [n]$ such that neither $x_i$ nor $z_i$ is occupied by any cop or (b) by the pigeon-hole principle, the vertex $u$ is not occupied by any cop.
    By construction of $D_{\qpar}^{\Dtrv}$, there is a directed edge from $x_{\mathrm{robber}}$ to $z_i$, from $z_i$ to $x_i$, and from $x_{\mathrm{robber}}$ to $u$. Consequently, in Case (a), the robber can move from $x_{\mathrm{robber}}$ to $x_i$, which now becomes the new $x_{\mathrm{robber}}$. In Case (b), the robber moves to $u$ and stays there until $u$ gets occupied by some cop. Then, analogously to Case (a), the robber can transition to a non-occupied $x_i$ via a non-occupied $z_i$, since $(u,z_i), (z_i, x_i) \in E(D_{\qpar}^{\Dtrv})$.
    A robber that plays according to this strategy 
     stays on a vertex $ v \in\mathcal{X}_n \cup \{u\}$ at all times and is never caught. 
\end{proof}
Having shown that on $\qpar$, both the dependency treewidth and the prefix pathwidth with respect to $\Dtrv$ are high, we conclude the proof of Theorem~\ref{thm:parity} by showing that the respective bilateral treewidth is a small constant.
   \begin{lemma}\label{lem:btw_parity}
    For each $n \in \mathbb{N}$ with $n \geq 2$, we have $\btw{\emph{\qpar}, \Dtrv}=2$.
\end{lemma}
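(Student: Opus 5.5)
The upper bound $\btw{\qpar,\Dtrv}\leq 2$ will come from the path decomposition in Figure~\ref{fig:parity_treedecomp}, with its trunk taken to be the entire path. The matching lower bound will come from the ordinary treewidth of the primal graph.

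\textbf{Lower bound.} Every trunk-aligned tree decomposition is in particular a tree decomposition of $G_{\qpar}$. The clauses of $\mathrm{xor}(z_2,x_2,z_1)$ make $\{z_1,x_2,z_2\}$ a triangle in $G_{\qpar}$. A graph containing a cycle has treewidth at least $2$, so some bag has size at least $3$ and $\btw{\qpar,\Dtrv}\geq 2$.

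\textbf{The decomposition.} I will write the path explicitly as the bag sequence
\[
\emptyset,\{x_1\},\{x_1,z_1\},\{z_1\},\{z_1,x_2\},\{z_1,x_2,z_2\},\{x_2,z_2\},\{z_2\},\dots,\{z_{n-1},x_n,z_n\},\{x_n,z_n\},\{x_n,z_n,u\},\{z_n,u\},\{u\},\emptyset.
\]
Here the root is the final empty bag. This is a nice path decomposition: consecutive bags differ in exactly one variable, and the end bags are empty. Each variable occupies a contiguous interval, so (T2) holds. For (T1):
\begin{itemize}
\item $\mathrm{eq}(x_1,z_1)$ is covered by the bag $\{x_1,z_1\}$.
\item Each $\mathrm{xor}(z_{i+1},x_{i+1},z_i)$ is covered by the bag $\{z_i,x_{i+1},z_{i+1}\}$.
\item $\mathrm{eq}(u,z_n)$ is covered by the bag $\{x_n,z_n,u\}$.
\end{itemize}
The maximum bag size is $3$, so the width is $2$.

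\textbf{Trunk alignment.} Since the trunk is the whole path, every forget node lies on $\trunk$. For each variable $w$ I therefore only need (P1) or the second half of (P2): every variable in $\depless{w}$ already lies in $\mathbf{T}_{\leq\forget{w}}$, i.e.\ has appeared in a bag at or before $\forget{w}$. Under $\Dtrv$, $\depless{w}$ is $w$ together with all variables quantified strictly to the left of $w$. The forget order along the path is $x_1,z_1,x_2,z_2,\dots,x_n,z_n,u$.
\begin{itemize}
\item For $x_i$: the block $\exists x_1\cdots x_n$ is outermost, so $\depless{x_i}=\{x_i\}$ and (P2) holds trivially.
\item For $u$: $\depless{u}=\mathcal{X}_n\cup\{u\}$, and all of these appear before $u$ is forgotten, so (P2) holds.
\item For $z_n$: $\depless{z_n}=\mathcal{X}_n\cup\{u,z_n\}$. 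The node $\forget{z_n}$ is the bag $\{z_n,u\}$, by which point all of $\mathcal{X}_n$ and $u$ have appeared, so (P2) holds.
\item For $z_i$ with $i<n$: (P2) fails, because $u$ has not yet appeared. So I verify (P1) instead. The variables that depend on $z_i$ are those quantified to its right, namely $z_{i-1},\dots,z_1$. The node $\forget{z_i}$ is the bag $\{z_i,x_{i+1},z_{i+1}\}$, which contains none of them, so (P1) holds.
\end{itemize}

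The main point of care is this last case. One must identify the forget node of $z_i$ exactly and read the direction of $\depless{\cdot}$ correctly: the variables $v$ with $z_i\in\depless{v}$ lie to the right of $z_i$ in the prefix, and those are exactly the variables already forgotten. This mixed use of (P1) for the inner $z_i$ and (P2) for $z_n$ and $u$ is precisely what neither dependency treewidth nor prefix pathwidth permits on its own. Combining the lower and upper bounds gives $\btw{\qpar,\Dtrv}=2$.
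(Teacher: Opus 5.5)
Your proof is correct and is essentially the paper's: the same path decomposition from Figure~\ref{fig:parity_treedecomp} taken as the entire trunk, with (P2) checked for $\mathcal{X}_n$ and $u$ and (P1) for the inner $z_i$; you also supply the triangle-based lower bound $\btw{\qpar,\Dtrv}\geq 2$, which the paper leaves implicit. One small inconsistency: you treat the $x$-block as unordered (so $\depless{x_i}=\{x_i\}$) but read the $z$-block linearly (so $z_{i-1},\dots,z_1$ depend on $z_i$), whereas under the paper's block semantics no $z_j$ depends on $z_i$; since your list of dependents is a superset, (P1) still holds under either reading.
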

\begin{proof}
    Figure~\ref{fig:parity_treedecomp} shows a trunk-aligned tree decomposition $\tree$ of $\qpar$ with respect to~$\Dtrv$ of width $2$. The decomposition consists of a single path, the trunk $\trunk$, that is ordered from left to right. Let us now verify that $\tree$ is indeed a trunk-aligned tree decomposition. It is not hard to see that the Properties~\ref{itm:T1} to \ref{itm:T4} of regular tree decompositions are all satisfied by $\tree$. For each variable $v \in \mathcal{X}_n $, we have that all variables that $v$ depends on are forgotten ``below'' the node in which $v$ is forgotten---hence all variables in $\mathcal{X}_n$ satisfy \ref{itm:P2}. For each variable $v \in \mathcal{Z}_n $, we have that all variables that depend on $v$ avoid the node in which $v$ is forgotten---hence all variables in $\mathcal{Z}_n$ satisfy \ref{itm:P1}. Finally, $u$ satisfies both \ref{itm:P1} and \ref{itm:P2}, since all other variables are forgotten ``below'' the node in which $u$ is forgotten.
     \end{proof}

 \section{Using Trunk-Aligned Tree Decompositions to Solve QBF}\label{sec:fpt_proof}
 In this section, we show how the decompositions underlying bilateral treewidth yield fixed-parameter tractability for $\qsat$. More precisely, we establish:
 
\begin{theorem}\label{thm:fpt_btw}
There is a computable function~$f$ and an algorithm that given
    \begin{enumerate}
        \item a QBF $\Phi = \Q. \phi$,
        \item a dependency poset $\D$, and
        \item a trunk-aligned tree decomposition $\tree=(\mathbf{T}, \rchi)$ of $(\Phi, \D)$ with width $\btw{\Phi, \D}$
    \end{enumerate}
    decides whether $\Phi$ is true in time $f(\btw{\Phi, \D})\cdot |\Phi|\cdot |\tree|$.
\end{theorem}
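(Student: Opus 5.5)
We process the trunk-aligned decomposition bottom-up and maintain, for each node $t$, a representation of the QBFs that remain after eliminating all forgotten variables of $\mathbf{T}_{<t}$. Off the trunk, at a node $t$ not on $\trunk$, we keep a single CNF $\phi^t$ over $\rchi(t)$. It arises from the clauses of $\phi$ whose variables lie in $\mathbf{T}_{\leq t}$ by eliminating every variable forgotten below $t$, using resolution (Lemma~\ref{lem:res_correctness}) for existential variables and reduction (Lemma~\ref{lem:red_correctness}) for universal ones. Since no tautologies are ever kept, $\phi^t$ is a set of non-tautological clauses over at most $k+1$ variables, where $k=\btw{\Phi,\D}$, so $|\phi^t|\le 3^{k+1}$. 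Along the trunk we instead keep a \emph{set of sets} $\Pi^t$ of matrices over $\rchi(t)$ (plus the not-yet-introduced variables, handled implicitly). Each $\pi\in\Pi^t$ represents one branch of the strategy-extension disjunction, and the invariant is this: $\Phi$ is true iff some $\pi\in\Pi^t$ has all its QBFs (with prefix restricted to the remaining variables, conjoined with the clauses still to be introduced) true. Distinct matrices over $\rchi(t)$ number at most $2^{3^{k+1}}$, so after deduplication both $|\pi|$ and $|\Pi^t|$ are bounded by a function of $k$.

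The steps at the nodes are as follows. At an introduce node we add the original clauses of $\phi$ that become fully covered; we attach each clause at the node where its last variable is introduced, which costs $\bigoh(|\Phi|)$ per node overall. At a join node off the trunk we take the union of the two child formulas. At a join node on the trunk, one child is off the trunk, and we add its single formula to every matrix of every $\pi$. This is sound because that subtree's remaining clauses mention only bag variables; semantically it is conjunction with a common part. At a forget node of $u$, we case-split on the trunk-aligned property. If \ref{itm:P1} holds, then no variable depending on $u$ occurs together with $u$ in any clause. The reason is that any such clause has all its variables in some bag containing $u$, and a variable depending on $u$ would then, by \ref{itm:T2}, have to lie in $\rchi(\forget{u})$ as well, since it is still alive after its co-occurrence and $u$ is forgotten there. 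This needs an argument that the surviving derived clauses stay within bags, which holds because every derived clause lives in the current bag. Hence the side conditions of Lemma~\ref{lem:res_correctness} or Lemma~\ref{lem:red_correctness} are met, and we apply resolution or reduction to every maintained matrix. Otherwise \ref{itm:P2} holds, $\forget{u}$ lies on the trunk, and all of $\depless{u}$ has already been introduced. We then apply strategy extension up to $u$ to each $\pi\in\Pi^t$ (Lemma~\ref{lem:ext_correctness}) and take the union over $\pi$ of the resulting sets.

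The main obstacle is this strategy-extension step, since $\depless{u}$ can contain many variables that are already forgotten and $\B$, $\A$ may be huge. I would handle it by observing that the variables of $\depless{u}$ forgotten earlier no longer occur in any maintained matrix. Restricting the prefix to the current bag therefore leaves at most $k+1$ relevant variables, and strategies and universal assignments need only range over those. What matters is that the variables of $\depless{u}\cap\rchi(t)$ are all eliminated at once. Perhaps it is simpler to argue that $\strext{\pi}{u}$ coincides, after deduplication, with the extension over the bag-variables of $\depless{u}$, both semantically and in output. The resulting $\Ext$ thus has size at most $f(k)$. A second subtlety is consistency: variables of $\depless{u}$ that are still in the bag get eliminated by this step, so later forget nodes for them must be skipped, or treated as no-ops on formulas not containing them. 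Lemma~\ref{lem:ext_correctness} guarantees they no longer appear.

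At the root the bag is empty. Each maintained matrix is then either empty (true) or contains $\bot$ (false), and we accept iff some $\pi\in\Pi^{\rt}$ consists solely of empty matrices. Correctness follows by induction along the decomposition using Lemmas~\ref{lem:res_correctness}, \ref{lem:red_correctness} and \ref{lem:ext_correctness}, together with the fact that permuting to a linear extension of $\D$ preserves truth. Each node costs $f(k)$ time plus $\bigoh(|\Phi|)$ for clause bookkeeping, which gives $f(k)\cdot|\Phi|\cdot|\tree|$.
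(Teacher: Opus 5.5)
Your bottom-up scheme has a genuine gap where strategy extension meets clauses that reach the trunk later. When you apply $\strext{\pi}{u}$ at $\forget{u}$, every $w\in\depless{u}\cap\rchi(\forget{u})$ gets fixed. Such a $w\neq u$ is still in the parent bag, so it can occur in clauses you only add afterwards. These are original clauses that become covered at a later introduce node, and the formula $\phi^{t_2}$ of an off-trunk child that you conjoin at a later trunk join, whose bag may contain $w$.

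Concretely, take $\exists w\,\forall u\,\exists z\,y$ with $\Dtrv$, clauses $\{w,u,z\}$ and $\{w,y\}$, and the path decomposition with bags $\emptyset,\{w\},\{w,u\},\{w,u,z\},\{w,z\},\{w,z,y\},\{w,y\},\{y\},\emptyset$. It is trunk-aligned of width $2$. Since $u$ violates \ref{itm:P1} because of $z$, you extend strategies up to $u$ and fix $w$, yet $\{w,y\}$ only enters when $y$ is introduced. In branches with $w=0$ it must become $\{y\}$, and in those with $w=1$ it must vanish. Your state, a set of matrices over the remaining bag variables, does not record which value a branch gave to $w$. Your deduplication may even merge branches that agree locally but differ on $w$, and for the same reason the formula added at a trunk join is not a ``common part'' of all branches.

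Both ways of handling such a clause fail:
\begin{itemize}
\item Adding it unchanged brings back a variable that has left the prefix. Skipping the forget node of $w$ then leaves $w$-literals in the matrices at the root, so your acceptance test is ill-defined.
\item Treating $w$ as quantified again lets the existential player choose it twice, which is unsound.
\end{itemize}
Lemma~\ref{lem:ext_correctness} does not help here. In the paper, strategy extension acts on \emph{global} matrices that already contain every clause of $\phi$. In your DP, the clauses not yet added are outside the lemma's scope.

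The missing idea is to make the state of a branch a pair. Its first component is the matrix over the bag variables not yet eliminated. Its second component is the assignment that strategy extension has fixed on the bag variables it eliminated. You then restrict every newly covered clause, and every off-trunk formula conjoined at a trunk join, by this assignment, and deduplicate on pairs. There are at most $2^{3^{k+1}}\cdot 2^{k+1}$ such pairs, which is exactly the data $(U_\psi,\alpha_1')$ the paper uses in Lemma~\ref{lem:equiv_formulas}.

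With this repair your DP simulates the paper's derivation sequence for a linear extension of $\lesst$ in which all off-trunk forget nodes precede all trunk forget nodes. Off the trunk only \ref{itm:P1} can apply, so that phase is just resolution and reduction on a single formula. The repaired DP is then a valid local implementation of the paper's algorithm. Its explicit bookkeeping of bag states replaces the paper's global matrices and the separation argument of Claims~\ref{cl:separation} and~\ref{cl:equiv}.
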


Below, we establish several smaller lemmas that will set up the proof of Theorem~\ref{thm:fpt_btw}. 
Broadly speaking, we will follow the following proof strategy. Based on a given trunk-aligned tree decomposition~$\tree$, we define a proof-like structure that maintains sets of sets of QBFs and that we will refer to as the \emph{derivation sequence}~$\Pi$. We then prove that $\Pi$ satisfies specific properties, namely: the variable-co-occurrences in QBFs within $\Pi$ can be tightly linked to the content of bags in $\tree$ (Lemma~\ref{lem:bounded_neighborhood}) and $\Pi$ preserves the truth values of the QBFs it maintains (Lemma~\ref{lem:main_invariant}), while gradually eliminating all variables from
\iflong
them (Observation~\ref{obs:total_elim}).
\fi
\ifshort
them.
\fi
Moreover, the number of QBFs that is maintained in parallel at each time step is bounded by a function in the bilateral treewidth of $\Phi$ (Lemma~\ref{lem:equiv_formulas}) from which we will infer that $\Pi$ can be efficiently computed (Lemma~\ref{lem:rules_efficient}). The proof of Theorem~\ref{thm:fpt_btw} then concludes this section.
       In order to apply Lemmas~\ref{lem:res_correctness},~\ref{lem:red_correctness} and~\ref{lem:ext_correctness},
we will assume that the given QBF~$\qbf$ does not contain any tautological clauses; this is without loss of generality as such clauses can be removed in a pre-processing step without altering the truth value of~$\Phi$.

As a first step towards obtaining a derivation sequence, our proposed algorithm takes the partial order $\lesst$ imposed by the trunk-aligned tree decomposition $\tree$ and extends it to a full linear order~$<_{\mathbf{T}'}$, that is, for each two distinct $t,t' \in V(\mathbf{T})$ we have $t <_{\mathbf{T}'} t'$ or $t' <_{\mathbf{T}'} t$, and if $t <_{\mathbf{T}} t'$, then also $t <_{\mathbf{T}'} t'$. Now, let $\V=(v_1, \dots, v_n)$ be the sequence of variables in $\var{\Phi}$ such that $\forget{v_1} \lesstp \forget{v_2} \lesstp \cdots \lesstp \forget{v_n}$. We call $\V$ an \emph{elimination ordering of $\var{\phi}$ that agrees with $\tree$}.

Based on the elimination ordering $\V=(v_1, \dots, v_n)$, our algorithm constructs a \emph{derivation sequence} $\Pi = \Pi_0, \Pi_1, \dots, \Pi_n$ of $\qbf$, where each $\Pi_i=(\Q^{(i)}, \mathcal{F}^{(i)})$ is a tuple that consists of a prefix $\Q^{(i)}$ and a set $\mathcal{F}^{(i)}$ of sets of matrices as follows. We initialize $\Q^{(0)}=\Q$ and $\mathcal{F}^{(0)}=\{\{\phi\}\}$. For each $i \in [n]$, we compute $\Q^{(i)}$ and $\mathcal{F}^{(i)}$ according to the following rules:
    \begin{description}[labelwidth=22pt,leftmargin=33pt]
        \item[\namedlabel{itm:R1}{(R1)}] If $v_i \notin \var{\Q^{(i-1)}}$, then set $\Q^{(i)}=\Q^{(i-1)}$ and $\mathcal{F}^{(i)}=\mathcal{F}^{(i-1)}$.          \item[\namedlabel{itm:R2}{(R2)}] If $v_i \in \vare{\Q^{(i-1)}}$ and $\{v \in \var{\Q^{(i-1)}} \mid v_i \dep v\} \cap \rchi(\forget{v_i})=\emptyset$, then for each set $\pi \in \mathcal{F}^{(i-1)}$ of matrices, we construct $\pi' \in \mathcal{F}^{(i)}$ such that $\pi'$ contains the matrix $\res{\psi}{v_i}$ if and only if $\pi$ contains the matrix $\psi$. Moreover, we set $\Q^{(i)}=\Q^{(i-1)}\setminus \{v_i\}$.        
        \item[\namedlabel{itm:R3}{(R3)}] If $v_i \in \vara{\Q^{(i-1)}}$ and $\{v \in \var{\Q^{(i-1)}} \mid v_i \dep v\} \cap \rchi(\forget{v_i})=\emptyset$, then for each set $\pi \in \mathcal{F}^{(i-1)}$ of matrices, we construct $\pi' \in \mathcal{F}^{(i)}$ such that $\pi'$ contains the matrix $\red{\psi}{v_i}$ if and only if $\pi$ contains the matrix $\psi$. We set $\Q^{(i)}=\Q^{(i-1)}\setminus \{v_i\}$.
        \item[\namedlabel{itm:R4}{(R4)}] Otherwise, we let $\mathcal{F}^{(i)}=\bigcup_{\pi \in \mathcal{F}^{(i-1)}} \strext{\pi}{v_i}$ and $\Q^{(i)}=\Q^{(i-1)}\setminus \depless{v_i}$.
    \end{description}
Intuitively, we alter sets of matrices by gradually eliminating all variables in the order specified by $\V$. Variables are eliminated either by applying the resolution rule to each QBF (Rule~\ref{itm:R2}), by applying the reduction rule to each QBF (Rule~\ref{itm:R3}), or by performing strategy extension on each current set of matrices (Rule~\ref{itm:R4}). Rule~\ref{itm:R1} merely deals with a case that may arise from the earlier application of strategy extension.

To prove the correctness of the Rules \ref{itm:R2} and \ref{itm:R3}, we crucially rely on the fact that whenever a variable~$v_i$ is eliminated by this rule, all variables that it is neighbored to in any matrix $\psi \in \pi$ with $\pi \in \mathcal{F}^{(i-1)}$ are contained in $\rchi(\forget{v_i})$. We prove this statement in the following.

\begin{lemma}
\label{lem:bounded_neighborhood}
    Let $\qbf$ be a QBF, $\tree=(\mathbf{T}, \rchi)$ be a trunk-aligned tree decomposition of $(\Phi, \D)$ with width $k=\btw{\Phi, \D}$ and $\V$ be an elimination ordering that agrees with $\tree$. Further, let $\Pi=\Pi_0, \Pi_1, \dots, \Pi_n$ be a derivation sequence of $\Phi$ obtained by the rules above.
    Then for each $i \in [n]$, we have that $N_{G_{\psi}}(v_i) \subseteq \rchi(\forget{v_{i}})$ for every matrix $\psi \in \pi$ in every set $\pi \in \mathcal{F}^{(i-1)}$, where $N_{G_{\psi}}(v_i)$ is the open neighborhood of $v_i$ in the primal graph of $\psi$.
     \end{lemma}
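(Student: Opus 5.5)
The plan is to prove a stronger invariant by induction on $i$, namely the classical ``fill-in'' property of elimination orderings lifted to derivation sequences:
\begin{quote}
for every $i\in\{0,\dots,n\}$, every $\pi\in\mathcal{F}^{(i)}$, every $\psi\in\pi$ and every edge $\{a,b\}$ of $G_{\psi}$, the edge $\{a,b\}$ lies in $G_{\Phi}^{\V}$, and $a,b\notin\{v_1,\dots,v_i\}$.
\end{quote}
Here $G_{\Phi}^{\V}$ is the fill-in graph of the elimination ordering $\V$.

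\textbf{From the invariant to the lemma.} This uses the standard fact that, for a $\V$ obtained from forget nodes of a tree decomposition, every fill-in neighbour of $v_i$ to its right lies in $\rchi(\forget{v_i})$. I would first prove this auxiliary claim by induction along $\V$. Original edges $\{v_i,w\}$ with $w$ later lie in some common bag. By \ref{itm:T2} and the fact that $\forget{w}$ is not below $\forget{v_i}$ in $\lesstp$, the subtree of $w$ must meet the path from that bag upward through $\forget{v_i}$. Hence $w\in\rchi(\forget{v_i})$.

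A fill edge $\{v_i,w\}$ is created when eliminating some $v_k$ ($k<i$) adjacent to both. Then $v_i,w\in\rchi(\forget{v_k})$, which lies below $\forget{v_i}$ and $\forget{w}$ in $\mathbf{T}$. The subtrees of $v_i$ and $w$ therefore both contain the path from $\forget{v_k}$ to the root-most of their forget nodes, so again $w\in\rchi(\forget{v_i})$. Combining this with the invariant at step $i-1$: any neighbour of $v_i$ in $\psi\in\pi\in\mathcal{F}^{(i-1)}$ is not among $v_1,\dots,v_{i-1}$, so it is to the right of $v_i$ and adjacent in the fill-in graph. Hence it lies in $\rchi(\forget{v_i})$.

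\textbf{Proving the invariant, case by case.} The base case $i=0$ is immediate since $G_\phi=G_\Phi\subseteq G_\Phi^{\V}$.
\begin{itemize}
\item \ref{itm:R1}: if $v_i\notin\var{\Q^{(i-1)}}$, no matrix contains $v_i$. This needs a small side invariant that $\var{\psi}\subseteq\var{\Q^{(i)}}$, which holds by Lemmas~\ref{lem:res_correctness}, \ref{lem:red_correctness} and~\ref{lem:ext_correctness}. So nothing changes except that $v_i$ is now also absent.
\item \ref{itm:R3}: reduction only deletes literals, so edges only disappear, and $v_i$ vanishes by Lemma~\ref{lem:red_correctness}.
\item \ref{itm:R4}: strategy extension only assigns variables, so each new matrix is a restriction whose clauses are subclauses of old clauses. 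Thus $G_{\psi'}\subseteq G_\psi$ minus the eliminated variables. All of $\depless{v_i}\ni v_i$ disappear by Lemma~\ref{lem:ext_correctness}.
\item \ref{itm:R2}: a resolvent of $\clause_1\ni v_i$ and $\clause_2\ni\overline{v_i}$ creates edges only between variables that were both neighbours of $v_i$ in $G_\psi$. By the invariant these are fill-in neighbours of $v_i$ to its right, so the edge between them is in $G_\Phi^{\V}$ by definition of the fill-in graph. Other edges are inherited, and $v_i$ is gone.
\end{itemize}

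\textbf{Expected obstacle.} The main care point is the auxiliary tree-decomposition claim, especially that the linear extension $\lesstp$ is used consistently: $w$ later than $v_i$ means $\forget{w}$ is not a strict descendant of $\forget{v_i}$. I would handle this via \ref{itm:T2} with path arguments toward the root. Everything else is bookkeeping. Notably, the trunk-alignment properties \ref{itm:P1} and \ref{itm:P2} are not needed for this lemma.
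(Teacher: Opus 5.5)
Your proof is correct, but it is organized differently from the paper's.

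\textbf{The paper's route.} The paper inducts directly on the bag statement. Given a neighbour $v'$ of $v_i$ outside $\rchi(\forget{v_i})$, it traces the edge $\{v_i,v'\}$ back to the \ref{itm:R2} step on an earlier pivot $v_{j'}$ that created it. It applies the induction hypothesis at step $j'$ to get $\{v_i,v'\}\subseteq\rchi(\forget{v_{j'}})$. It then uses connectivity of subtrees and $\lesstp$ to place $v'$ in $\rchi(\forget{v_i})$.

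\textbf{Your route.} You factor the argument through the fill-in graph $G_{\Phi}^{\V}$, in two parts:
\begin{itemize}
\item a forward invariant: resolution on $v_i$ acts on co-occurrence graphs exactly like eliminating $v_i$, while reduction and strategy extension only delete edges;
\item a separate, purely graph-theoretic fact: the right-neighbours of $v_i$ in $G_{\Phi}^{\V}$ lie in $\rchi(\forget{v_i})$ when $\V$ is the forget order.
\end{itemize}

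\textbf{Comparison.} The tree-decomposition core is the same in both; your fill-edge case is the paper's Case~2. Your version decouples the QBF rules from $\tree$. It replaces the paper's somewhat informal ``some \ref{itm:R2} step must have introduced the clause'' lineage argument with a clean forward induction. It also yields a stronger global statement: every matrix at every stage has its primal graph inside $G_{\Phi}^{\V}$ minus the already eliminated vertices. The paper's version is shorter and needs no separate characterization of fill edges. You should state that characterization explicitly: by minimality of $G_{\Phi}^{\V}$, every non-original edge $\{v_i,v_j\}$ with $i<j$ has some $k<i$ with $\{v_k,v_i\},\{v_k,v_j\}\in E(G_{\Phi}^{\V})$. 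This is what makes your induction along $\V$ well-founded.

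\textbf{Small points to tighten.}
\begin{itemize}
\item To use the ``moreover'' parts of Lemmas~\ref{lem:res_correctness}, \ref{lem:red_correctness} and~\ref{lem:ext_correctness}, you must also carry the no-tautology invariant. Without it, $v_i$ need not disappear after resolution, since a tautological clause could pass $\overline{v_i}$ into a resolvent. The invariant $\var{\psi}\subseteq\var{\Q^{(i)}}$ alone is not enough. These syntactic parts do not depend on the dependency preconditions, so using them here is not circular.
\item In the fill-edge case, the subtree of $v_i$ need not reach $\forget{w}$. What you actually need is that $\forget{v_i}$ lies on the path from $\forget{v_k}$ to $\forget{w}$. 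This holds because both are ancestors of $\forget{v_k}$, and $\forget{v_i}\lesst\forget{w}$ since $\lesstp$ extends $\lesst$.
\end{itemize}
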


\iflong
\begin{proof}          We prove the lemma by induction on $i$.  
    For $i=1$, we have $\mathcal{F}^{(0)}=\{\{\phi\}\}$. From the definitions of $\tree$ and $\V$ and the fact that $v_1$ is eliminated first in~$\V$, we conclude $N_{G_{\phi}}(v_1) \subseteq \rchi(\forget{v_1})$. 

    For the inductive step, let $\psi$ be an arbitrary matrix that occurs in some $\pi \in \mathcal{F}^{(i-1)}$ for some $i \in [n]$. Towards a contradiction, suppose there is some $v' \in N_{G_{\psi}}(v_i)$ with $v' \notin \rchi(\forget{v_{i}})$. Note that $v'=v_j$ for some $j > i$, since $v'$ is still present in $\psi$ when $v_i$ is forgotten. Consequently, if there was a clause in the original matrix $\phi$ that contained both $v'$ and $v_{i}$, then $v' \in \rchi(\forget{v_{i}})$, contradicting our assumption. Let us thus assume that there was no such clause. Observe that applying the Rules~\ref{itm:R1}, \ref{itm:R3}, and \ref{itm:R4} does not increase the neighborhood of variables in any matrix; hence, there must have been an application of Rule~\ref{itm:R2} that introduced a clause that contains both $v'$ and $v_i$.
    Formally, there exists a $v_{j'}\in \vare{\Phi}$ with $1 <j' <i$, such that some $\psi'$ in some $\pi' \in\mathcal{F}^{(j'-1)}$ contained a clause $\clause_1$ with $\{v_{j'},v'\} \subseteq\var{\clause_1}$ and a clause $\clause_2$ with $\{v_{j'},v_i\} \subseteq\var{\clause_2}$ and, without loss of generality, $v_{j'} \in \clause_1$ and $\overline{v_{j'}} \in \clause_2$. Let us consider every possible relative position of $\forget{v_{j'}}$ and $\forget{v_{i}}$ in $\tree$.
    \begin{description}
        \item[Case 1]  It is impossible that $\forget{v_{i}}\lesst \forget{v_{j'}}$, since $j' < i$ and $\V$ agrees with $\tree$.
        \item[Case 2]  Suppose $\forget{v_{j'}}\lesst \forget{v_{i}}$, that is, both nodes lie on the same path in~$\tree$. By the induction hypothesis, we have that $\{v_{i}, v'\} \subseteq \rchi(\forget{v_{j'}})$. However, since $v'=v_j \in \rchi(\forget{v_j})$ for some $j >i$ and $\forget{v_{j'}}\lesst \forget{v_{i}}$, we have $ v'\in \rchi(\forget{v_i})$ by the definition of a tree decomposition, contradicting our assumption.
        \item[Case 3] Suppose neither $\forget{v_{i}}\lesst \forget{v_{j'}}$ nor $\forget{v_{j'}}\lesst \forget{v_{i}}$, that is, these nodes lie on different paths in $\tree$. By the induction hypothesis, we have that $\{v_{i}, v'\} \subseteq \rchi(\forget{v_{j'}})$. Moreover, applying the definition of $\forget{v_i}$, we have that $v_{i} \in \rchi(\forget{v_{i}})$ but $v_{i} \notin \rchi(t)$ for any node $t$ on the path from $\forget{v_i}$ to $\rt$. But then, the nodes whose bags contain $v_i$ are not connected in $\tree$, contradicting the definition of tree decompositions.   \qedhere
    \end{description}
\unskip\end{proof}
\fi
Note that it follows from Lemma~\ref{lem:bounded_neighborhood} that for all $i \in [n]$, $|N_{G_{\psi}}(v_i)|\leq k$ for all $\psi \in \pi$  and $\pi \in \mathcal{F}^{(i-1)}$, since  $|\rchi(\forget{v_{i}})|\leq k+1$ and $v_i \in \rchi(\forget{v_{i}})$ by definition.
We will use this fact in Lemma~\ref{lem:rules_efficient} to prove that the Rules~\ref{itm:R2} and~\ref{itm:R3} can be applied efficiently.
For now, let us prove that the constructed derivation sequence preserves the truth value of the original QBF by maintaining the following invariant.
\begin{lemma}
\label{lem:main_invariant}
    Let $\Pi = \Pi_0, \Pi_1, \dots, \Pi_n$ be a derivation sequence of $\qbf$.
     Then for all $i \in [n]$, there exists a $\pi' \in \mathcal{F}^{(i)}$ such that the set of QBFs $\{\Q^{(i)}. \psi' \mid \psi' \in \pi'\}$ is true
      if and only if there exists a $\pi \in \mathcal{F}^{(i-1)}$ such that the set of QBFs $\{\Q^{(i-1)}.\psi \mid \psi \in \pi\}$
      is true.
\end{lemma}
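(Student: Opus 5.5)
The plan is to fix $i \in [n]$ and split into cases according to which of the Rules~\ref{itm:R1}--\ref{itm:R4} produced $\Pi_i$ from $\Pi_{i-1}$. Rule~\ref{itm:R1} is immediate, since nothing changes. For Rules~\ref{itm:R2} and~\ref{itm:R3}, the sets of $\mathcal{F}^{(i)}$ are in bijection with those of $\mathcal{F}^{(i-1)}$ via $\pi \mapsto \pi'$, and each matrix $\psi \in \pi$ is replaced by $\res{\psi}{v_i}$ (resp.\ $\red{\psi}{v_i}$). It therefore suffices to show, matrix by matrix, that $\Q^{(i-1)}.\psi$ is true iff $\Q^{(i)}.\res{\psi}{v_i}$ (resp.\ $\Q^{(i)}.\red{\psi}{v_i}$) is true. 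Taking conjunctions over $\pi$ then shows that all QBFs from $\pi$ are true iff all from $\pi'$ are, and the existential statement over sets follows.

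For the matrix-wise claim I will invoke Lemma~\ref{lem:res_correctness} (resp.\ Lemma~\ref{lem:red_correctness}) for the QBF $\Q^{(i-1)}.\psi$ with the dependency poset $\D$ restricted to $\var{\Q^{(i-1)}}$. Three side conditions need checking.

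\emph{Tautologies.} No tautological clauses are present. This follows inductively, since $\phi$ is tautology-free by preprocessing and Lemmas~\ref{lem:res_correctness}, \ref{lem:red_correctness}, and~\ref{lem:ext_correctness} each preserve this.

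\emph{Restricted poset.} The restriction of $\D$ is still a dependency poset for the reduced prefix. I will argue this is inherited: $\Q^{(i-1)}$ arises from $\Q$ only by deleting variables, linear extensions of the restriction extend to linear extensions of $\D$, and the permutation property is preserved under the eliminations performed. This is a bit delicate, and I would state it as an auxiliary invariant carried along the induction. Alternatively, I can note that the two lemmas only use $\depless{\cdot}$ to define strategies, so applying them with $\D$ itself on the remaining variables suffices.

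\emph{Blocking condition.} The key hypothesis is that for every $u$ with $v_i \dep u$ (universal $u$ in case~\ref{itm:R2}, existential in case~\ref{itm:R3}), no clause of $\psi$ contains both $v_i$ and $u$. Here I combine Lemma~\ref{lem:bounded_neighborhood} with the guard of the rule. By Lemma~\ref{lem:bounded_neighborhood}, any co-occurring variable $u$ lies in $N_{G_\psi}(v_i) \subseteq \rchi(\forget{v_i})$, and $u \in \var{\Q^{(i-1)}}$ since it still occurs. The rule's guard says that no $v \in \var{\Q^{(i-1)}}$ with $v_i \dep v$ lies in $\rchi(\forget{v_i})$. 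Hence no such $u$ co-occurs with $v_i$, and the lemmas apply.

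For Rule~\ref{itm:R4}, $\mathcal{F}^{(i)} = \bigcup_{\pi \in \mathcal{F}^{(i-1)}} \strext{\pi}{v_i}$. By Lemma~\ref{lem:ext_correctness}, applied with prefix $\Q^{(i-1)}$ and with $\Q^{(i)} = \Q^{(i-1)} \setminus \depless{v_i}$, all QBFs from a given $\pi$ are true iff some $\pi'' \in \strext{\pi}{v_i}$ has all its QBFs true. Taking the disjunction over $\pi \in \mathcal{F}^{(i-1)}$ gives the equivalence directly. One small issue is that $\depless{v_i}$ may contain variables already eliminated; I will note that $\B$ and $\A$ effectively range only over the variables of $\depless{v_i}$ still present in $\Q^{(i-1)}$, so the lemma applies verbatim.

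The main obstacle is the blocking-condition check for \ref{itm:R2}/\ref{itm:R3}, in particular making sure that Lemma~\ref{lem:bounded_neighborhood} covers the fill-in created by earlier resolution steps. That lemma already handles this, so the remaining care lies in the bookkeeping of the restricted dependency poset.
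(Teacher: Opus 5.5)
Your proposal is correct and follows the paper's proof essentially step for step. It uses the same case split over \ref{itm:R1}--\ref{itm:R4}, and it discharges the blocking hypothesis of Lemmas~\ref{lem:res_correctness} and~\ref{lem:red_correctness} by combining Lemma~\ref{lem:bounded_neighborhood} with the rule's guard. It invokes Lemma~\ref{lem:ext_correctness} for \ref{itm:R4} and carries tautology-freeness inductively, just as the paper does. Your extra bookkeeping addresses points the paper leaves implicit: restricting $\D$ to the remaining variables, and variables of $\depless{v_i}$ that were already eliminated. One small correction: $\pi\mapsto\pi'$ is in general only a surjection, not a bijection, since distinct sets can collapse; surjectivity is all your argument actually uses.
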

\ifshort
\begin{proof}[Proof Sketch]
Recall that each $\Pi_i$ is obtained from its predecessor $\Pi_{i-1}$ by applying one of the Rules~\ref{itm:R1} to~\ref{itm:R4}. Their correctness can be essentially inferred from the correctness of resolution (Lemma~\ref{lem:res_correctness}), reduction (Lemma~\ref{lem:red_correctness}), and strategy extension (Lemma~\ref{lem:ext_correctness}). For the Rules~\ref{itm:R2} and~\ref{itm:R3}, we additionally make use of Lemma~\ref{lem:bounded_neighborhood} to ensure that
$\rchi(\forget{v_i})$ contains in particular all variables that depend on $v_i$ and that share a clause with $v_i$ in some $\psi \in \pi$ with $\pi \in \mathcal{F}^{(i-1)}$.
\end{proof}
\fi
\iflong
\begin{proof}     Let $\Pi = \Pi_0, \Pi_1, \dots, \Pi_n$ be a derivation sequence of $\Phi$ in which each $\Pi_i=(\Q^{(i)}, \mathcal{F}^{(i)})$ was obtained from its predecessor $\Pi_{i-1}=(\Q^{(i-1)}, \mathcal{F}^{(i-1)})$ by applying one of the Rules~\ref{itm:R1} to~\ref{itm:R4}. We show that each rule preserves the desired invariant. Towards this goal, we show that if no matrix that occurs in some set in $\mathcal{F}^{(i-1)}$ contains a tautological clause, then this is also the case for $\mathcal{F}^{(i)}$. Note that $\mathcal{F}^{(0)}=\{\{\phi\}\}$ satisfies this property by assumption.
    \begin{description}         \item[Case 1] Suppose $\Pi_i$ was obtained by applying~\ref{itm:R1}. Then $\Pi_i=\Pi_{i-1}$ which clearly preserves the invariant and does not introduce any tautologies.
        \item[Case 2] Suppose $\Pi_i$ was obtained by applying~\ref{itm:R2}. By induction, no set in $\mathcal{F}^{(i-1)}$ contains a matrix with a tautological clause. Moreover, by Lemma~\ref{lem:bounded_neighborhood}, $\rchi(\forget{v_i})$ contains all variables that $v_i$ is neighbored to in any $\psi \in \pi$ with $\pi \in \mathcal{F}^{(i-1)}$, but since Rule~\ref{itm:R2} was applied, none of these variables depends on $v_i$.
        It follows from Lemma~\ref{lem:res_correctness} that the invariant is preserved and that no set in $\mathcal{F}^{(i)}$ contains a matrix with a tautological clause.
        \item[Case 3] Suppose $\Pi_i$ was obtained by applying~\ref{itm:R3}. By induction, no set in $\mathcal{F}^{(i-1)}$ contains a matrix with a tautological clause. Moreover, by Lemma~\ref{lem:bounded_neighborhood}, $\rchi(\forget{v_i})$ contains all variables that $v_i$ is neighbored to in any $\psi \in \pi$ with $\pi \in \mathcal{F}^{(i-1)}$, but since Rule~\ref{itm:R3} was applied, none of these variables depends on $v_i$.
        It follows from Lemma~\ref{lem:red_correctness} that the invariant is preserved and that no set in $\mathcal{F}^{(i)}$ contains a matrix with a tautological clause.
        \item[Case 4] Suppose $\Pi_i$ was obtained by applying~\ref{itm:R4}. By Lemma~\ref{lem:ext_correctness}, there is a $\pi \in \mathcal{F}^{(i-1)}$ such that all QBFs in $\{\Q^{(i-1)}.\psi\mid \psi \in \pi\}$ 
        are true if and only if there is a $\pi' \in \strext{\pi}{v_i}$ such that all QBFs in $\{\Q^{(i)}.\psi' \mid \psi' \in \pi'\}$
        are true. By the definition of Rule~\ref{itm:R4}, we have that $\pi' \in \mathcal{F}^{(i)}$ if and only if $\pi \in \mathcal{F}^{(i-1)}$, which concludes the proof of the invariant. By induction, no  $\pi' \in\mathcal{F}^{(i)}$ contains a matrix with a tautological clause. \qedhere
    \end{description}
\unskip\end{proof}
\fi
 Since $\Pi_n$ will be ultimately used to decide the truth value of the original QBF $\Phi$, it is crucial that this decision can be made efficiently on all matrices in sets of $\Pi_n$. 
This is indeed the case as they will not contain any variables, and will hence either be empty or contain the empty clause. 
 \iflong
We prove this in the following.

\begin{longobservation}\label{obs:total_elim}
    Let $\Pi=\Pi_0, \Pi_1, \dots, \Pi_n$ be a derivation sequence of $\qbf$ and $\V=(v_1, \dots, v_n)$ the elimination ordering that was used to compute it. Then for each $i \in [n]$, there is no matrix $\psi$ in any $\pi \in \mathcal{F}^{(i)}$ that contains any variable $v_j$ with $1 \leq j \leq i$.
\end{longobservation}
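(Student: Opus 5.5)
I plan to prove Observation~\ref{obs:total_elim} by induction on $i$. The invariant I will carry is slightly stronger than the statement: after step $i$, no matrix in any $\pi \in \mathcal{F}^{(i)}$ contains a literal of any variable in $\{v_1,\dots,v_i\} \cup (\var{\Q} \setminus \var{\Q^{(i)}})$. Equivalently, every variable occurring in a matrix of $\mathcal{F}^{(i)}$ lies in $\var{\Q^{(i)}}$, and $\var{\Q^{(i)}} \cap \{v_1,\dots,v_i\}=\emptyset$.

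For the base case $i=0$, the claim is immediate: $\mathcal{F}^{(0)}=\{\{\phi\}\}$ and $\var{\phi}\subseteq\var{\Q}$.

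For the inductive step I go through the four rules.
\begin{itemize}
\item Under \ref{itm:R1}, $v_i\notin\var{\Q^{(i-1)}}$. By the invariant, $v_i$ then occurs in no matrix of $\mathcal{F}^{(i-1)}=\mathcal{F}^{(i)}$. The prefix is unchanged, so the invariant carries over with $v_i$ added to the eliminated set.
\item Under \ref{itm:R2}, each matrix is replaced by $\res{\psi}{v_i}$. The second part of Lemma~\ref{lem:res_correctness} states that this matrix contains no literal of $v_i$. Every other clause of $\res{\psi}{v_i}$ is built from literals already present in $\psi$, so no earlier eliminated variable reappears. Since $\Q^{(i)}=\Q^{(i-1)}\setminus\{v_i\}$, the invariant holds.
\item Under \ref{itm:R3}, the argument is the same, using the second part of Lemma~\ref{lem:red_correctness} and the fact that $\red{\psi}{v_i}$ only deletes literals.
\item Under \ref{itm:R4}, the second part of Lemma~\ref{lem:ext_correctness} shows that no matrix in any $\pi'\in\strext{\pi}{v_i}$ contains a literal of a variable in $\depless{v_i}$. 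This set contains $v_i$ by reflexivity, and it is exactly what is removed to form $\Q^{(i)}$. The matrices are restrictions of matrices in $\mathcal{F}^{(i-1)}$, so they introduce no new variables.
\end{itemize}

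The one point that needs care is applying these lemmas at all. They require that no matrix contains a tautological clause, and that the matrices share the prefix $\Q^{(i-1)}$ with all variables inside it. The first condition is established alongside Lemma~\ref{lem:main_invariant}, and the second is part of my strengthened invariant. With both in hand, the observation reduces to routine bookkeeping, and applying it at $i=n$ shows that every matrix in $\mathcal{F}^{(n)}$ is variable-free.
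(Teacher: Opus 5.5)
Your proof is correct and follows the paper's route: induction on $i$ using the syntactic ``Moreover'' parts of Lemmas~\ref{lem:res_correctness}, \ref{lem:red_correctness}, and~\ref{lem:ext_correctness}. The only difference is cosmetic. The paper handles Rule~\ref{itm:R1} by tracing $v_i$ back to an earlier application of Rule~\ref{itm:R4} with $v_i \in \depless{v_{i'}}$, whereas your strengthened invariant (every matrix variable lies in $\var{\Q^{(i)}}$) makes this step explicit, together with the non-reintroduction of variables and the shared-prefix hypothesis needed for Lemma~\ref{lem:ext_correctness}.
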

\begin{proof}
    One can perform an induction on $i$, using the latter statements of Lemma~\ref{lem:res_correctness},~\ref{lem:red_correctness}, and~\ref{lem:ext_correctness} that ensure that variables are eliminated exhaustively whenever resolution, reduction, or strategy extension is performed. Note that whenever $v_i \notin \var{\Q^{(i-1)}}$, that is, Rule~\ref{itm:R1} is used to derive $\mathcal{F}^{(i)}$, we can conclude from $\var{\phi}\subseteq \var{\Q}$ that there must have been a variable $v_{i'}$ with $1\leq i' < i$ and $v_{i} \in \depless{v_{i'}}$ that was eliminated before $v_i$, using Rule~\ref{itm:R4}. Then, by Lemma~\ref{lem:ext_correctness}, $v_i$ was eliminated from all matrices as well.
\end{proof}
\fi
Before we proceed to analyze the complexity of computing derivation sequences, we illustrate their construction on a small example.
\begin{example}
We want to compute the derivation sequence for $\textsc{QParity}_2$ with respect to the trivial dependency poset $\Dtrv$ and the following trunk-aligned tree decomposition~$\tree$.

\begin{figure}[h!]
\centering
\begin{tikzpicture}
    \node[draw, inner sep=2.5pt, rounded corners](e1) at (-0.9,0) {$\emptyset$};
    \node[draw, inner sep=2.5pt, rounded corners](1) at (0.9-1,0) {$x_1$};
    \node[draw, inner sep=2.5pt, rounded corners](2) at (2.1-1.1,0) {$x_1, z_1$};
    \node[draw, inner sep=2.5pt, rounded corners](3) at (3.3-1.2,0) {$z_1$};
    \node[draw, inner sep=2.5pt, rounded corners](4) at (4.5-1.3,0) {$z_1,x_2$};
    \node[draw, inner sep=2.5pt, rounded corners](5) at (6.2-1.4,0) {$z_1,x_2,z_2$};
    \node[draw, inner sep=2.5pt, rounded corners](6) at (7.9-1.5,0) {$x_2,z_2$};
    \node[draw, inner sep=2.5pt, rounded corners](7) at (9.5-1.6,0) {$x_2,z_2,u$};
    \node[draw, inner sep=2.5pt, rounded corners](8) at (10.9-1.6,0) {$z_2,u$};
    \node[draw, inner sep=2.5pt, rounded corners](9) at (11.9-1.7,0) {$u$};
    \node[draw, inner sep=2.5pt, rounded corners](e2) at (12.7-1.7,0) {$\emptyset$};
    \draw (e1) -- (1) -- (2) -- (3) -- (4) -- (5) -- (6) -- (7) -- (8) -- (9) -- (e2);
\end{tikzpicture}
\end{figure}
 
Since $\tree$ is a path, the only elimination ordering of $\var{\textsc{QParity}_2}$ that agrees with $\tree$ is $\V=(x_1, z_1,x_2, z_2, u)$. We initialize $\Q^{(0)}=\exists x_1 x_2 \forall u \exists z_1 z_2$ and $\mathcal{F}^{(0)}=\{\{\phi\}\}$ with $\phi=(x_1 \lor \overline{z_1}) \land (\overline{x_1} \lor z_1)  \land (u \lor \overline{z_2}) \land (\overline{u} \lor z_2)  \land  (\overline{z_{2}} \lor x_{2} \lor z_1) \land (z_{2} \lor \overline{x_{2}} \lor z_1) \land (z_{2} \lor x_{2} \lor \overline{z_1}) \land (\overline{z_{2}} \lor \overline{x_{2}} \lor \overline{z_1})$.

Now observe that $x_1$, the first variable that is eliminated in $\V$, is existential but does not satisfy Rule~\ref{itm:R2} since $z_1 \in \rchi(\forget{x_1})$ and $x_1\preceq_{\D_{\mathrm{trv}}}z_1$. Therefore, Rule~\ref{itm:R4} is applied, which means that $\mathcal{F}^{(1)}$ is computed as $\strext{\{\phi\}}{x_1}$ via strategy extension. Since $x_1$ does not depend on any universal variable, the only two strategies for $x_1$ are setting it to constant 0 or 1. Therefore, we have $\strext{\{\phi\}}{x_1}=\{\{\psi_{x_1=0}\}, \{\psi_{x_1=1}\}\}$ with $\psi_{x_1=0}=(\overline{z_1}) \land (u \lor \overline{z_2}) \land (\overline{u} \lor z_2) \land \mathrm{xor}(z_2, x_2, z_1)$ and $ \psi_{x_1=1}= (z_1)  \land (u \lor \overline{z_2}) \land (\overline{u} \lor z_2) \land \mathrm{xor}(z_2, x_2, z_1)$, where $\mathrm{xor}(z_2, x_2, z_1)=(\overline{z_{2}} \lor x_{2} \lor z_1) \land (z_{2} \lor \overline{x_{2}} \lor z_1) \land (z_{2} \lor x_{2} \lor \overline{z_1}) \land (\overline{z_{2}} \lor \overline{x_{2}} \lor \overline{z_1})$.
The updated prefix is $\Q^{(1)}=\exists x_2 \forall u \exists z_1 z_2$.

Next, $z_1$ is eliminated. Observe that the node in which $z_1$ is forgotten does not contain any variable that depends on $z_1$. Hence, we apply \ref{itm:R2}, that is, we separately resolve over $z_1$ in all matrices. As a result, we obtain $\mathcal{F}^{(2)}=\{\pi_1=\{\res{\psi_{x_1=0}}{z_1}\}, \pi_2=\{\res{\psi_{x_1=1}}{z_1}\}\}$ with $\res{\psi_{x_1=0}}{z_1}=(u \lor \overline{z_2}) \land (\overline{u} \lor z_2)  \land  (\overline{z_{2}} \lor x_{2}) \land (z_{2} \lor \overline{x_{2}})$ and $\res{\psi_{x_1=1}}{z_1}=(u \lor \overline{z_2}) \land (\overline{u} \lor z_2)  \land (z_{2} \lor x_{2} ) \land (\overline{z_{2}} \lor \overline{x_{2}})$. The new prefix is $\Q^{(2)}=\exists x_2 \forall u \exists z_2$.

Now we eliminate $x_2$ via \ref{itm:R4}. This gives rise to $\mathcal{F}^{(3)}=\strext{\pi_1}{x_2}\cup \strext{\pi_2}{x_2}$. Note that again, the only possible strategies for $x_2$ are setting it to constant 0 or 1, and we account for both with respect to each set of $\mathcal{F}^{(2)}$. More specifically, we obtain 
 $\strext{\pi_1}{x_2}=\{\{\psi^{\pi_1}_{x_2=0}, \psi^{\pi_1}_{x_2=1}\}\}$ and $\strext{\psi_2}{x_2}=\{\{\psi^{\pi_2}_{x_2=0}, \psi^{\pi_2}_{x_2=1}\}\}$, where $\psi^{\pi_1}_{x_2=0}=(u \lor \overline{z_2}) \land (\overline{u} \lor z_2)  \land  \overline{z_{2}}$, $\psi^{\pi_1}_{x_2=1}=(u \lor \overline{z_2}) \land (\overline{u} \lor z_2)  \land z_{2} $, $\psi^{\pi_2}_{x_2=0}=(u \lor \overline{z_2}) \land (\overline{u} \lor z_2)  \land z_{2} $, and $\psi^{\pi_2}_{x_2=1}=(u \lor \overline{z_2}) \land (\overline{u} \lor z_2)  \land \overline{z_{2}}$. In particular, this would amount to $\mathcal{F}^{(3)}=\{\{\psi^{\pi_1}_{x_2=0}, \psi^{\pi_1}_{x_2=1}\} , \{\psi^{\pi_2}_{x_2=0}, \psi^{\pi_2}_{x_2=1}\}\}$. However, since $\psi^{\pi_1}_{x_2=0}=\psi^{\pi_2}_{x_2=1}$ and $\psi^{\pi_1}_{ x_2=1}=\psi^{\pi_2}_{x_2=0}$, the set $\mathcal{F}^{(3)}$ simplifies to $\{\{\psi^{\pi_1}_{x_2=0}, \psi^{\pi_1}_{x_2=1}\}\}$. The new matrix is $\Q^{(3)}=\forall u \exists z_2$.

As a penultimate step, we eliminate $z_2$ by Rule~\ref{itm:R2}, which creates $\mathcal{F}^{(4)}=\{\{\overline{u}, u\}\}$ and $\Q^{(4)}=\forall u$.
Finally, $u$ is eliminated by Rule~\ref{itm:R3}, which performs reduction separately on the remaining matrices $\overline{u}$ and $u$. This results in $\mathcal{F}^{(5)}=\{\{\bot, \bot\}\}$ and an empty prefix. Since each set in $\mathcal{F}^{(5)}$ contains only matrices that in turn contain the empty clause, the derivation sequence witnesses the fact that $\textsc{QParity}_2$ is false.
\end{example}

The next two lemmas demonstrate that computing $\Pi$ is fixed-parameter tractable by the bilateral treewidth of $\Phi$ when given all the ingredients that are listed in Theorem~\ref{thm:fpt_btw}. 

First, note that the efficiency of our rules crucially depends on the number of formulas that need to be considered in parallel. In the following, we prove that this number is bounded by a function of the bilateral treewidth of the instance.
\begin{lemma}
\label{lem:equiv_formulas}
    Let $\Pi=\Pi_0, \Pi_1,  \dots, \Pi_n$ be a derivation sequence of $\qbf$ obtained from an elimination ordering $\V=(v_1, \dots, v_n)$ that agrees with a given trunk-aligned tree decomposition $\tree=(\mathbf{T}, \rchi)$ of $(\Phi, \D)$ with width $k=\btw{\Phi, \D}$. Then for each $i \in \{0, \dots, n\}$, we have $|\mathcal{F}^{(i)}|\leq 2^{2^{2^{\oh{k}}}}$ and for each $\pi \in \mathcal{F}^{(i)}$ we have $|\pi|\leq 2^{2^{\oh{k}}}$.
\end{lemma}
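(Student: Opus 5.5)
The key is to show that at every step $i$, all matrices in all sets of $\mathcal{F}^{(i)}$ live on a small common variable set. The count of sets and of matrices per set then follows from a counting argument over that set.

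\textbf{Step 1: a bag that contains all remaining variables.} Let $t$ be the $\lesstp$-largest node among $\forget{v_1},\dots,\forget{v_i}$. Using Observation~\ref{obs:total_elim} and the derivation rules, I claim that every matrix $\psi\in\pi\in\mathcal{F}^{(i)}$ is a CNF over the original variables that have not yet been eliminated. Its clauses are obtained from clauses of $\phi$ by resolution on earlier pivots, by reduction, and by restriction via assignments.

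The structural statement I want is that any clause of such a $\psi$ that survives uses variables from a set $S_i$ of size $\bigoh(k)$. One candidate is a bag on the trunk. A cleaner choice is the union of the bags $\rchi(\forget{v_j})$ over the current ``frontier''. Here is the reasoning. Every clause ever created by \ref{itm:R2} at some $\forget{v_j}$ lies inside $\rchi(\forget{v_j})$, by Lemma~\ref{lem:bounded_neighborhood}. Original clauses lie in some bag.

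The problem is that the matrices may still contain clauses of $\phi$ located far away in the tree that are untouched so far. Those clauses are common to all matrices, though. So I would split each matrix into a fixed part and a variable part. The fixed part is the set of clauses of $\phi$ untouched by any rule, and it is identical across all matrices and all sets. The variable part consists of clauses that have been modified, and these involve only variables of the current boundary bags.

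\textbf{Step 2: bounding the variable part.} By the trunk property \ref{itm:P2} and the elimination order, rule \ref{itm:R4} is only triggered at trunk nodes. Only trunk nodes can cause branching; all other rules map each matrix to a single matrix. Off-trunk subtrees are processed by \ref{itm:R2}/\ref{itm:R3} alone, and their modified clauses end up inside the bag of the adhering trunk node.

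I would therefore argue that the modified clauses in any matrix lie within $\rchi(t^*)$ for the most recent trunk node $t^*$ together with at most one pending side-branch bag. This gives at most $2k+2$ variables. Hence a variable part is one of at most $2^{3^{\bigoh(k)}}=2^{2^{\bigoh(k)}}$ CNFs, which bounds $|\pi|\le 2^{2^{\bigoh(k)}}$ once duplicate matrices are identified, as in the example. Likewise each $\pi$ is a subset of this family, so $|\mathcal{F}^{(i)}|\le 2^{2^{2^{\bigoh(k)}}}$.

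\textbf{Main obstacle.} The delicate point is Step 2's claim that modified clauses stay inside a boundedly sized union of bags, particularly at join nodes. When several subtrees hanging off the trunk are partially processed under the linear order $\lesstp$, their modified clauses sit in different bags. I would handle this with the invariant from Lemma~\ref{lem:bounded_neighborhood}: a modified clause involving an uneliminated variable $w$ has all its variables in $\rchi(\forget{v_j})$ for some already-eliminated $v_j$. Those remaining variables then lie in the bag of the first uneliminated ancestor of that node, by connectivity (T2).

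This means the relevant variables are confined to bags of ``active'' nodes. If their number cannot be bounded by $\bigoh(1)$ bags, I would instead bound only the number of distinct restrictions to the current trunk bag. Branching only depends on those, since \ref{itm:R4} instantiates all of $\depless{v_i}$, and all of $\depless{v_i}$ lies below $\forget{v_i}$ on the trunk by \ref{itm:P2}. I would then identify matrices that agree off the trunk bag.
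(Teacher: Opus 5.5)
\textbf{There is a genuine gap: your Step~2 bound is false, and the invariant you propose to repair it also fails.}

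Lemma~\ref{lem:bounded_neighborhood} only controls the neighborhood of the \emph{pivot} $v_j$. But \ref{itm:R4} also instantiates every other variable of $\var{\Q^{(j-1)}}\cap\depless{v_j}$. Such a variable can stay in the bags far above $\forget{v_j}$ and co-occur with arbitrarily many variables of side subtrees that have not been processed yet.

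Here is a concrete instance.
\begin{itemize}
\item Take $\exists a\,\forall u\,\exists e\,w_1\cdots w_m$ with $\Dtrv$ and clauses $(a\vee u\vee e)$ and $(a\vee w_\ell)$ for $\ell\in[m]$.
\item Let the trunk have bags $\emptyset,\{a\},\{a,u\},\{a,u,e\},\{a,e\},\{a\}$, then $m$ join nodes with bag $\{a\}$, then the empty root. Each join node attaches a side subtree with bags $\emptyset,\{a\},\{a,w_\ell\},\{a\}$.
\item This decomposition has width $2$. The variable $u$ violates \ref{itm:P1} because of $e$, but it satisfies \ref{itm:P2}.
\item Let $\lesstp$ place the side subtrees after $\forget{u}$. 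Then the very first step is \ref{itm:R4} at $u$, which also instantiates $a$.
\item The resulting matrix $\phi\rvert_{a=0,u=1}$ consists of the modified clauses $(w_1),\dots,(w_m)$. These span $m$ variables, and none of them lies in $\rchi(\forget{u})$, in its parent's bag, or in more than one side bag.
\end{itemize}
Separately, an arbitrary linear extension $\lesstp$ can interleave many side subtrees. So even the clauses generated by \ref{itm:R2} are not confined to $\bigoh(1)$ ``active'' bags. Counting CNFs over $2k+2$ variables therefore bounds nothing.

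\textbf{Your fallback names the right quantity but skips the step that carries the proof.} The quantity is essentially the paper's: let $p$ be the parent of $\forget{v_i}$ and $V$ the set of instantiated variables. A matrix should be determined by its clauses over $\rchi(p)\setminus V$ together with the assignment restricted to $V\cap\rchi(p)$.

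However, $\pi$ and $\mathcal{F}^{(i)}$ are sets, so you cannot simply ``identify'' matrices that agree off the trunk bag. You must prove that matrices with equal bag data are literally equal. In particular, you must show that values chosen earlier for variables already forgotten inside $\mathbf{T}_{<p}$ leave no trace on clauses that reach outside $\mathbf{T}_{\le p}$.

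The missing idea is the paper's separation property (Claim~\ref{cl:separation}):
\begin{itemize}
\item $\rchi(p)$ separates $\mathbf{T}_{<p}$ from $S=\var{\phi}\setminus\mathbf{T}_{\le p}$ in $G_\phi$.
\item No variable of $\rchi(p)$ has been eliminated yet, so no resolution step ever merges clauses from both sides.
\item Hence every clause meeting $S$ has a derivation history that avoids $\mathbf{T}_{<p}$. It is determined by the uniformly applied \ref{itm:R2}/\ref{itm:R3} steps together with the assignment to $V\cap\rchi(p)$.
\item All of $\mathbf{T}_{<p}$ is already eliminated, so every remaining clause either meets $S$ or lies over $\rchi(p)\setminus V$.
\end{itemize}
This gives at most $2^{3^{k+1}}\cdot 2^{k+1}$ distinct matrices, and the bound on $|\mathcal{F}^{(i)}|$ follows from that.
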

\begin{proof}
    By definition, $\mathcal{F}^{(0)}=\{\{\phi\}\}$. Now, let $i \in [n]$. If we applied Rule~\ref{itm:R1} to compute~$\Pi_i$ from $\Pi_{i-1}$, then the induction hypothesis immediately yields the bound.    
    If we applied either Rule~\ref{itm:R2} or~\ref{itm:R3} to compute $\Pi_i$ from $\Pi_{i-1}$, then we either resolve or reduce $v_i$ in all matrices that occur in some $\pi \in \mathcal{F}^{(i-1)}$. Crucially, these operations are carried out \emph{within} each matrix and thus their number cannot increase. Thus, by induction, the claim holds.

    For the remainder of this proof, let us assume that we applied Rule~\ref{itm:R4} to compute~$\Pi_i$ from $\Pi_{i-1}$.
     Let $\pi$ and $ \pi' $ be two (not necessarily distinct) sets of matrices in $\mathcal{F}^{(i)}$ and let $\psi\in \pi$ and $\psi' \in \pi'$ be some arbitrarily selected matrices. In the following, we identify sufficient conditions for $\psi \equiv \psi'$, which we will then use to bound the sizes of $\pi$, $\pi'$, and $\mathcal{F}^{(i)}$. 
    
    Let $F$ be the set of all variables $v_j$ with $j \in [i]$ that have been forgotten via Rule~\ref{itm:R4} so far. Then, by the definition of strategy extension, each variable in $V=\bigcup_{v_j \in F} \var{\Q^{(j-1)}} \cap \depless{v_j}$ was assigned to either $0$ or $1$. Let $\alpha_1\in \langle V \rangle $ ($\alpha_2 \in \langle V \rangle $, respectively) be the assignment that was applied to $\phi$ to---together with the interleaved resolution and reduction steps with respect to all $v_j \notin F$, $j \in [i]$---obtain $\psi$ ($\psi'$, resp.). Furthermore, let $U$ be the set of all possible clauses (including the empty clause) over the variables in $\rchi(p) \setminus V$, where $p$ is the unique parent of the node $\forget{v_i}$ in $\mathbf{T}$. We define $U_{\psi}$ ($U_{\psi'}$, respectively) to be the set of clauses that occur in both $U$ and $\psi$ (in both $U$ and $\psi'$, resp.) and let $\alpha_1'= \alpha_1 {\upharpoonright}_{V \cap \rchi(p)}$ and  $\alpha_2'= \alpha_2 {\upharpoonright}_{V \cap \rchi(p)}$. We claim that these two measures suffice to determine whether $\psi$ and $\psi'$ are equivalent, which is formally stated in Claim~\ref{cl:equiv}. 
    
    To establish this equivalence, a pairwise comparison of the clauses in $\psi$ and $\psi'$, respectively, is needed. This comparison takes the ``history'' of a clause $\clause$ within the derivation sequence~$\Pi$ into account, that is, the sequence of clauses that were used in $\Pi$ to derive $\clause$ from the original clauses of the given matrix $\phi$. 
                    \ifshort
     We will refer to this sequence as the \emph{certifying sequence} $D=(C_1, \dots, C_t=\clause)$ of $\clause$ in $\Pi$.
           We assume a certifying sequence $D$ to be minimal in the sense that each clause $C_j$ with $j \in [t-1]$ is used to derive a subsequent clause in $D$, either by resolution, reduction, or partial assignment.
     \fi
    \iflong
    We formalize this notion in the following.
    \begin{longdefinition}
        Let $i \in [n]$ and let $\clause$ be a clause that occurs in some matrix $\psi \in \pi$ for some $\pi \in \mathcal{F}^{(i)}$.
        A \emph{certifying sequence $D=(C_1, \dots, C_t)$ of $\clause$}
         produced by the partial derivation sequence $\Pi_0, \Pi_1, \dots, \Pi_i$ is iteratively defined as follows. Initially, let $D=(\clause)$, $M=\psi$, and $\mathfrak{C}=\{\clause\}$. Throughout, $M$ will serve as a pointer to some matrix and $\mathfrak{C}$ will keep track of all matrices whose origin has not been resolved yet. Let $j = i$ and do the following while $j > 0$.
                \begin{itemize}
            \item If $v_j$ was forgotten via \emph{Rule \ref{itm:R1}}, reduce $j$ by one and proceed.
            \item If $v_j$ was forgotten via \emph{Rule \ref{itm:R2}}, then there is some $\tilde{\psi}$ in some $\pi \in \mathcal{F}^{(j-1)}$ for which $M=\res{\tilde{\psi}}{v_j}$ with the following property: For all $C \in \mathfrak{C}$ with $C \notin \tilde{\psi}$, there exist $C_1^{(C)}, C_2^{(C)} \in \tilde{\psi}$ such that $C$ is the unique clause in $\R{\{C_1^{(C)}, C_2^{(C)}\}}{v_j}$. Intuitively, this means that every clause $C$ whose origin has not been explained yet, can be explained now if $C \notin \tilde{\psi}$, by linking it to the two clauses whose resolvent it is. 
            We extend $D$ to the left by all clauses in $E=\{C_1^{(C)}, C_2^{(C)} \mid C \in \mathfrak{C}, C \notin \tilde{\psi}\}$, set $M=\tilde{\psi}$ and update $\mathfrak{C}=(\mathfrak{C}\setminus \{C \mid  C \notin \tilde{\psi}\}) \cup E$. Finally, we reduce $j$ by one.
                         \item If $v_j$ was forgotten via \emph{Rule \ref{itm:R3}}, then there is some $\tilde{\psi}$ in some $\pi \in \mathcal{F}^{(j-1)}$ such that $M=\red{\tilde{\psi}}{v_j}$ with the following property: For all $C \in \mathfrak{C}$ with $C \notin \tilde{\psi}$, there exists $C^{(C)} \in \tilde{\psi}$ such that $C$ is the unique clause in $\red{\{C^{(C)}\}}{v_j}$. That is, each such clause can be explained now by linking it to a clause from which it originated via reduction. We extend $D$ to the left by all clauses in $E=\{C^{(C)} \mid C \in \mathfrak{C}, C \notin \tilde{\psi}\}$, set $M=\tilde{\psi}$ and update $\mathfrak{C}=(\mathfrak{C}\setminus \{C \mid  C \notin \tilde{\psi}\}) \cup E$.  Finally, we reduce $j$ by one.
                         \item If $v_j$ was forgotten via \emph{Rule \ref{itm:R4}}, then there is some $\tilde{\psi}$ in some $\pi \in \mathcal{F}^{(j-1)}$ and some $\delta\in \langle \var{\Q^{(j-1)}\cap \depless{v_i}}\rangle$ such that for all $C \in \mathfrak{C}$ with $C \notin \tilde{\psi}$, there is some $C^{(C)} \in \tilde{\psi}$ for which $C = C^{(C)}\rvert_{\delta}$. Extend $D$ to the left by all clauses in $E=\{C^{(C)} \mid C \in \mathfrak{C}, C \notin \tilde{\psi}\}$, set $M=\tilde{\psi}$ and update $\mathfrak{C}=(\mathfrak{C}\setminus \{C \mid  C \notin \tilde{\psi}\}) \cup E$.  Finally, we reduce $j$ by one.
                     \end{itemize}
        When $j=0$, extend $D$ to the left by all remaining clauses in $\mathfrak{C}$.
    \end{longdefinition}
    Note that a certifying sequence $D$ is minimal in the sense that each clause $C_j$ with $j \in [t-1]$ is used to derive a subsequent clause in $D$, either by resolution, reduction, or partial assignment. 
    \fi
    We now prove that certifying sequences obey a certain separation property  that we will exploit in the proof of Claim~\ref{cl:equiv}.
    \begin{claim}
    \label{cl:separation}
        Let $i \in [n]$, $p$ be the parent of $\forget{v_i}$, and $\clause$ be a clause that occurs in some matrix $\psi \in \pi$ for some $\pi \in \mathcal{F}^{(i)}$.
        Let $D=(C_1, \dots, C_t)$ be a certifying sequence of $\clause$ that was produced by the partial derivation sequence $\Pi_0, \Pi_1, \dots, \Pi_i$.
                                                                                If $\clause$ contains a variable in $S=\var{\phi} \setminus \mathbf{T}_{\leq p}$, then $\bigcup_{j \in [t]}\var{C_j} \subseteq \var{\phi}\setminus \mathbf{T}_{< p}$, that is, no clause in $D$ contains a variable from $\mathbf{T}_{<p}$.
        \end{claim}
        \iflong
        \begin{claimproof}
            We prove this claim by carefully analyzing the properties of $D$. 
            \begin{description}
                \item[Observation 1] \textbf{No clause in $D$ contains only variables in $\rchi(p)$.} Suppose not and let $C'$ be the last clause in $D$ that contains only variables in $\rchi(p)$. Since $C_t=\clause$ contains a variable from $S$, $C'$ is to the left of $C_t$ in $D$, and since $D$ is minimal, $C'$ has been used to eventually obtain $C_t$. However, note that no clause that occurs behind $C'$ in $D$ has been derived from $C'$ by performing resolution or reduction, since $C'$ contains only variables in $\rchi(p)$, that is, variables that have not yet been forgotten. On the other hand, no clause that occurs behind $C'$ in $D$ has been derived from $C'$ by applying a partial assignment to $C'$ since this clause would be either empty (and therefore not useful to derive $C_t$, contradicting the minimality of $D$) or contain only variables in $\rchi(p)$ (contradicting the fact that $C'$ is the last clause in $D$ with this property).
                \item[Observation 2] \textbf{No clause in $D$ contains a variable in $\mathbf{T}_{<p}$ and a variable in~$S$.} We show the statement via induction. Clearly, all original clauses in $\phi$ satisfy the property, since $\tree$ is a trunk-aligned tree decomposition of $G_{\Phi}$, which means that the node $p$ acts as a separator in $\phi$. Let us now assume that there was a clause in $D$ that contained variables of both sets, $\mathbf{T}_{<p}$ and $S$, and let $C'$ be the first such clause in $D$. Since neither a reduction nor an assignment could have introduced this new neighborhood, there must have been two clauses, $C_{\ell}$ and $C_{m}$, left of $C'$ in $D$ whose resolvent is $C'$ and, without loss of generality, $C_{\ell}$ contained a variable in $\mathbf{T}_{<p}$ and $C_{m}$ contained a variable in $S$. The variable that we resolved over must have been in $\rchi(p)$ as otherwise, $C'$ would not have been the first clause in $D$ to contain a variable from both sets, $\mathbf{T}_{<p}$ and $S$. But this is contradicting the fact that no variable in $\rchi(p)$ has been forgotten yet.
            \end{description}
            Equipped with these observations, we now proceed to prove that no clause in $D$ contains a variable from $\mathbf{T}_{<p}$. Suppose not and let $C'$ be the right-most clause in $D$ that contains a variable $v$ from $\mathbf{T}_{<p}$. Note that $C' \neq C_t$ since, by definition, $C_t$ contains a variable in~$S$ and Observation 2 holds. Hence, by the minimality of $D$, $C'$ is used to derive $C_t$. If~$v$ was dropped from $C'$ via reduction or by assigning it, we would have obtained a clause that is either empty or contains only variables in $\rchi(p)$ (since $C'$ was assumed to be the last clause in $D$ that contains a variable in $\mathbf{T}_{<p}$). The first case contradicts the minimality of~$D$, the latter case contradicts Observation~1. Thus, $C'$ was used in a resolution step and suppose for a moment that the respective pivot variable was not $v$. Then, since $C'$ is the last clause in $D$ that contains $v$, the resolvent must have been tautological and thus got dropped, which contradicts the minimality of $D$. Hence, the pivot variable was $v$, implying that the clause $C''$ we resolved $C'$ with must have contained $v$ in opposite polarity. Note that their respective resolvent does not contain any variable in $S$, since, by Observation 2,  neither did $C'$ nor $C''$.
            Moreover, the resolvent does not contain any variable from~$\mathbf{T}_{<p}$ since we assumed $C'$ to be the right-most clause for which this is the case. Consequently, the resolvent of $C'$ and $C''$ contains only variables in $\rchi(p)$, contradicting Observation 1.
        \end{claimproof}
        \fi
        \ifshort
        \begin{claimproof}[Proof Sketch]
                     We prove this claim by carefully analyzing the properties of $D$. In particular, we can make two key observations: (1) No clause in $D$ contains only variables in $\rchi(p)$, and (2) no clause in $D$ contains both a variable in $\mathbf{T}_{<p}$ and a variable in~$S$.
            
            Equipped with these observations, we now proceed to prove that no clause in $D$ contains a variable from $\mathbf{T}_{<p}$. Suppose not and let $C'$ be the right-most clause in $D$ that contains a variable $v$ from $\mathbf{T}_{<p}$. Note that $C' \neq C_t$ since, by definition, $C_t$ contains a variable in~$S$ and Observation 2 holds. Hence, by the minimality of $D$, $C'$ is used to derive $C_t$. If~$v$ was dropped from $C'$ via reduction or by assigning it, we would have obtained a clause that is either empty or contains only variables in $\rchi(p)$ (since $C'$ was assumed to be the last clause in $D$ that contains a variable in $\mathbf{T}_{<p}$). The first case contradicts the minimality of~$D$, the latter case contradicts Observation~1. Thus, $C'$ was used in a resolution step. 
            
            Suppose for a moment that the respective pivot variable was not $v$. Then, since $C'$ is the last clause in $D$ that contains $v$, the resolvent must have been tautological and thus got dropped, which contradicts the minimality of $D$. Hence, the pivot variable was $v$, implying that the clause $C''$ we resolved $C'$ with must have contained $v$ in opposite polarity. Note that their respective resolvent does not contain any variable in $S$, since, by Observation 2,  neither did $C'$ nor $C''$.
            Moreover, the resolvent does not contain any variable from~$\mathbf{T}_{<p}$ since we assumed $C'$ to be the right-most clause for which this is the case. Consequently, the resolvent of $C'$ and $C''$ contains only variables in $\rchi(p)$, contradicting Observation 1.
        \end{claimproof}
        \fi 
    As a consequence of Claim~\ref{cl:separation}, the existence of a clause $\clause$ that contains some variable in~$S$ does not depend on the assignment that was chosen for the variables in $V \cap \mathbf{T}_{< p}$, since none of these variables were used to derive $\clause$.
    Equipped with this insight, we now proceed to prove the aforementioned sufficient condition for the equivalence of $\psi$ and $\psi'$.
    \begin{claim}\label{cl:equiv}
        If $U_{\psi}=U_{\psi'}$ and $\alpha_1'=\alpha_2'$, then $\psi \equiv \psi'$.
    \end{claim}
    \begin{claimproof}
        We need to show that under the stated assumptions, $\psi$ and $\psi'$ consist of the same set of clauses. Observe that all clauses in $\psi$ and $\psi'$ contain only variables in $\var{\phi} \setminus (\mathbf{T}_{<p} \cup V )$; in particular, each clause either only contains variables in $\var{U}$ or it contains at least one variable from $S=\var{\phi}\setminus \mathbf{T}_{\leq p}$; on the former, $\psi$ and $\psi'$ agree, since $U_{\psi}=U_{\psi'}$. Without loss of generality, let us now consider an arbitrary clause~$\clause$ that occurs in $\psi$ and that contains a variable from $S$. 
                 By Claim~\ref{cl:separation}, no variable in $\mathbf{T}_{<p}$ was used to derive~$\clause$; hence, every partial assignment that was applied to derive $\clause$ was only with respect to variables in $V \setminus \mathbf{T}_{<p}=V \cap \rchi(p)$. Since $\alpha_1'=\alpha_2'$ by assumption, and reduction and resolution steps are applied uniformly across all matrices at each time step $i$, we conclude that the same certifying sequence must have produced $\clause$ in $\psi'$. Consequently, $\psi$ and $\psi'$ also agree on clauses that contain a variable from $S$. Thus, $\psi\equiv \psi'$ which concludes the proof.
    \end{claimproof}
    
We conclude the proof of Lemma~\ref{lem:equiv_formulas} by using Claim~\ref{cl:equiv} to bound the number of matrices in each $\pi \in \mathcal{F}^{(i)}$.
 Since there are at most $2^{|U|}\leq 2^{3^{k+1}}$ distinct choices for $U_{\psi}$ and at most $2^{|V \cap \rchi(p)|} \leq 2^{k+1}$ distinct choices for an assignment $\alpha' \in \langle V \cap \rchi(p) \rangle$, we conclude from Claim~\ref{cl:equiv} that $|\pi| \leq 2^{3^{k+1}}\cdot 2^{k+1} \in 2^{2^{\oh{k}}}$ for any $\pi \in \mathcal{F}^{(i)}$. Consequently, there are at most $2^{2^{2^{\oh{k}}}}$ possible choices for $\pi$, that is,  $|\mathcal{F}^{(i)}| \leq 2^{2^{2^{\oh{k}}}}$.
\end{proof}

We are now ready to show that the Rules~\ref{itm:R1} to~\ref{itm:R4} can be applied efficiently. 
 \begin{lemma}\label{lem:rules_efficient}
There is a computable function $f$ with the following property. Given a partial derivation sequence $\Pi'=\Pi_0, \dots, \Pi_{i-1}$ of $\qbf$ and a trunk-aligned tree decomposition $\tree=(\mathbf{T}, \rchi)$ of $(\Phi, \D)$ with width $k=\btw{\Phi, \D}$, we can derive $\Pi_i$ in time $|\mathcal{T}|+f(k)\cdot |\Phi|$.
 \end{lemma}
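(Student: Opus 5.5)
We prove the statement by a case distinction on which of the Rules~\ref{itm:R1} to~\ref{itm:R4} is applied to obtain $\Pi_i$ from $\Pi_{i-1}$. In a preprocessing step (done once, in time $|\tree|$), we compute for every variable the node $\forget{v}$ and its parent. Deciding which rule applies then only requires inspecting $\rchi(\forget{v_i})$, which has at most $k+1$ elements, and checking for each of these elements whether it lies in $\var{\Q^{(i-1)}}$ and depends on $v_i$. Assuming $\D$ is stored so that $\dep$-queries take constant time, this costs $\oh{k}$. We also use throughout that, by Lemma~\ref{lem:equiv_formulas}, $|\mathcal{F}^{(i-1)}|\leq 2^{2^{2^{\oh{k}}}}$ and every $\pi\in\mathcal{F}^{(i-1)}$ has at most $2^{2^{\oh{k}}}$ matrices. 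Hence the total number of matrices we touch is bounded by a function of $k$, and it suffices to bound the work per matrix by $g(k)\cdot|\Phi|$.

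\textbf{Rules \ref{itm:R1}--\ref{itm:R3}.} Rule~\ref{itm:R1} copies $\Pi_{i-1}$. In fact we keep pointers and do not copy at all, so its cost is constant. For Rule~\ref{itm:R2}, fix a matrix $\psi$. By Lemma~\ref{lem:bounded_neighborhood}, every clause of $\psi$ containing $v_i$ uses only variables from $\rchi(\forget{v_i})$. Therefore $\psi_{v_i}\cup\psi_{\overline{v_i}}$ contains at most $3^{k+1}$ distinct clauses, and so $\R{\psi}{v_i}$ can be computed in time $3^{\oh{k}}$. Removing the old clauses and adding the resolvents costs $\oh{|\psi|}$. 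We keep $|\psi|\leq |\Phi|+3^{k+1}$, since any clause of $\psi$ outside the original clauses lives, by the argument of Lemma~\ref{lem:bounded_neighborhood}, inside a bag. More simply, we bound the total size of each matrix by $|\Phi|$ plus a function of $k$. Rule~\ref{itm:R3} is analogous: only clauses containing $u=v_i$ change, and they contain only bag variables. Hence both rules take $f(k)\cdot|\Phi|$ time over all matrices.

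\textbf{Rule \ref{itm:R4}.} This is the main obstacle. The sets $\B$ and $\A$ are defined over $\depless{v_i}\cap\var{\Q^{(i-1)}}$, and this set need not be small in general. The plan is to use Property~\ref{itm:P2}. Rule~\ref{itm:R4} applies only when Property~\ref{itm:P1} fails for $v_i$, so \ref{itm:P2} holds: $\forget{v_i}$ lies on the trunk and all of $\depless{v_i}$ lies in $\mathbf{T}_{\leq \forget{v_i}}$. The variables of $\depless{v_i}$ that are already forgotten are no longer in $\var{\Q^{(i-1)}}$. This uses Observation~\ref{obs:total_elim} and the fact that the order $<_{\mathbf{T}'}$ forgets all of $\mathbf{T}_{<\forget{v_i}}$ before $v_i$. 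Consequently $\depless{v_i}\cap\var{\Q^{(i-1)}}\subseteq\rchi(\forget{v_i})$ has size at most $k+1$. Thus $|\B|\leq 2^{k+1}$, and $|\A|\leq 2^{(k+1)2^{k+1}}$ since each strategy function has domain of size at most $2^{k+1}$. For each $\pi$ we then enumerate $\A^{|\pi|}$, which has size bounded by a function of $k$ because $|\pi|$ is. For each tuple and each $\beta\in\B$, applying an assignment to a matrix costs $\oh{|\psi|}$. Finally, we remove duplicate matrices and duplicate sets, as in the example. We canonicalise each matrix by its $U_\psi$ and $\alpha'$ data from Claim~\ref{cl:equiv}, or simply by sorting, so that the bounds of Lemma~\ref{lem:equiv_formulas} are actually realised in the stored representation. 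Summing over all matrices gives $f(k)\cdot|\Phi|$. Together with the $|\tree|$ preprocessing term, this yields the claimed bound $|\tree|+f(k)\cdot|\Phi|$.

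The delicate points are two. First, we must verify carefully that \ref{itm:P2}, together with the elimination order, forces all still-present variables of $\depless{v_i}$ into the bag. Second, deduplication must keep the maintained family within the bound of Lemma~\ref{lem:equiv_formulas} without costing more than linear time per matrix.
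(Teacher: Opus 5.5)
Your proposal follows the paper's proof. It uses the same case split over Rules~\ref{itm:R1}--\ref{itm:R4}, and Lemma~\ref{lem:bounded_neighborhood} to confine the clauses touched by resolution and reduction to $\rchi(\forget{v_i})$. For Rule~\ref{itm:R4} it uses the same observation: that rule fires only when \ref{itm:P1} fails, so \ref{itm:P2} forces $\var{\Q^{(i-1)}}\cap\depless{v_i}\subseteq\rchi(\forget{v_i})$. Finally, it uses Lemma~\ref{lem:equiv_formulas} to bound the number of matrices and sets. Your explicit deduplication step is a worthwhile addition. The paper leaves it implicit, but without merging identical matrices and identical sets, the stored family would not obey the bounds of Lemma~\ref{lem:equiv_formulas}.

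One intermediate claim is false, though: neither $|\psi|\le|\Phi|+3^{k+1}$ nor ``$|\Phi|$ plus a function of $k$'' holds. A single application of Rule~\ref{itm:R2} can add more clauses than it removes, and these gains accumulate over many steps. For example, take a purely existential formula made of $m$ disjoint gadgets $(y_j\lor a_j),(y_j\lor b_j),(y_j\lor c_j),(\overline{y_j}\lor d_j),(\overline{y_j}\lor e_j),(\overline{y_j}\lor f_j)$. Put each gadget in its own branch of a width-$6$ decomposition in which $y_j$ is forgotten first, and let $\lesstp$ list every $\forget{y_j}$ before all other forget nodes. After the $m$ resolution steps, the matrix has $9m$ binary clauses instead of $6m$.

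What is true, and all you need, is a bound of the form $g(k)\cdot|\Phi|$. Every clause ever produced lies inside some bag. Original clauses do by the Helly property of subtrees. Resolvents do by Lemma~\ref{lem:bounded_neighborhood}. Reduction and assignment only shrink clauses. So each clause lies inside $\rchi(\forget{v})$ for the variable $v$ of the clause whose forget node is lowest in $\mathbf{T}$; these forget nodes all lie on one root path. Hence each matrix has at most $2\cdot 3^{k}\cdot|\var{\Phi}|+1$ clauses, each of length at most $k+1$. With this replacement the argument goes through. The bounded clause length also makes your sorting-based canonicalisation linear-time.
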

\begin{proof}     We perform a case analysis based on the type of $v_i \in\V$ that is forgotten to obtain~$\Pi_i$. Note that the relevant case can be identified in time $|\mathcal{T}|$.
    \begin{description}
        \item[Case 1] If $v_i \notin \var{\Q^{(i-1)}}$, then Rule~\ref{itm:R1} sets $\Pi_i=\Pi_{i-1}$ in constant time.
        \item[Case 2] If $v_i \in \vare{\Q^{(i-1)}}$ with $\{v \in \var{\Q^{(i-1)}} \mid v_i \dep v\} \cap \rchi(\forget{v_i})=\emptyset$, then we apply Rule~\ref{itm:R2}, that is, we resolve over $v_i$ in all matrices $\psi$ that are contained in some $\pi \in \mathcal{F}^{(i-1)}$.  
        By Lemma~\ref{lem:bounded_neighborhood}, $v_i$ shares a clause with at most $k$ other variables in each such $\psi$. Since these variables can occur either positively, negatively, or not at all, both $\psi_{v_i}$ and $\psi_{\overline{v_i}}$ contain at most $3^k$ clauses and each of these clauses is of size at most $k+1$. Consequently, the resolvent $\R{\psi}{v_i}$ replaces the clauses in $\psi_{v_i}\cup\psi_{\overline{v_i}}$ by at most $3^k$ clauses 
        (the set of all clauses that the $k$ neighbors of $v_i$ can possibly form), and computing $\R{\psi}{v_i}$ takes time $\oh{{(3^{k})}^2k}$. Moreover, by Lemma~\ref{lem:equiv_formulas}, we can afford to compute $\R{\psi}{v_i}$ for each matrix $\psi$ that is contained in some $\pi \in \mathcal{F}^{(i-1)}$.
        \item[Case 3] If $v_i \in \vara{\Q^{(i-1)}}$ with $\{v \in \var{\Q^{(i-1)}} \mid v_i \dep v\} \cap \rchi(\forget{v_i})=\emptyset$, then, applying Rule~\ref{itm:R3}, we reduce $v_i$ from all matrices $\psi$ that are contained in some $\pi \in \mathcal{F}^{(i-1)}$. 
 This may be done by a linear pass over all matrices in all sets of $\mathcal{F}^{(i-1)}$. At the same time, we remark that by Lemma~\ref{lem:bounded_neighborhood}, $v_i$ shares a clause with at most $k$ other variables in each such~$\psi$. Hence, in each $\psi$, the variable~$v_i$ will only be reduced from at most $2\cdot 3^k$ clauses.        Thus, computing $\red{\psi}{v_i}$ can be done in time $\oh{3^k}$ under the assumption that variable-clause adjacencies can be looked up in constant time, and by Lemma~\ref{lem:equiv_formulas} we can afford to do so for each matrix $\psi$ that is contained in some $\pi \in \mathcal{F}^{(i-1)}$.
        \item[Case 4] If $v_i \in \var{\Q^{(i-1)}}$ with $S\neq \emptyset$ for $S=\{v \in \var{\Q^{(i-1)}} \mid v_i \dep v\}\cap \rchi(\forget{v_i})$, then we apply Rule~\ref{itm:R4}. Since $S$ is not empty, there is at least one $v \in \var{\Q^{(i-1)}}$ with $v_i \in \depless{v}$ such that $v \in \rchi(\forget{v_i})$, that is, $v_i$ violates Property~\ref{itm:P1}. Hence, by Property~\ref{itm:P2}, $\forget{v_i}\in \trunk$ and for all $v' \in \depless{v_i}$, we have $v' \in \mathbf{T}_{\leq \forget{v_i}}$. 
        Observe that each $v' \in( \var{\Q^{(i-1)}} \cap \depless{v_i})$ has not yet been forgotten, that is, $v' \in \rchi(\forget{v_i})$. From the definition of strategy extension in Section~\ref{sec:rules}, it follows that computing $\strext{\pi}{v_i}$ for a single $\pi \in \mathcal{F}^{(i-1)}$ takes time $(|\A|)^{|\pi|}\cdot|\B|\cdot {|\Phi|}$. Since $|\var{\Q^{(i-1)}} \cap \depless{v_i}|\leq|\rchi(\forget{v_i})|\leq k+1$, we have $|\B|\leq 2^k$ and $|\A|\leq 2^{2^{\oh{k}}}$ by the definition of $\A$ and $\B$ in strategy extension. Finally, recall that Lemma~\ref{lem:equiv_formulas} bounds $|\pi|\leq 2^{2^{\oh{k}}}$ and $|\mathcal{F}^{(i-1)}|\leq 2^{2^{2^{\oh{k}}}}$. Consequently, we can compute $\strext{\pi}{v_i}$ in time $2^{2^{2^{\oh{k}}}}\cdot |\Phi|$ and can afford to do so for each $\pi \in \mathcal{F}^{(i-1)}$.
                           \qedhere
    \end{description}
\unskip\end{proof}
With this final piece in hand, we are now ready to establish Theorem~\ref{thm:fpt_btw}.

\begin{proof}[Proof of Theorem~\ref{thm:fpt_btw}]
First, we compute an elimination ordering~$\V=(v_1, \dots, v_n)$ that agrees with $\tree$ and store $\rchi(\forget{v_i})$ for each $i \in [n]$. This can be done in time $\oh{|\tree|}$. Next, we initialize $\Pi_0=(\Q^{(0)}, \mathcal{F}^{(0)})$ with $\Q^{(0)}=\Q$ and $\mathcal{F}^{(0)}=\{\{\phi\}\}$. Now, for each $i \in [n]$, we compute $\Pi_i$ by using one of the Rules~\ref{itm:R1} to \ref{itm:R4} depending on the variable type of~$v_i \in \V$. By Lemma~\ref{lem:rules_efficient}, this sequence of sets up to $\Pi_n$ can be computed in time $f(k)\cdot |\Phi|\cdot |\mathcal{T}|$.
 It follows from Lemma~\ref{lem:main_invariant} that $\Phi$ is true if and only if there exists a set $\pi \in \mathcal{F}^{(n)}$ of matrices such that each matrix $\psi \in \pi$ is true. 
\iflong
Moreover, by Observation~\ref{obs:total_elim}, each $\pi \in \mathcal{F}^{(n)}$ consists only of matrices~$\psi$ that contain the empty clause $\bot$ or are themselves empty.
\fi
\ifshort
Moreover, since~\ref{itm:R1} to~\ref{itm:R4} eliminate variables exhaustively, each matrix in $\pi \in \mathcal{F}^{(n)}$ is either empty or contains the  empty clause $\bot$.
 \fi
By Lemma~\ref{lem:equiv_formulas}, the number of QBFs and sets of QBFs in $\mathcal{F}^{(n)}$ is bounded by a function of $k$; hence, we can efficiently detect whether
there exists a set $\pi \in \mathcal{F}^{(n)}$ such that no matrix $\psi \in \pi$ contains the empty clause $\bot$.
  If so, we return TRUE, otherwise we return FALSE.
\end{proof}

\section{Concluding Remarks}
Our work closes the previous rift between prefix pathwidth and dependency treewidth, two incomparable measures for the tree-likeness of QBFs. Bilateral treewidth provides not only a unified explanation for why prefix pathwidth and dependency treewidth enable fixed-parameter QBF evaluation, but genuinely extends them by covering instances that are beyond the reach of any of the previously developed width parameters for QBF. Moreover, our evaluation algorithm produces fixed-parameter sized proofs as a witness and thus preserves a key advantage of dependency treewidth. In future work, it would be interesting to investigate whether the asymptotic running time of the presented QBF evaluation algorithm can be improved, or whether it is tight under the Exponential Time Hypothesis.

From the viewpoint of proof complexity, it may be of interest to formally analyze these proofs and, if possible, link them to existing proof systems. 
From an empirical perspective, it may be interesting to see whether any of the insights employed in our proofs can be adapted to speed up the solution of instances which are expected to be ``tree-like''. We remark that---similarly as for the preceding work on prefix pathwidth~\cite{EGO20}---the theoretical upper bound in the function $f$ is triple-exponential in the bilateral treewidth (cf.\ Lemma~\ref{lem:equiv_formulas}), and hence is not of practical significance.

Moreover, a major question left for future work is whether the decompositions underlying bilateral treewidth can be computed in fixed-parameter time. None of the known techniques for computing or approximating treewidth~\cite{A87,R92,B93,BDDFLP16,K21} seem directly applicable to bilateral treewidth, and in fact the question remains open even for the previously studied notions of prefix pathwidth and dependency treewidth. Nevertheless, as illustrated in the proof of Theorem~\ref{thm:parity}, there are natural classes of QBFs which remained out-of-reach for previous width measures but where constant-width decompositions can be computed easily. A natural stepping stone towards resolving this question would be to first obtain an \XP~algorithm for computing trunk-aligned tree decompositions of minimum width.

\bibliography{references}

\end{document}